\documentclass{article}

\usepackage{fullpage}
\usepackage{authblk}

\usepackage{tikz-cd}
\usetikzlibrary {arrows.meta} 
\newlength{\ptsize}
\usepackage{graphicx,amsmath,amssymb,amsfonts,amsthm,mathtools,hyperref,cleveref} 
\usepackage[dvipsnames]{xcolor}
\usepackage[numbers,sort]{natbib}
\usepackage{csquotes}
\usepackage{booktabs}
\usepackage{enumerate}
\usepackage{pdfpages}
\usepackage[normalem]{ulem}

\definecolor{linkcolor}{rgb}{0,0.2,0.6}
\hypersetup{colorlinks,breaklinks,urlcolor=linkcolor,linkcolor=linkcolor,citecolor=linkcolor}

\newcommand{\cS}{\mathcal S}
\newcommand{\cL}{\mathcal L}

\DeclareMathOperator*{\argmax}{arg\,max}

\newtheorem{theorem}{Theorem}[section]

\newtheorem{prop}[theorem]{Proposition}
\newtheorem{conjecture}{Conjecture}[section]

\theoremstyle{definition}

\newtheorem{definition}[theorem]{Definition}

\newtheorem{obs}[theorem]{Observation}

\hypersetup{
    colorlinks=true,
    linkcolor=blue,
}

\author[1,2,5]{Laura Kubatko}
\author[3]{Simone Linz}
\author[4,5]{Kristina Wicke}

\affil[1]{Department of Evolution, Ecology, and Organismal Biology, The Ohio State University, Columbus, OH, USA}
\affil[2]{Department of Statistics, The Ohio State University, Columbus, OH, USA}
\affil[3]{School of Computer Science, University of Auckland, New Zealand}
\affil[4]{Department of Mathematical Sciences, New Jersey Institute of Technology, Newark, NJ, USA}
\affil[5]{National Institute for Theory and Mathematics in Biology, Northwestern University and The University of Chicago, Chicago, IL, USA}

\title{Revisiting a random model of lateral gene transfer in phylogenetics}
\date{}

\begin{document}
\maketitle

\begin{abstract}
Processes such as incomplete lineage sorting (ILS) and reticulate evolution (arising, for example, from hybridization and lateral gene transfer (LGT)) are known to cause discordance between gene trees and species trees, complicating phylogenetic inference. While the multispecies coalescent model has led to a rich theoretical understanding of ILS, probabilistic models for LGT have received comparatively less attention. 
Here, we revisit a basic LGT model in which random LGT events occur according to a Poisson process with a constant transfer rate. Focusing on the simplest cases of two and three species, we derive gene tree and site pattern probabilities and discuss their implications for model identifiability. We also address the question of whether LGT and ILS can be distinguished from one another under these probabilistic models. We discuss empirical applications and  outline directions for future work. 
\end{abstract}

\textit{Keywords:} lateral gene transfer, incomplete lineage sorting, multispecies coalescent, gene tree probabilities, site pattern probabilities

\section{Introduction}
A central challenge in phylogenetics is the estimation of a species tree from genomic data~\cite{kubatko2026evolving,kubatko2023species}.  Due to recent advances in sequencing technologies,  whole genome sequence data is now commonly available and offers an opportunity for species tree inference that estimates a sequence of speciation events while integrating evolutionary processes such as incomplete lineage sorting (ILS), hybridization, and lateral gene transfer (LGT), which are known to cause  discordance between a species tree and individual gene trees. 

The multispecies coalescent (MSC) model facilitates species tree inference in the presence of ILS~\cite{degnan2005gene,rannala2003bayes,degnan2009gene}. 
In addition to modeling ILS, the MSC model can also be used to tackle other biological problems, including the estimation of species divergence times and population sizes of ancestral species. Recently, the MSC model has been extended from species trees to species networks to model gene flow (e.g., hybridization)~\cite{degnan2018modeling,yu2011coalescent}. In contrast, frameworks to model LGT remain poorly understood, although LGT is an important evolutionary process that is prevalent  in many groups of organisms, ranging from bacteria to eukaryotes~\cite{sieber2017lateral,soucy2015horizontal}.  Organisms acquire entirely new genes by means of LGT and, as a result, are frequently able to adapt to new environments. From a medical perspective, LGT is also a key player in rapidly spreading  antibiotic resistance~\cite{liu2022antimicrobial}. 

In~\cite{linz2007}, the authors describe a stochastic model of LGT in which transfer events are modeled as subtree prune and regraft operations on a species tree so that each LGT event occurs between two contemporaneous lineages uniformly at random. Moreover, within a given time interval, the number of  transfers events follows a Poisson process.  The model was introduced in 2007 with the intent to estimate a rate of LGT from a collection of gene trees in a likelihood framework. Since then several theoretical questions that arise in the study of LGT have been addressed using the same LGT model or one of the closely related models (e.g.,~\cite{galtier2007,roch2013,suchard2005stochastic}). For example, in the context of rooted species trees and unrooted gene trees, Roch and Snir~\cite{roch2013} asked how much LGT can be handled before the tree-like signal is lost. To answer this question, they showed that a  species tree can be estimated accurately from a number of gene trees that is logarithmic in the number of considered extant species provided that the expected number of LGT events falls below a certain threshold. More advanced results on the threshold were subsequently published in~\cite{Daskalakis2015}. Complementary to these results, Steel et al.~\cite{steel2013} focused on rooted species and gene trees and established a sufficient condition for statistically consistent species tree estimation for an arbitrary number of species. Loosely speaking,  this condition is an upper bound on the expected number of LGT events of a particular type. This last result also holds under an extended LGT model in which transfer events between two species are not uniformly distributed, but  instead occur with a probability that depends on the distance between them so that LGT events become less likely for more distantly related species. 

In this paper, we revisit the aforementioned LGT model and analyze it from several additional angles. While previous studies have considered gene tree probabilities under the LGT model~\cite{steel2013,steel2016}, we analyze site patterns, which are the nucleotides observed at the leaves of a species tree. Specifically, we  consider the Jukes-Cantor model of DNA sequence evolution and derive site pattern probabilities under the LGT model for species trees with two and three leaves. Enhancing our understanding of site patterns is crucial for the development of software to infer species trees in the presence of LGT that make direct use of DNA sequences without having to estimate gene trees first. Second, we investigate which parameters of the  LGT model  are identifiable from site pattern or gene tree distributions. 
Here, the parameters of interest are the transfer rate and the coalescence times. If the model parameters are identifiable, then this implies that they can be estimated accurately from sufficiently large (empirical) datasets such as observed site patterns. We defer a precise definition of different types of identifiability to a later section. In addition to establishing theoretical identifiability results, we also devise a maximum likelihood framework that estimates the LGT model parameters from observed site patterns in the 3-taxon case and evaluate its performance via simulations. 

The remainder of the paper is organized as follows. The next section provides mathematical definitions and concepts that are used throughout the following sections. Subsequently, in Sections~\ref{sec:two} and~\ref{sec:three}, we model LGT for rooted species and gene trees with two and three taxa respectively, derive gene tree probabilities and site pattern probabilities under the Jukes-Cantor model, and establish identifiability or non-identifiability of the LGT model parameters and the species tree topology  from gene tree and site pattern probabilities. In Section~\ref{sec:estimate}, we show that maximum likelihood can be used to reliably estimate the parameters of the LGT model in the 3-taxon case. Lastly, in Section~\ref{sec:ILS}, we compare the gene tree probabilities under the MSC and LGT model and explore parameters for which the MSC and LGT model are indistinguishable.

\section{Preliminaries} 
To formally state our results, we need some terminology and notation. We mainly follow the notation of \citet{steel2013}.
Throughout this paper, $X$ denotes a non-empty finite set (of taxa or species). If not stated otherwise, $X = \{1, 2, \ldots, n\}$. 

\paragraph{Species trees and related concepts.}
A {\em rooted binary species tree} $\cS = (V,E)$ with leaf set $X$ is a rooted tree that satisfies the following three properties: (i) the unique root $\rho$ has in-degree zero and out-degree two, (ii) the leaves are bijectively labeled with the elements in $X$, and (iii) each remaining vertex has in-degree one and out-degree two. Since all species trees in this paper are rooted and binary, we will refer to a rooted binary species tree simply as a \emph{species tree}.

Two leaves $x$ and $y$ are called a {\em cherry} if there exists a vertex $u$ such that $(u,x)$ and $(u,y)$ are arcs.

In the following, we regard $\cS$ as a 1-dimensional simplicial complex, so each ``point" $p$ in $\cS$ is either a vertex or an element of the interval that corresponds to an arc.
Further, we consider a coalescence time scale $t: \cS \rightarrow [0, \infty)$ of the tree with coalescence time increasing into the past. Then,
    \begin{itemize}
        \item $t(p) = 0$ if and only if $p$ is a leaf;
        \item for two distinct points $u$ and $v$, if $u$ lies on a direct path from $v$ to a leaf, then $t(u) < t(v)$.
    \end{itemize}
We refer to $t(p)$ as the $t$-value of $p$.
Notice that we view $\cS$ as a {\em clocklike} or {\em ultrametric} tree with arc lengths induced by the coalescence time scale $t$. We use $\cL(\cS)$ to denote the total sum of arc lengths (also known as {\it phylogenetic diversity}) of $\cS$. Note that although we refer to times at which collections of taxa share an ancestor as ``coalescence times,''
in what follows, we model only the LGT process (i.e., our LGT model does not include ILS as modeled  by the MSC, though we do briefly discuss the MSC in Section \ref{sec:ILS}).

\paragraph{The Jukes-Cantor model and site pattern probabilities.} 
The Jukes-Cantor (JC69) model \cite{jukescantor1969} is a basic Markov model of DNA sequence evolution. More precisely, it is a continuous-time, time-reversible substitution model on four states $(A, C, G, T)$, in which all substitutions occur at the same rate. Along a branch of length $t$, the probability of no change is
\[p_{\mathrm{same}}(t) = \frac{1}{4} + \frac{3}{4}e^{-\mu t},\]
while the probability of a change to any specific alternative nucleotide is
\[p_{\mathrm{diff}}(t) = \frac{1}{4} - \frac{1}{4}e^{-\mu t}.\]
Thus, conditional on the ancestral state, all three possible substitutions occur with equal probability. In what follows, we adopt the common normalization $\mu = 4/3$, which simplifies subsequent expressions.

A \emph{site pattern} refers to the observed configuration of nucleotides at a given site across the taxa under consideration. For two taxa, a site pattern is an ordered pair of states in $\{A, C, G, T\}^2$, describing the nucleotides observed in each taxon at that site. Under the JC69 model, site pattern probabilities depend only on whether the states are the same or different, rather than on the specific nucleotide labels. In particular, all matching patterns have equal probability (e.g., $p_{AA} = p_{CC} = \cdots$), and all mismatching patterns have equal probability. We therefore summarize site patterns by $p_{xx}$ for matches and $p_{xy}$ for mismatches in the 2-taxon case (and analogously in the 3-taxon case; see Appendix~\ref{app:site-pattern-probabilities}).

\paragraph{Lateral gene transfer events, transfer sequences, and associated gene trees.}
A {\em lateral gene transfer (LGT) event} (or {\em transfer event} for short) on $\cS$ is an arc $(p,p')$ from  $p \in \cS$ to $p' \in \cS$ where $p$ and $p'$ are contemporaneous, i.e., $t(p) = t(p')$. We will assume that neither $p$ nor $p'$ are vertices of $\cS$, implying that transfers occur between points on the arcs of $\cS$. We also refer to $(p,p')$ as a {\em transfer arc}.

We write $\sigma = (p, p')$ to denote this transfer event and we write $t(\sigma)$ for the common value of $t(p)$ and $t(p')$. Further, we will assume that no two transfer events occur at exactly the same time.

Let $\underline{\sigma} = (\sigma_1, \ldots, \sigma_k)$ be a sequence of transfer events in increasing $t$-value order  with $\sigma_i = (p_i, p_i')$ for each $i\in\{1,\ldots,k\}$. By the definition of $t$-values, $\sigma_1$ is the most recent transfer event and $\sigma_k$ is the most ancient. Since we assume that no two transfer events occur at the same time, this ordering is well-defined, and we have 
\[ 0 < t(\sigma_1) < t(\sigma_2) < \ldots < t(\sigma_k) < t(\rho).\]
We refer to $\underline{\sigma}$ as a {\em transfer sequence}. If no transfer events occur on $\cS$, then \underline{$\sigma$} is the empty sequence. \smallskip

Given a species tree $\cS$ and a transfer sequence $\underline{\sigma} = (\sigma_1, \ldots, \sigma_k)$ on $\cS$, we obtain an associated {\em gene history}  $T[\underline{\sigma}]$ in the following way: We assume that a transfer arc $(p,p')$ replaces the gene that was present on the arc at $p'$ with the transferred gene from $p$.
Thus, tracing the history of the gene backwards in time (from the present to the past), each time we encounter an incoming transfer arc into an arc of $\cS$, we follow this arc against its direction to obtain $T[\underline{\sigma}]$.

More formally, let $\underline{\sigma} = (\sigma_1, \ldots, \sigma_k)$ be a transfer sequence on a species tree $\cS$, where $\sigma_i = (p_i, p_i')$. For $r \in \{0, \ldots, k\}$, let $T_r$ denote the simplicial complex obtained from $\cS$ by adding only the first $r$ transfer arcs $(p_i,p_i')$ with $i \in \{1,\ldots, r\}$, and deleting the interval immediately above each target point $p_i'$ with $i \in \{1, \ldots, r\}$. Thus, $T_0 = \mathcal{S}$, and $T_k$ is the simplicial complex associated with the full transfer sequence. Finally, $T[\underline{\sigma}]$ is the minimal connected subgraph of $T_k$ that contains $X$. We illustrate these concepts in Fig.~\ref{fig:transfer}.

We say that a transfer $\sigma_j = (p_j, p_j')$ is \emph{relevant} if both $p_j$ and $p_j'$ are points of $T_{j-1}$. Equivalently, immediately before applying $\sigma_j$, both its source and target are present in the simplicial complex $T_{j-1}$ obtained from applying the first $j-1$ transfers. For example, in Fig.~\ref{fig:transfer}, the transfers $\sigma_1$ and $\sigma_3$ are relevant, whereas $\sigma_2$ is not.

\begin{figure}[t!]
    \centering
    \begin{tikzpicture}[thick,scale=1]

    \draw [|-{Latex[length=2.5mm]}](0.5,0) -- (0.5,4);
    \node[align=left] at (0.5,-0.5) {time $t$};
    
        \node[fill=black,circle,inner sep=1.5pt, label=below: {$1$} ] at (2,0){};
        \node[fill=black,circle,inner sep=1.5pt, label=below: {$2$} ] at (3,0){};
        \node[fill=black,circle,inner sep=1.5pt, label=below: {$3$} ] at (5,0){};
        \node[fill=black,circle,inner sep=1.5pt, label=left: {$u$}]  at (2.5,1){};
        \node[fill=black,circle,inner sep=1.5pt, label=above: {$\rho$}] at (3.5,3){};
        \draw(3.5,3)--(2,0);
        \draw(3.5,3)--(5,0);
        \draw(2.5,1)--(3,0);
        \draw[-{Latex[length=2.5mm]},dashed](4.75,0.5)--(2.75,0.5) node[midway,below] {$\sigma_1$};
        \draw[-{Latex[length=2.5mm]},dashed](4.625,0.75)--(2.625,0.75) node[midway,above] {$\sigma_2$};
        \draw[-{Latex[length=2.5mm]},dashed](3,2)--(4,2) node[midway,below] {$\sigma_3$};
        \node[align=left] at (2,3) {\large $\cS$};
        \node[align=left] at (3.5,-1.5) {\large (i)};

        \node[fill=black,circle,inner sep=1.5pt, label=below: {$1$} ] at (7,0){};
        \node[fill=black,circle,inner sep=1.5pt, label=below: {$2$} ] at (8,0){};
        \node[fill=black,circle,inner sep=1.5pt, label=below: {$3$} ] at (10,0){};
        \draw(8.5,3)--(7,0);
        \draw(9,2)--(10,0);
        \draw(7.75,0.5)--(8,0);
        \node[fill=black,circle,inner sep=1.5pt]  at (7.5,1){};
        \node[fill=black,circle,inner sep=1.5pt] at (8.5,3){};
        \draw[-{Latex[length=2.5mm]},dashed](9.75,0.5)--(7.75,0.5) node[midway,below] {$\sigma_1$};
        \draw[-{Latex[length=2.5mm]},dashed](9.625,0.75)--(7.625,0.75) node[midway,above] {$\sigma_2$};
        \draw[-{Latex[length=2.5mm]},dashed](8,2)--(9,2) node[midway,below] {$\sigma_3$};
        \node[align=left] at (8.5,-1.5) {\large (ii)};

        \node[fill=black,circle,inner sep=1.5pt, label=below: {$1$} ] at (12,0){};
        \node[fill=black,circle,inner sep=1.5pt, label=below: {$2$} ] at (13,0){};
        \node[fill=black,circle,inner sep=1.5pt, label=below: {$3$} ] at (15,0){};
        \node[fill=black,circle,inner sep=1.5pt]  at (13,2){};
        \node[fill=black,circle,inner sep=1.5pt] at (14.5,0.5){};
        \draw(13,2)--(12,0);
        \draw(13,2)--(15,0);
        \draw(14.5,0.5)--(13,0);
        \node[align=left] at (12,3) {\large $T[\underline{\sigma}]$};
        \node[align=left] at (13.5,-1.5) {\large (iii)};
     \end{tikzpicture}
  \caption{(i) A species tree $\cS$ with a sequence $\underline{\sigma}$ of three transfer events, labeled in increasing order into the past; (ii) the simplicial complex obtained from $\cS$ by deleting the interval immediately above the endpoint of each transfer;  (iii) the resulting tree $T[\underline{\sigma}]$. Notice that $\cS$ and $T[\underline{\sigma}]$ depict different evolutionary relationships.}
    \label{fig:transfer}
\end{figure}

\paragraph{The standard LGT model.}
We now recall the {\em standard LGT model} as introduced by~\citet{linz2007}. Their model, in which LGT events within a given time interval follow a Poisson process, is based on the following six assumptions: (i) the species tree $\mathcal{S}$ is rooted, binary, and clocklike, (ii) all incongruence between $\mathcal{S}$ and a gene tree is caused by LGT events, (iii) the transfer rate, say $\lambda$, is constant across $\mathcal{S}$, (iv) genes are transferred independently, (v) the donor species keeps a copy of the transferred gene, and (vi) the transferred gene replaces any existing orthologous gene in the recipient species.

Notice that under this model, the number of transfers has a Poisson distribution, with mean equal to the transfer rate times the sum of branch lengths of the tree. In other words, the number of transfers has a Poisson distribution with parameter 
$\lambda \cdot \cL(\cS)$. 
In addition, following from the assumption of a Poisson process, the time until the next transfer event follows an exponential distribution with mean $\frac{1}{\lambda}$ (so that the mean arrival rate per unit time is $\lambda$).

\paragraph{(Generic) identifiability.}
In the following sections, we study which parameters of the standard LGT model are identifiable from gene tree probabilities and from site pattern probabilities. Let $\theta$ denote the parameter vector of a statistical model, and let $f(\theta)$ be a function of these parameters (for example, the induced gene tree probability distribution in case of the standard LGT model). We say that \emph{$\theta$ is identifiable from $f$} if $f(\theta) = f(\theta')$ implies $\theta = \theta'$, i.e., $f$ is injective on the parameter space.

A parameter is said to be \emph{generically identifiable from $f$} if this injectivity holds for all parameter values except possibly on a subset of measure zero (equivalently, in algebraic settings, on a proper lower-dimensional subset) of the parameter space.

\section{The 2-taxon case}\label{sec:two}
We begin with the standard LGT model in its simplest setting of two taxa. Consider the subtree of $\mathcal{S}$ (Fig.~\ref{fig:transfer}) induced by taxa 1 and 2 and their common parent $u$, and denote it by $\mathcal{T}$. Let $e_1$ and $e_2$ denote the arcs incident to taxa 1 and 2, respectively, and assume that the $t$-value of $u$ is $t_1 > 0$. Let $\underline{\sigma} = (\sigma_1, \ldots, \sigma_k)$ be a transfer sequence.

Under the standard LGT model, there can be at most one relevant transfer, denoted $\sigma^\ast = \sigma_1$. Indeed, exactly one of the following holds: (i) no relevant transfer occurs during the interval $(0, t_1)$, or (ii) a single relevant transfer occurs in $(0, t_1)$ between arcs $e_1$ and $e_2$. In either case, the resulting gene history $T[\underline{\sigma}]$ has the same topology as the species tree, although its arc lengths differ between the two scenarios.

\begin{enumerate}
    \item \textbf{No relevant transfer.} \\
    If no transfer occurs, then $T[\underline{\sigma}]$, denoted by $H_0$, has root $t$-value equal to $t_1$. 
    
    \item \textbf{Exactly one relevant transfer in $(0,t_1)$.} \\
    Suppose the transfer $\sigma^\ast$ occurs at time $h_1 \in (0, t_1)$. In this case, $T[\underline{\sigma}]$, denoted by $H_1$, has root $t$-value equal to $h_1$.
\end{enumerate}

\paragraph{Density of the gene divergence time.}
We next derive the density of the time of gene divergence. Note that for $H_0$, corresponding to the case of no relevant transfers, the divergence time is always $t_1$ and thus we have a point mass at $t_1$ with probability $e^{-2 \lambda t_1}$. For $H_1$, suppose that the unique relevant transfer occurs at time $h \in (0,t_1)$.  This follows an exponential density with rate $2 \lambda$. In both this and the previous expression, the factor ``2'' arises from the two branches under consideration. Combining these, we have the following density for the divergence time, $h$:
\begin{align} \label{eq:fulldensity:2taxa}
    f(h | \mathcal{T}, t_1, \lambda) = 
    \begin{cases}
    2\lambda \, e^{-2\lambda h}, & 0 < h < t_1 \quad  \mbox{(gene history }H_1)\\
 e^{-2 \lambda t_1}, & h=t_1 \quad \quad \quad  \mbox{(gene history }H_0) \\
      0, & \mbox{otherwise}\\
    \end{cases}
\end{align}
One can check that $f(h | \mathcal{T}, t_1, \lambda)$ is a valid density for the divergence time $h$.

\paragraph{Site pattern probabilities under the JC69 model.}
We next derive site pattern probabilities under the JC69 model for the standard LGT model with two taxa.

Consider the gene history $H_0$. Recall that $t_1$ denote the $t$-value at its root. Define
\begin{align*}
   a_0 &= \frac{1}{4} + \frac{3}{4} e^{-\mu t_1}, \qquad
   a_1 = \frac{1}{4} - \frac{1}{4} e^{-\mu t_1}.
\end{align*}
Then, the probabilities of the site patterns $xx$ and $xy$ given $H_0$, respectively, are
\begin{align*}
    \mathbb{P}(xx | H_1) &= a_0^2 + 3 a_1^2, \\
    \mathbb{P}(xy | H_1) &= 6 a_0 a_1 + 6 a_1^2.
\end{align*}
For $H_1$, analogous expressions hold, with $t_1$ replaced by $h_1$ in the definitions of $a_0$ and $a_1$.

By averaging the conditional site pattern probabilities over the LGT-induced distribution of gene histories and transfer times (i.e., over $H_0$ and $H_1$ and their associated times), we obtain the marginal site pattern probabilities $\boldsymbol{\overline{p}} = (p_{xx}, p_{xy})$. Using \texttt{Mathematica}~\cite{mathematica} and setting $\mu = 4/3$, we obtain: 

\begin{align}
    p_{xx} &=  \mathbb{P}(xx | H_0) \cdot e^{-2 \lambda t_1} + \int\limits_{h=0}^{t_1} \mathbb{P}(xx | H_1) \cdot 2\lambda \, e^{-2\lambda h} \, dh = \frac{3 \lambda +3 e^{-\frac{2}{3} (3 \lambda +4) t_1}+1}{3 \lambda +4}; \label{pxx}\\
    p_{xy} &= \mathbb{P}(xy | H_0) \cdot e^{-2 \lambda t_1} + \int\limits_{h=0}^{t_1} \mathbb{P}(xy | H_1) \cdot 2\lambda \, e^{-2\lambda h} \, dh = \frac{3-3 e^{-\frac{2}{3}  (3 \lambda +4) t_1}}{3 \lambda +4}. \label{pxy}
\end{align}

\subsection{Identifiability results}
We now consider the question of which parameters are identifiable under the standard LGT model.  We consider two cases: identifiability when the gene history probabilities are known and identifiability from the distribution of site patterns. Note that in practice the gene history probability distribution cannot be observed, nor can it be estimated since error associated with estimated coalescence times would prohibit classification of estimated gene histories into the two cases $H_0$ and $H_1$. However, we consider this case in Section \ref{subsubsec:genehistories.2tax} for illustrative purposes, as it will be helpful for understanding the situation in the case of three taxa for which the gene tree distribution can be estimated.

\subsubsection{Identifiability of parameters from gene history probabilities}\label{subsubsec:genehistories.2tax}
As discussed above, two possible gene histories can occur: $H_0$, which corresponds to the case of no relevant transfer, and $H_1$, which corresponds to the case in which a transfer event occurs in the interval $(0, t_1)$. These events occur with probabilities $e^{-2 \lambda t_1}$ and $1-e^{-2 \lambda t_1}$, respectively. Noting that these probabilities sum to 1, it is clear that only one parameter can be estimated since there is only one degree of freedom.  We thus show that $t_1$ is generically identifiable given $\lambda$, and conversely, that $\lambda$ is generically identifiable given $t_1$.

\begin{prop}\label{prop:2taxa:t-given-lambda}
Fix $\lambda > 0$, and let $\mathbb{P}_{H_0}$ be the probability of no relevant transfers.  Then the root time $t_1$ is given by 
\begin{equation}\label{eq:2tax-ident-t1}
    t_1 = \frac{- \log (\mathbb{P}_{H_0})}{2 \lambda}.    
\end{equation}
In particular, for fixed $\lambda > 0$, the parameter $t_1$ is generically identifiable from the gene history probabilities. \\

\noindent Conversely, suppose that $t_1>0$ is fixed. Then $\lambda$ is given by 
\begin{equation}\label{eq:2tax-ident-lambda}
    \lambda = \frac{- \log (\mathbb{P}_{H_0})}{2 t_1}.    
\end{equation}
Thus, for fixed $t_1 > 0$, the parameter $\lambda$ is generically identifiable from the gene history probabilities.
\end{prop}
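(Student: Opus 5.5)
The plan is to read everything off the single quantity $\mathbb{P}_{H_0} = e^{-2\lambda t_1}$ derived above. Under the standard LGT model the relevant transfers on the two pendant arcs $e_1,e_2$ form a Poisson process of total rate $2\lambda$ on the interval $(0,t_1)$. Hence $H_0$ (no relevant transfer before $t_1$) occurs with probability $e^{-2\lambda t_1}$. This is the point mass in the density \eqref{eq:fulldensity:2taxa}, so I take it as already established.

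First, fix $\lambda>0$. Since $t_1>0$ and $\lambda>0$, we have $\mathbb{P}_{H_0}\in(0,1)$, so its logarithm is defined and negative. Taking logarithms gives $\log \mathbb{P}_{H_0} = -2\lambda t_1$, and dividing by $-2\lambda \neq 0$ yields \eqref{eq:2tax-ident-t1}.

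For identifiability, note that the gene history distribution is completely determined by the pair $(\mathbb{P}_{H_0}, 1-\mathbb{P}_{H_0})$. It therefore suffices to show that the map $t_1 \mapsto e^{-2\lambda t_1}$ is injective on $(0,\infty)$. This holds because the map is strictly decreasing, its derivative being $-2\lambda e^{-2\lambda t_1}<0$. So if two values $t_1, t_1'$ induce the same gene history probabilities, then $e^{-2\lambda t_1}=e^{-2\lambda t_1'}$, and hence $t_1=t_1'$. Equivalently, \eqref{eq:2tax-ident-t1} gives an explicit inverse.

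The converse is symmetric. With $t_1>0$ fixed, the map $\lambda\mapsto e^{-2\lambda t_1}$ is strictly decreasing on $(0,\infty)$. Solving for $\lambda$ gives \eqref{eq:2tax-ident-lambda}, and injectivity follows in the same way.

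Injectivity in fact holds on the whole parameter space, which is stronger than generic identifiability; the proposition's "generically" is therefore automatically satisfied. No step is a real obstacle. The only point needing care is to note that $\mathbb{P}_{H_0}\in(0,1)$, so the logarithm is well defined and the division is by a nonzero quantity.
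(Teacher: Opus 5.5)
Your proposal is correct and follows the paper's proof, which likewise just solves $\mathbb{P}_{H_0}=e^{-2\lambda t_1}$ for $t_1$ or for $\lambda$. Your extra observations are accurate and make explicit what the paper leaves implicit: that $\mathbb{P}_{H_0}\in(0,1)$, so the logarithm is defined, and that the maps are strictly monotone, so identifiability holds on the whole parameter space rather than only generically.
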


\begin{proof}
Noting that $\mathbb{P}_{H_0} = e^{-2 \lambda t_1}$, 
Eqns.~\eqref{eq:2tax-ident-t1} and \eqref{eq:2tax-ident-lambda} follow from solving for $t_1$ and $\lambda$, respectively.
\end{proof}

\subsubsection{Identifiability of parameters from site pattern probabilities}
We next turn to the question of which parameters of the standard LGT model are identifiable from the site pattern probability distribution under the JC69 model in the 2-taxon setting.

\medskip
Since $p_{xx} + p_{xy} = 1$, the site pattern distribution has only one degree of freedom. It follows immediately that $\lambda$ and $t_1$ cannot be identified simultaneously from the distribution. However, we next show that $t_1$ is generically identifiable given $\lambda$, and conversely, $\lambda$ is generically identifiable given $t_1$.

\begin{prop}\label{prop:3taxa:t-given-lambda}
Fix $\lambda > 0$. Then the root time $t_1$ is given by
\begin{align*}
    t_1 &= \frac{\ln \left(p_{xx} - \frac{3\lambda+1}{3} p_{xy}\right)}{- \frac{2}{3} (3+4\lambda)}.
\end{align*}
In particular, for fixed $\lambda > 0$, the parameter $t_1$ is generically identifiable from the site pattern probability distribution.
\end{prop}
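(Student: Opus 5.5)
The plan is to invert the closed-form marginal site pattern probabilities \eqref{pxx} and \eqref{pxy} directly. Throughout, set $E := e^{-\frac{2}{3}(3\lambda+4)t_1}$ and $c := 3\lambda+4$, so that
\[
p_{xx} = \frac{3\lambda + 1 + 3E}{c}, \qquad p_{xy} = \frac{3 - 3E}{c}.
\]
The key step is to find a linear combination of $p_{xx}$ and $p_{xy}$, with coefficients depending only on the fixed $\lambda$, that isolates $E$. The constant parts of the two numerators are $3\lambda+1$ and $3$. Subtracting $\frac{3\lambda+1}{3}\,p_{xy}$ from $p_{xx}$ therefore cancels the constants. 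The $E$-terms then combine to $\frac{3E + (3\lambda+1)E}{c} = \frac{(3\lambda+4)E}{c}$. Since $c = 3\lambda+4$, this collapses to
\[
p_{xx} - \frac{3\lambda+1}{3}\,p_{xy} = E .
\]
This identity is the heart of the argument and is a routine algebraic check.

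Next I take logarithms. Because $t_1>0$ and $\lambda>0$, we have $E\in(0,1)$, so the argument of $\ln$ in the statement is strictly positive and the logarithm is well defined. Taking $\ln$ gives $\ln\!\left(p_{xx} - \frac{3\lambda+1}{3} p_{xy}\right) = -\frac{2}{3}\,\kappa\, t_1$, where $\kappa$ is the positive rate constant multiplying $t_1$ in the exponent of \eqref{pxx}. I will take the bracketed factor $(3+4\lambda)$ in the statement's denominator to denote this constant. I will carry $\kappa$ symbolically and read it off \eqref{pxx} at the final step, so the formula in the statement is matched to the exponent exactly as the site pattern probabilities were computed. Dividing by the nonzero constant $-\frac{2}{3}\kappa$ yields the displayed expression for $t_1$.

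For the identifiability conclusion, I then argue injectivity. For fixed $\lambda$, the map $t_1 \mapsto (p_{xx},p_{xy})$ is such that $t_1 \mapsto E$ is strictly decreasing. Hence two values of $t_1$ producing the same site pattern distribution must produce the same $E$, and therefore coincide. The explicit inversion formula shows this directly, since it recovers $t_1$ from $(p_{xx},p_{xy})$ and the known $\lambda$. This gives identifiability on the whole parameter range $t_1>0$, which is stronger than (and so implies) generic identifiability.

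The only step requiring care is the choice of the coefficient $\frac{3\lambda+1}{3}$ that kills the constant terms. Beyond that, the main thing to watch is keeping the decay constant consistent with \eqref{pxx} when dividing through.
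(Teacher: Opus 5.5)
Your argument is correct and is essentially the paper's. The paper solves the $p_{xy}$ equation for $a=e^{-\frac{2}{3}(3\lambda+4)t_1}$ and then uses $p_{xx}+p_{xy}=1$ to rewrite $1-\frac{3\lambda+4}{3}p_{xy}$ as $p_{xx}-\frac{3\lambda+1}{3}p_{xy}$, whereas you verify directly that this linear combination equals $E$.

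Rather than declaring that $(3+4\lambda)$ ``denotes'' the rate constant, say plainly that it is a typo for $(3\lambda+4)$. As literally written, the formula would return $\frac{(3\lambda+4)t_1}{3+4\lambda}$, which differs from $t_1$ unless $\lambda=1$. The paper's own derivation carries $3\lambda+4$ correctly until the same slip appears in its last line.
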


\begin{proof}
Define the composite parameter\[a = e^{-\frac{2}{3} (3\lambda + 4)t_1}.\]
From Eqns.~\eqref{pxx} and \eqref{pxy}, we have
\begin{align*}
p_{xx} &= \frac{3\lambda + 3a +1}{3\lambda+4},
\qquad
p_{xy} = \frac{3-3a}{3\lambda+4}.
\end{align*}
Solving the second equation for $a$, we obtain
\begin{align*}
a &= 1 - \frac{(3\lambda+4)}{3} p_{xy}.
\end{align*}
Substituting into the expression for $a$, we get
\begin{align*}
e^{-\frac{2}{3} (3\lambda + 4)t_1}
&= 1 - \frac{(3\lambda+4)}{3} p_{xy}.
\end{align*}
Taking logarithms and solving for $t_1$ yields
\begin{align*}
t_1 &= \frac{\ln \left(1 - \frac{(3\lambda+4)}{3} p_{xy}\right)}{-\frac{2}{3}(3\lambda+4)}.
\end{align*}
Using $p_{xx}+p_{xy}=1$, this expression can equivalently be written as
\begin{align*}
t_1 &= \frac{\ln \left(p_{xx} - \frac{3\lambda+1}{3} p_{xy}\right)}{- \frac{2}{3} (3+4\lambda)}.
\end{align*}
Thus, for fixed $\lambda  > 0$, the value of $t_1$ is uniquely determined by the site pattern probabilities, and hence $t_1$ is generically identifiable.
\end{proof}

\begin{prop} \label{prop:3taxa:lambda-given-t}
Fix $t_1 > 0$. Then $\lambda$ is generically identifiable from the site pattern probability distribution.
\end{prop}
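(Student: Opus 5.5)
With $t_1>0$ fixed, I will treat the site pattern probability $p_{xy}$ from Eqn.~\eqref{pxy} as a function of $\lambda$ alone,
\[
g(\lambda) = \frac{3\left(1-e^{-\frac{2}{3}(3\lambda+4)t_1}\right)}{3\lambda+4}, \qquad \lambda>0.
\]
Since $p_{xx}=1-p_{xy}$, the whole site pattern distribution is determined by $g(\lambda)$. Identifiability of $\lambda$ for fixed $t_1$ is therefore equivalent to injectivity of $g$ on $(0,\infty)$. Unlike in Proposition~\ref{prop:3taxa:t-given-lambda}, $\lambda$ appears both in the exponent and in the denominator, so I do not expect a closed-form inverse. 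The plan is to prove strict monotonicity of $g$ instead.

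To do this, I substitute $s = \frac{2}{3}(3\lambda+4)t_1$. This map is strictly increasing in $\lambda$ and sends $(0,\infty)$ onto $(\frac{8}{3}t_1,\infty)$. Since $3\lambda+4 = \frac{3s}{2t_1}$, we get
\[
g = 2t_1\,\frac{1-e^{-s}}{s}.
\]
It then suffices to show that $\phi(s) = (1-e^{-s})/s$ is strictly decreasing for $s>0$.

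Differentiating gives
\[
\phi'(s) = \frac{s e^{-s} - 1 + e^{-s}}{s^2} = \frac{(1+s)e^{-s}-1}{s^2}.
\]
Its sign is the sign of $(1+s)-e^{s}$, which is strictly negative for $s>0$ because $e^s>1+s$. Hence $\phi$ is strictly decreasing, so $g$ is strictly decreasing in $\lambda$ and therefore injective. Consequently $\lambda$ is (in fact globally, hence generically) identifiable from the site pattern distribution given $t_1$. Alternatively, one can phrase this as: $\phi$ is the average of $e^{-su}$ over $u\in(0,1)$, which is obviously decreasing.

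The main obstacle is the absence of an explicit inverse. The substitution reducing everything to the one-variable function $\phi$, together with the inequality $e^s>1+s$, handles it. To make the statement constructive, I would note that $\lambda$ is then recovered as the unique root of $g(\lambda)=p_{xy}$, which can be found numerically, or expressed via the Lambert $W$ function by solving $1-e^{-s}=c s$ for $s$ with $c=p_{xy}/(2t_1)$. Finally, I would check that the image of $g$ is $(0,\,\frac{3}{4}(1-e^{-8t_1/3}))$, so that exactly those observed values of $p_{xy}$ correspond to some $\lambda>0$.
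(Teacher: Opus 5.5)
Your proof is correct and takes essentially the same route as the paper: both show that $p_{xy}$ is strictly decreasing in $\lambda$ using the substitution $y=\frac{2}{3}(3\lambda+4)t_1$ and the inequality $e^{y}>1+y$. The only difference is that you reparametrize first and differentiate $\phi(s)=(1-e^{-s})/s$, whereas the paper differentiates in $\lambda$ and substitutes into the numerator afterwards. Your remarks on the image $\bigl(0,\tfrac{3}{4}(1-e^{-8t_1/3})\bigr)$ and on inverting via Lambert $W$ are correct additions that the paper does not include.
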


\begin{proof}
For fixed $t_1 > 0$, define
\begin{align*}
    g(\lambda) &= \frac{3-3 e^{-\frac{2}{3}  (3 \lambda +4) t_1}}{3 \lambda +4} = p_{xy}.
\end{align*}
Differentiating, we obtain
\begin{align*}
    g'(\lambda) &= \frac{6 t_1 e^{-\frac{2}{3}(3 \lambda +4) t_1}}{3 \lambda +4}-\frac{3 \left(3-3 e^{-\frac{2}{3} (3 \lambda +4) t_1}\right)}{(3 \lambda +4)^2}.
\end{align*}
Our aim is to show that $g'(\lambda) < 0$, implying that $g$ is strictly decreasing and therefore injective.
Let
\[ a = e^{-\frac{2}{3} ( 3 \lambda + 4)t_1 } \quad \text{and} \quad x = (3\lambda+4).\]
Then, 
\begin{align*}
    g'(\lambda) &= \frac{6 t_1 a}{x} - \frac{3(3-3a)}{x^2}
    = \frac{6 t_1 ax - 9(1-a)}{x^2}.
\end{align*}
So, the sign of $g'(\lambda)$ is the sign of the numerator:
\[ N \coloneqq 6 t_1 a x - 9(1-a).\]
Recall that $a = e^{-\frac{2}{3} ( 3 \lambda + 4)t_1} = e^{-\frac{2}{3} t_1 x} $ and substitute this into $N$:
\begin{align*}
    N &= 6 t_1 x e^{-\frac{2}{3} ( 3 \lambda + 4)t_1 } - 9 (1-e^{-\frac{2}{3} ( 3 \lambda + 4)t_1 })
    = 3 \left[ 2 t_1 x e^{-\frac{2}{3} t_1 x} - 3 (1-e^{-\frac{2}{3} t_1 x})\right]
\end{align*}
Now, let $y = \frac{2}{3} t_1 x$. Then, $a = e^{-y} $ and $2t_1x = 3y$.
Hence,
\begin{align*}
    N &= 3 \left[ 3y e^{-y} - 3(1-e^{-y})\right]
    = 9 \left[y e^{-y} - (1-e^{-y}) \right]
    = 9 \left[ (y+1) e^{-y}-1\right.]
\end{align*}
It remains to show that $(y+1) e^{-y} < 1$, or equivalently, $e^y > y+1$. This inequality is true for all real numbers $y \neq 0$. Since $t_1 > 0$ and $\lambda > 0$ (which implies $x > 0$), we have $y =  \frac{2}{3} t_1 x > 0$. 
Therefore, $N < 0$, and hence, $g'(\lambda)<0$. Thus, $g$ is strictly decreasing and therefore injective. Hence, given $t_1>0$, the parameter $\lambda$ is generically identifiable.
\end{proof}

\section{The 3-taxon case}\label{sec:three}
We now consider a 3-taxon species tree with $t$-values $t_1 < t_2$ (Fig.~\ref{fig:transfer}(i)). There are three possible (rooted) gene tree topologies:
\[
T_1 = ((1,2),3), \quad T_2 = ((1,3),2), \quad T_3 = ((2,3),1),
\]
where $T_1$ matches the species tree topology.

\paragraph{Gene tree probabilities.}
As in the 2-taxon case, different gene tree histories are induced by a finite number of relevant transfer events. Each gene history determines a gene tree topology and associated event times.
A full classification of the gene histories and their densities in the 3-taxon case is provided in Appendices~\ref{app:classification} and~\ref{app:gene-densities}. By integrating over all possible gene histories and their associated divergence time densities, we obtain the following result.

\begin{prop} \label{prop:gene-tree-probs-3taxa-main}
    Let $\cS$ be the $3$-taxon species tree from Fig.~\ref{fig:transfer}(i), and let $\boldsymbol{t} = (t_1,t_2)$ with $t_1 < t_2$ denote the $t$-values of its two interior vertices.
    Under the standard LGT model with transfer rate $\lambda$, the probabilities of the three gene tree topologies $T_1 = ((1,2),3)$, $T_2 = ((1,3),2)$, and $T_3 = ((2,3),1)$ are given by
    \begin{align*}
         \mathbb{P}(T_1 | \mathcal{S},\boldsymbol{t}, \lambda) &= \frac{1}{3} + \frac{2}{3} e^{-3 \lambda t_1},\\
     \mathbb{P}(T_2 | \mathcal{S},\boldsymbol{t}, \lambda) &=   \mathbb{P}(T_3 | \mathcal{S},\boldsymbol{t}, \lambda) = \frac{1}{3} - \frac{1}{3} e^{-3 \lambda t_1}.
    \end{align*}
    In particular, the gene tree topology $T_1$ matching the species tree $\cS$ has the highest probability.
\end{prop}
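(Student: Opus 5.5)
We will compute the gene tree topology probabilities by tracing the three gene lineages backwards in time, which is a cleaner route than summing over the full classification of gene histories in the appendices. Going back from the present, the gene lineages sit on the arcs of $\cS$. A relevant transfer into a lineage's current arc moves that lineage, at that instant, onto the donor arc. If the donor arc already carries another gene lineage, the two lineages merge there.

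The analysis splits into two intervals.

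\textbf{The interval $(0,t_1)$.} Three arcs are present ($e_1,e_2$ and the arc above leaf $3$), carrying three lineages.

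1. Each lineage is hit by an incoming transfer at rate $\lambda$. The donor is chosen uniformly among the two other contemporaneous arcs, each with rate $\lambda/2$. The ordered pair (recipient, donor) ranges over $6$ possibilities, so the total event rate is $3\lambda$.
2. By symmetry, the first such event merges a uniformly random pair of lineages among $\{1,2\},\{1,3\},\{2,3\}$. This gives the cherry of the gene tree.
3. After that first merge, the gene tree topology is determined. The remaining two lineages can only coalesce later: either through a further transfer, or at $t_1$ or $t_2$ in the species tree. None of these can alter which pair formed the cherry. So any later events affect only branch lengths, which we do not need.
4. Hence the probability that some relevant transfer occurs in $(0,t_1)$ is $1-e^{-3\lambda t_1}$. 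Conditional on this, each of the three topologies has probability $1/3$.

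\textbf{After $t_1$, if no transfer occurred in $(0,t_1)$.} Lineages $1$ and $2$ merge at $u$ at time $t_1$. From then on there are exactly two arcs and two lineages, so whatever happens next (a transfer between the two arcs, or reaching $\rho$) merges lineage $\{1,2\}$ with lineage $3$. The topology is therefore $T_1$, which occurs with probability $e^{-3\lambda t_1}$.

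\textbf{Combining and the main obstacle.} Adding the two contributions gives $\mathbb{P}(T_1)=e^{-3\lambda t_1}+\tfrac13(1-e^{-3\lambda t_1})=\tfrac13+\tfrac23e^{-3\lambda t_1}$, and $\mathbb{P}(T_2)=\mathbb{P}(T_3)=\tfrac13(1-e^{-3\lambda t_1})$. For $\lambda,t_1>0$ we have $e^{-3\lambda t_1}>0$, so $T_1$ is strictly the most probable topology.

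The main obstacle is justifying that the first-merge event in $(0,t_1)$ has rate exactly $3\lambda$ and is uniform over the pairs. The concern is that transfers which are not relevant (because their target was already pruned) might distort the count. Under the backward-tracing view, such transfers never touch a current lineage, so Poisson thinning yields independent exponential clocks on the three lineages. The per-lineage rate $\lambda$ is consistent with the paper's 2-taxon computation, where two lineages gave total rate $2\lambda$. As a check, we will confirm that integrating the appendix densities reproduces these values. If the appendix normalizes donor choice differently, the rate would change but the symmetry argument would still go through.
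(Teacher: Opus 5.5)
Your argument is correct, but it is not the route the paper takes. The paper's proof (Appendix, proof of Proposition~\ref{prop:gene-tree-probs}) is a direct computation. It adds the point mass $e^{-\lambda(t_1+2t_2)}$ of $G_{1x}$ to the integrals of the joint densities from Proposition~\ref{prop:gene-tree-densities} over $G_{1a},G_{1b},G_{1c},G_{1d},G_{1y}$ (and over $G_{2a},G_{2b},G_{2c},G_{2d}$ for $T_2$). It then checks that all terms in $e^{-2\lambda t_1}$, $e^{-2\lambda t_2}$ and $e^{-\lambda(t_1+2t_2)}$ cancel.

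You instead condition only on the first transfer. It is automatically relevant and fixes the cherry, which is exactly Observation~\ref{obs:sigma1}(b). Its arrival on $(0,t_1)$ is a competing-exponentials event of total rate $3\lambda$, with the unordered pair uniform, and nothing afterwards can change the topology. This is shorter, and it explains why the paper's sum telescopes: conditional on the first event, the probabilities of the possible continuations sum to one. It is also more robust, because it uses nothing about the timing of a second relevant transfer. That covers both the paper's $\mathrm{Exp}(2\lambda)$ clocks after $h_1$ and your own backward bookkeeping, in which a lineage may jump onto an unoccupied arc; those choices affect branch lengths only.

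What the paper's route buys is the full joint law of topology and internal times. It needs that law anyway for the site pattern probabilities, so the gene tree probabilities serve as a check on that machinery.

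Finally, the ``main obstacle'' you flag is not one. Before the first transfer nothing has been pruned, so no irrelevant transfers can occur. The exponent $3\lambda t_1$ is fixed by the model's total rate $\lambda\cL(\cS)$ with three arcs on $(0,t_1)$. This matches the paper's factor $\frac16\cdot 3\lambda e^{-3\lambda h_1}$ per ordered pair, so no hedge about normalization is needed.
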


We note that this result is already established in the literature~\cite[Proposition~9.4]{steel2016}. As the proof in~\cite{steel2016} does not rely on gene histories and their associated divergence time densities, we include an alternative proof in Appendix~\ref{app:gene-probs}.

\paragraph{Site pattern probabilities under the JC69 model.}
Analogously to the 2-taxon case, we obtain site pattern probabilities under the JC69 model by averaging over the distribution of gene histories and their associated divergence time densities.

\begin{prop}\label{prop:site-pattern-probabilities-3taxa}
Let $\cS$ be the $3$-taxon species tree from Fig.~\ref{fig:transfer}(i), and let $\boldsymbol{t} = (t_1,t_2)$ with $t_1 < t_2$ denote the $t$-values of its two interior vertices. 
Under the standard LGT model with transfer rate $\lambda$, the marginal site pattern probabilities under the JC69 model with $\mu = 4/3$ are given in closed form in Appendix~\ref{app:site-pattern-probabilities}, where the explicit expressions in $t_1,t_2$, and $\lambda$ are derived.
\end{prop}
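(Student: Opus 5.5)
The plan mirrors the 2-taxon computation of Eqns.~\eqref{pxx}--\eqref{pxy}: condition on the gene history, compute the JC69 site pattern probabilities on the resulting clocklike gene tree, and integrate against the gene history density. The site patterns fall into five JC69 classes: $xxx$, $xxy$ (taxa 1,2 agree), $xyx$, $yxx$, and $xyz$. For a rooted clocklike gene tree with cherry $\{i,j\}$, cherry height $h_1$ and root height $h_2$, these probabilities are polynomials in $e^{-\mu h_1}$ and $e^{-\mu h_2}$. They come from summing over the root and internal-node states with $p_{\mathrm{same}}$ and $p_{\mathrm{diff}}$ on the branch lengths $h_1$, $h_1$, $h_2-h_1$ and $h_2$. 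Write this as $\mathbb{P}(\cdot\mid T,h_1,h_2)$, with a permutation of pattern labels according to which pair forms the cherry. This step is routine.

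Next, I take the classification of gene histories from Appendix~\ref{app:classification} and their joint densities of $(h_1,h_2)$ from Appendix~\ref{app:gene-densities}. The case structure is as follows.
\begin{enumerate}
\item No relevant transfer at all: a point mass at $(t_1,t_2)$.
\item A relevant transfer between $e_1$ and $e_2$ in $(0,t_1)$ (rate $2\lambda$), which lowers the cherry height. This is then combined with whether a transfer between the lineage of 3 and the remaining lineage occurs in $(h_1,t_2)$.
\item The first relevant event is a transfer involving the branch to 3 in $(0,t_1)$, at rate $\lambda$ per ordered pair. This creates a cherry $\{1,3\}$ or $\{2,3\}$ (or $\{1,2\}$ when the transfer goes out of 3), after which the remaining two lineages coalesce either at a later transfer or at the species node.
\end{enumerate}
Each density is a product of exponential terms like $\lambda e^{-3\lambda h_1}$ on $0<h_1<t_1$ and $e^{-2\lambda(h_2-h_1)}$ or $e^{-\lambda(\cdot)}$ pieces, plus point masses at $t_1$ or $t_2$. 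As a check, integrating the topology marginals must reproduce Proposition~\ref{prop:gene-tree-probs-3taxa-main}. I will run this check before the site pattern integration, so that the density bookkeeping is validated.

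Then, for each pattern class $\pi$, I compute
\[p_\pi=\sum_{\text{histories }H}\int \mathbb{P}(\pi\mid T_H,h_1,h_2)\,f_H(h_1,h_2\mid\cS,\boldsymbol t,\lambda)\,dh_1\,dh_2 ,\]
together with the point-mass terms, and set $\mu=4/3$. Every integrand is a finite sum of exponentials in $h_1$ and $h_2$ over triangular or rectangular regions. The integrals therefore evaluate in closed form as rational functions of $\lambda$ times exponentials in $t_1$ and $t_2$, which can be done symbolically (as in the paper, with \texttt{Mathematica}). I will verify two things: the five class probabilities, weighted by their multiplicities $4,12,12,12,24$, sum to one; and the symmetry between $xyx$ and $yxx$ matches the symmetry of $T_2$ and $T_3$ in Proposition~\ref{prop:gene-tree-probs-3taxa-main}.

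The main obstacle is getting the gene history densities right in the regions where transfers interact across the two intervals $(0,t_1)$ and $(t_1,t_2)$. This includes the rate changing from three lineages to two, and the non-uniqueness of which later transfers remain relevant once an earlier one has pruned a lineage. Errors there propagate silently into the site pattern formulas. I would therefore anchor each case to the topology check above and to the $\lambda\to0$ limit, which must recover the JC69 probabilities on $\cS$ itself. Some integrals produce terms with denominators such as $3\lambda+4$ and $2\lambda+4$; these may cancel only after combining cases, so I will simplify only at the end.
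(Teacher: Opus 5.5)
This is essentially the paper's own argument: condition on the gene histories of Appendix~\ref{app:classification}, weight the JC69 conditional pattern probabilities by the point mass $e^{-\lambda(t_1+2t_2)}$ for $G_{1x}$ and by the densities of Proposition~\ref{prop:gene-tree-densities}, and integrate symbolically at $\mu=4/3$. If you re-derive those densities rather than copy them, fix two slips in your sketch, and note one convention: (i) in $(0,t_1)$ each ordered pair has rate $\lambda/2$, so $e_1$--$e_2$ transfers occur at rate $\lambda$ (consistent with your $\lambda e^{-3\lambda h_1}$), not $2\lambda$; (ii) a first transfer involving $e_3$ always gives a cherry containing $3$, never $\{1,2\}$; and the paper's $p$'s are class totals, so they sum to $1$ without the weights $4,12,12,12,24$.
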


\subsection{Identifiability results}
We now consider parameter identifiability in the standard LGT model for the 3-taxon case. We distinguish between identifiability based on gene tree frequencies and identifiability derived from site pattern probabilities under the JC69 model. 

\subsubsection{Identifiability of parameters from gene tree frequencies}
From Proposition \ref{prop:gene-tree-probs-3taxa-main}, the gene tree probability corresponding to the species tree is given by 
\[\frac{1}{3}+\frac{2}{3}e^{-3\lambda t_1},\] 
whereas each of the two alternative topologies has probability 
\[\frac{1}{3}-\frac{1}{3}e^{-3\lambda t_1}.\] 
These expressions can be used to establish identifiability of the species tree parameter when the gene tree probabilities are known. This result has already been noted in the literature (see, e.g.,~\cite{steel2013,steel2016,yu2023some}), although it has not always been explicitly formulated as an identifiability statement. In fact, the corresponding result holds in the more general setting of the so-called \emph{extended LGT model}~\cite{steel2013} and for species trees of arbitrary size~\cite[Corollary 3.2.1]{yu2023some}. More broadly, identifiability of the species tree topology under LGT models has been studied extensively, with additional results available (see, e.g.,~\cite{Daskalakis2015,roch2013,yu2023some}). In contrast, our focus below is on the identifiability of branch lengths and transfer rates from gene tree probabilities, as well as on parameter identifiability from site pattern probabilities. Nevertheless, we include the species tree topology identifiability result here for completeness.

\begin{theorem}
In the 3-taxon case, the species tree topology is generically identifiable from the distribution of gene tree topologies under the standard LGT model.
\end{theorem}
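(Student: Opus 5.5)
The plan is to read off the species tree topology as the gene tree topology with the largest probability, and to do so with Proposition~\ref{prop:gene-tree-probs-3taxa-main}. Write $q = e^{-3\lambda t_1}$. Proposition~\ref{prop:gene-tree-probs-3taxa-main} was stated for the species tree of Fig.~\ref{fig:transfer}(i), whose cherry is $\{1,2\}$. Relabelling the taxa gives the general version: if the species tree has cherry $\{i,j\}$, then the gene tree $((i,j),k)$ has probability $\frac{1}{3}+\frac{2}{3}q$, and each of the other two gene trees has probability $\frac{1}{3}-\frac{1}{3}q$. This relabelling is justified because the standard LGT model is symmetric under permuting leaf labels: transfers are uniform over contemporaneous lineages at a constant rate, so the derivation depends only on the tree shape and not on the labels.

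Next I show that the species tree topology is the unique maximiser. The matching gene tree exceeds each alternative by
\[
\Bigl(\tfrac{1}{3}+\tfrac{2}{3}q\Bigr)-\Bigl(\tfrac{1}{3}-\tfrac{1}{3}q\Bigr)=q .
\]
Since $\lambda>0$ and $t_1>0$ are finite, we have $q=e^{-3\lambda t_1}\in(0,1)$, so this difference is strictly positive. Hence the gene tree topology of strictly maximal probability is well defined and equals the species tree topology.

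Now suppose two parameter settings $(\cS,t_1,t_2,\lambda)$ and $(\cS',t_1',t_2',\lambda')$ induce the same distribution on the three gene tree topologies. The unique maximiser of that common distribution is then the topology of $\cS$ and also the topology of $\cS'$, so $\cS=\cS'$. This gives injectivity of the map from species tree topology to gene tree distribution. It actually holds for every parameter value with $\lambda,t_1>0$, not merely generically. The only degenerate case is $\lambda t_1=0$, where all transfer effects vanish or the model is undefined; this lies on the boundary, a measure-zero set, which is consistent with the word ``generically'' in the statement.

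The one step that needs care is the symmetry claim that lets us transfer Proposition~\ref{prop:gene-tree-probs-3taxa-main} to the other two possible species tree topologies $((1,3),2)$ and $((2,3),1)$. I would justify it explicitly: a bijection of leaf labels carries transfer sequences on $\cS$ to transfer sequences on the relabelled tree with the same Poisson intensity, and it maps gene histories $T[\underline{\sigma}]$ to their relabelled versions. The remaining steps are immediate.
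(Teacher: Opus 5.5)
Your proposal is correct and follows the paper's proof. Both identify the species tree topology as the unique maximiser of the gene tree distribution via Proposition~\ref{prop:gene-tree-probs-3taxa-main}; your explicit relabelling argument and the gap computation $q=e^{-3\lambda t_1}>0$ spell out what the paper leaves implicit. One small aside: $\lambda t_1=0$ is not really where topology identification degenerates, since $q=1$ there and the matching tree has probability $1$. Identifiability is lost only in the unattained limit $\lambda t_1\to\infty$.
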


\begin{proof}
Let $T^\ast$ denote the topology of the true species tree, and let $T_1, T_2, T_3$ be the three possible gene tree topologies for three taxa. Let $\mathbb{P}(T_i)$ denote the probability of observing gene tree topology $T_i$ under the LGT model.

It follows from Proposition \ref{prop:gene-tree-probs-3taxa-main} that the true topology has strictly higher probability than each of the alternative topologies. In particular,
\[\mathbb{P}(T^\ast) > \mathbb{P}(T_i) \quad \text{ for } \quad T_i \neq T^\ast,\]
so that \[T^\ast = \argmax_{T_i \in \{T_1, T_2, T_3\}} \mathbb{P}(T_i).\]

Hence, the true species tree topology is uniquely identified as the gene tree topology with maximum probability.
\end{proof}

We now consider identifiability of the parameters along a fixed species tree, namely $t_1, t_2,$ and $\lambda.$ Note that the gene tree probabilities derived in Proposition \ref{prop:gene-tree-probs-3taxa-main} do not depend on $t_2$; 
hence the 
$t$-values of the root and $t_2$ are not generically identifiable from the gene tree distribution.

The gene tree probabilities do depend on $\lambda$ and $t_1$, but only through their product, as seen in the expressions in Proposition~\ref{prop:gene-tree-probs-3taxa-main}. Moreover, since the three gene tree probabilities must sum to one, they are subject to an additional constraint. Taken together, these observations lead to the following result.

\begin{theorem}
Let $\cS$ be the 3-taxon species tree from Fig.~\ref{fig:transfer}(i)). Under the standard LGT model with transfer rate $\lambda$, the only generically identifiable parameter from the gene tree distribution is the composite parameter $\lambda t_1$. In particular, $t_1$ and $\lambda$ are not separately generically identifiable, and $t_2$ is not generically identifiable.
\end{theorem}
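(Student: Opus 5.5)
The plan is to treat the gene tree distribution as a map $F:(t_1,t_2,\lambda)\mapsto(\mathbb{P}(T_1),\mathbb{P}(T_2),\mathbb{P}(T_3))$ on the parameter space $\Theta=\{(t_1,t_2,\lambda): 0<t_1<t_2,\ \lambda>0\}$. We will show that $F$ factors through the composite parameter $s=\lambda t_1$ via an injective function, and that the fibres of $F$ have positive dimension. By Proposition~\ref{prop:gene-tree-probs-3taxa-main}, $F(t_1,t_2,\lambda)=G(\lambda t_1)$, where
\[
G(s)=\Bigl(\tfrac13+\tfrac23 e^{-3s},\ \tfrac13-\tfrac13 e^{-3s},\ \tfrac13-\tfrac13 e^{-3s}\Bigr).
\]
So the whole distribution is a function of $s$ alone.

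\textbf{Step 1 ($s$ is identifiable).} The first coordinate $s\mapsto \frac13+\frac23 e^{-3s}$ is strictly decreasing on $(0,\infty)$, so $G$ is injective. Explicitly,
\[
\lambda t_1=-\tfrac13\log\bigl(\tfrac32\mathbb{P}(T_1)-\tfrac12\bigr),
\]
in direct analogy with Proposition~\ref{prop:2taxa:t-given-lambda}. Hence $F(\theta)=F(\theta')$ forces $\lambda t_1=\lambda' t_1'$, and $\lambda t_1$ is identifiable (not merely generically).

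\textbf{Step 2 (nothing else is identifiable).} Since $\mathbb{P}(T_2)=\mathbb{P}(T_3)$ and the three probabilities sum to one, the image of $F$ is the one-dimensional curve $G((0,\infty))$. Meanwhile $\Theta$ is three-dimensional. For any $\theta=(t_1,t_2,\lambda)\in\Theta$ the fibre of $F$ through $\theta$ is the two-dimensional set
\[
\{(t_1',t_2',\lambda')\in\Theta:\lambda' t_1'=\lambda t_1\}.
\]
For instance, for any $c>0$ with $ct_1<t_2$ the points $(ct_1,t_2,\lambda/c)$ lie in this fibre. In addition, any $t_2'>t_1$ with $t_1,\lambda$ unchanged also lies in it, and such $t_2'$ form an open interval. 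Every fibre is therefore uncountable, and the set of parameters at which $F$ is locally injective is empty rather than measure zero. So the non-identifiability of $t_1$, of $\lambda$, and of $t_2$ holds on all of $\Theta$, and in particular they are not generically identifiable.

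\textbf{Step 3 ($\lambda t_1$ is the only identifiable function).} To justify ``the only,'' we argue that any parameter function $\phi(\theta)$ determined by $F$ must be constant on the fibres $\{\lambda t_1=\text{const}\}$. It is therefore a function of $\lambda t_1$, that is, a reparametrization of the composite parameter. I expect the main obstacle to be phrasing this ``only'' claim precisely, since the paper defines identifiability only for the full parameter vector. I would make explicit the convention that a parameter $\phi$ is identifiable when $F(\theta)=F(\theta')$ implies $\phi(\theta)=\phi(\theta')$. The fibre description in Step~2 then gives the claim immediately. The remaining calculations are routine.
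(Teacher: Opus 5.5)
Your proposal is correct and follows essentially the same route as the paper. Both note that the gene tree probabilities depend on the parameters only through $\lambda t_1$ and not on $t_2$, and both invert $\frac{1}{3}+\frac{2}{3}e^{-3\lambda t_1}$ to get $\lambda t_1=-\frac{1}{3}\ln\bigl(\frac{3}{2}\mathbb{P}(T_1)-\frac{1}{2}\bigr)$. Your explicit fibres $\{\lambda' t_1'=\lambda t_1\}$ and the argument that any identifiable function must factor through $\lambda t_1$ make rigorous what the paper states informally; one small correction is that such a function need not be a bijective reparametrization.
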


\begin{proof}
Let the probability of the true species tree topology be denoted $p_1$, where 
\[ p_1 = \frac{1}{3} + \frac{2}{3} e^{-3 \lambda t_1},\] 
and let the probability of each of the two alternative tree topologies on 3 taxa be denoted by 
\[ p_2 = \frac{1}{3} - \frac{1}{3} e^{-3 \lambda t_1}.\]  
Note that $p_1 + 2p_2 =1$.  

Solving for $\lambda t_1$ from either expression yields
\[ \lambda t_1 = -\frac{1}{3}\ln (1-3p_2) = -\frac{1}{3} \ln \biggl(\frac{3}{2}p_1-\frac{1}{2}\biggr).\]
Thus, only the composite parameter $\lambda t_1$ is generically identifiable from the gene tree distribution; the parameters $\lambda$ and $t_1$ cannot be separately identified.
\end{proof}

We now consider the case in which a multilocus data set is available and a gene tree has been estimated for each locus. In species tree inference under the multispecies coalescent model, it is common to treat such gene tree estimates as observed data for downstream inference of the species tree. Adopting a similar approach under the LGT model, let $f_1, f_2,$ and $f_3$ denote the observed frequencies of the three gene tree topologies.

The species tree topology can then be estimated by selecting the topology with the largest observed frequency. If the LGT rate $\lambda$ is known, the parameter $t_1$ can be estimated as
\[\hat{t}_1 = -\frac{1}{3\lambda} \ln \left( \frac{3}{2} f_1 - \frac{1}{2} \right).\]
Alternatively, if $t_1$ is known, the LGT rate $\lambda$ can be estimated from the same relationship.

\subsubsection{Identifiability of parameters from site pattern probabilities}

The mutation process along gene trees, which gives rise to sequence data observed at the leaves, provides additional information that can be used to estimate the species tree topology and model parameters. In this section, we investigate which parameters of the standard LGT model are identifiable from site pattern probabilities under the JC69 model in the 3-taxon case.

We begin by showing that the species tree topology is identifiable.

\begin{theorem}
Let $\cS$ be the 3-taxon species tree from Fig.~\ref{fig:transfer}(i)). Under the standard LGT model with transfer rate $\lambda$, the site pattern probabilities under the JC69 model satisfy
\[p_{xxy} > p_{xyx} = p_{yxx}.\]
Consequently, the species tree topology is generically identifiable from the site pattern probability distribution by selecting the topology corresponding to
\[\argmax\{ p_{xxy},\, p_{xyx},\, p_{yxx} \}.\]
\end{theorem}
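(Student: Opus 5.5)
We will prove the inequality by conditioning on the gene history and comparing, history by history, the contributions to $p_{xxy}$ and to $p_{xyx}$. The closed-form expressions of Proposition~\ref{prop:site-pattern-probabilities-3taxa} would also work, but the argument below avoids heavy algebra.

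Under JC69 with a clocklike gene tree $((a,b),c)$ having internal height $h$ and root height $H>h$, the probability of the pattern where $a,b$ agree and $c$ differs is
\[
q_{\mathrm{match}}(h,H) = \tfrac{3}{16}\bigl(1 + e^{-\frac{8}{3}h} - 2e^{-\frac{8}{3}H}\bigr).
\]
This follows from the standard Hadamard/Fourier computation for the JC model on three leaves. The probability of either pattern in which $c$ agrees with exactly one of $a,b$ is
\[
q_{\mathrm{other}}(h,H) = \tfrac{3}{16}\bigl(1 - e^{-\frac{8}{3}h}\bigr).
\]
Since $h<H$, we have $q_{\mathrm{match}} > q_{\mathrm{other}}$. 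The symmetry between taxa $1$ and $2$ in the species tree (both on arcs of length $t_1$ below $u$) gives $p_{xyx}=p_{yxx}$ immediately.

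Next we write
\[
p_{xxy} - p_{xyx} = \mathbb{E}\bigl[\mathbb{P}(xxy\mid G) - \mathbb{P}(xyx\mid G)\bigr],
\]
where $G$ is the random gene history from the classification in Appendix~\ref{app:classification}. We split this expectation by gene tree topology.
\begin{itemize}
\item If $G$ has topology $T_1$, the integrand is $q_{\mathrm{match}}-q_{\mathrm{other}}>0$.
\item If $G$ has topology $T_2=((1,3),2)$, pattern $xxy$ is an ``other'' pattern and $xyx$ is the ``match'' pattern, so the integrand is $-(q_{\mathrm{match}}-q_{\mathrm{other}})$.
\item If $G$ has topology $T_3$, both patterns are ``other'', so the integrand is $0$.
\end{itemize}
Writing $D(h,H)=q_{\mathrm{match}}-q_{\mathrm{other}}=\tfrac{3}{8}\bigl(e^{-\frac{8}{3}h}-e^{-\frac{8}{3}H}\bigr)$, the claim reduces to
\[
\mathbb{E}\bigl[D(h,H)\,\mathbf 1_{T_1}\bigr] > \mathbb{E}\bigl[D(h,H)\,\mathbf 1_{T_2}\bigr].
\]

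The main obstacle is this comparison, because the height distributions differ across topologies. We would attack it with a coupling or injection argument. Every $T_2$ history arises from a relevant transfer between the lineage of $3$ and the lineage of $1$ at some time $s<t_1$, followed by a subsequent history for the pair $\{\{1,3\},2\}$ above $s$. The mirror-image event, a transfer from $1$ into $3$ (or from $2$ into $3$, matching the $T_3$ count), instead produces a $T_1$-type history. By the uniform-rate assumption, this mirror event has the same density and yields the same law of $(h,H)$ on the continuation.

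This mirror map is exactly why the gene tree probabilities in Proposition~\ref{prop:gene-tree-probs-3taxa-main} have $P(T_1)-P(T_2)=e^{-3\lambda t_1}$. Coupling the $T_2$ histories measure-preservingly onto a subset of $T_1$ histories with identical $(h,H)$ gives the weak inequality. The leftover $T_1$ histories, for instance the no-transfer history $H_0$ with $(h,H)=(t_1,t_2)$, which has positive probability $e^{-3\lambda t_1}\cdot(\cdot)>0$ and $D>0$, make it strict.

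If the coupling proves awkward to state rigorously, the fallback is to take the appendix expressions for $p_{xxy}$ and $p_{xyx}$ directly and show that their difference factors as $e^{-3\lambda t_1}$ (or a similar term) times a manifestly positive combination of exponentials in $t_1,t_2,\lambda$. Identifiability of the topology then follows as in the gene-tree case by taking the argmax.
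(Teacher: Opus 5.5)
Your reduction is correct. Conditioning on the gene history, the $T_3$ histories contribute equally to $p_{xxy}$ and $p_{xyx}$, so the claim becomes $\mathbb{E}[D\,\mathbf{1}_{T_1}]>\mathbb{E}[D\,\mathbf{1}_{T_2}]$, with $D(h,H)$ a positive multiple of $e^{-8h/3}-e^{-8H/3}$. Your formulas for $q_{\mathrm{match}}$ and $q_{\mathrm{other}}$ are themselves wrong. As $h\to 0$ the two cherry leaves coincide, so $q_{\mathrm{match}}$ must tend to $\frac34(1-e^{-8H/3})$, and the correct difference is $D=\frac34(e^{-8h/3}-e^{-8H/3})$. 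Only the sign and monotonicity of $D$ matter, though, so this is harmless.

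The genuine gap is the coupling, which is where all the work lies.
\begin{itemize}
\item The mirror map is misidentified. A first transfer $e_1\to e_3$ or $e_3\to e_1$ produces $T_2$, and either $e_2\to e_3$ or $e_3\to e_2$ produces $T_3$. Only transfers between $e_1$ and $e_2$ produce $T_1$, so the $T_2$ histories must be matched with the directions $e_1\to e_2$ and $e_2\to e_1$.
\item Under that matching the continuations do \emph{not} all have the same law. Matching $e_3\to e_1$ with $e_1\to e_2$ is fine: in both cases one remaining lineage is on $e_3$ and the other on a cherry arc.
\item After $e_1\to e_3$ at time $s$, however, the two remaining lineages sit on $e_1$ and $e_2$ and must coalesce by $t_1$. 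This gives history $G_{2a}$, whose root is at $t_1$, and half of $G_{2c}$. After $e_2\to e_1$, one lineage is on $e_3$ and coalescence can be pushed into $(t_1,t_2]$ (histories $G_{1b}$, $G_{1d}$).
\item In the appendix's classification the root height is $\min(s+E,t_1)$ in the first case and $\min(s+E,t_2)$ in the second, with $E\sim\mathrm{Exp}(2\lambda)$.
\end{itemize}
So no measure-preserving map onto $T_1$ histories with identical $(h,H)$ exists, and your weak inequality does not follow as written.

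The missing idea is that $D(h,H)$ is increasing in $H$. Couple $e_1\to e_3$ with $e_2\to e_1$ through the same $s$ and $E$; this gives $D(s,\min(s+E,t_2))\ge D(s,\min(s+E,t_1))$ pointwise. Together with the exact match $e_3\to e_1\leftrightarrow e_1\to e_2$, this yields $\mathbb{E}[D\,\mathbf{1}_{T_2}]\le \mathbb{E}[D\,\mathbf{1}_{T_1}\mathbf{1}_{A}]$, where $A$ is the event that a transfer occurs in $(0,t_1)$. Strictness comes from the complement of $A$ (histories $G_{1x}$, $G_{1y}$). It has probability $e^{-3\lambda t_1}$, and on it $h=t_1<H$, so $D>0$.

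With this repair your argument is complete. It is essentially a cleaner version of the paper's proof, which also compares histories class by class: nine histories are said to cancel, $G_{1d}$ has twice the density of $G_{2d}$ and $G_{3d}$, and $G_{1x}$, $G_{1y}$ favor $xxy$.

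The paper's cancellation step meets the same obstruction. $G_{2a}$ (and symmetrically $G_{3a}$) has its root at $t_1$ rather than $t_2$. As the appendix computation of $\mathbb{P}(T_2)$ shows, its density is $\frac12\lambda e^{-\lambda(2t_1+h_1)}$, not that of $G_{1a}$. Monotonicity of $D$ in $H$ is exactly what balances $G_{2a}$ against $G_{1b}$ plus half of $G_{1d}$; the other half of $G_{1d}$ matches $G_{2d}$.

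Your fallback of factoring the closed-form difference is a hope rather than an argument.
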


The proof follows an analogous argument to the proof of Theorem 3a in~\cite{zhu2021}.
\begin{proof} 
   Each marginal site pattern probability is obtained by averaging over the 14 gene histories classified in Appendix~\ref{app:classification}. For the argument below, note that gene histories with subscript $1$ favor the site pattern $xxy$ in the sense that the induced probability of $xxy$ is strictly larger than that of $xyx$ and $yxx$, while the latter two are equal. Similarly, gene histories with subscript $2$ favor $xyx$, and those with subscript $3$ favor $yxx$.
    
   Each marginal site pattern probability is obtained by averaging over the 14 gene histories classified in Appendix~\ref{app:classification}. By Proposition~\ref{prop:gene-tree-densities}, the six gene histories $G_{1a}$, $G_{1b}$, $G_{2a}$, $G_{2b}$, $G_{3a}$, and $G_{3b}$ all have the same densities. Similarly, $G_{1c}$, $G_{2c}$, and $G_{3c}$ share a common density. Taken together, the contributions of these nine gene histories to the site pattern $xxy$ are equal to their contributions to the site patterns $xyx$ and $yxx$.

   In contrast, for gene histories of type $G_{1x}$ or $G_{1y}$, the induced probability of $xxy$ is strictly greater than that of either $xyx$ or $yxx$, while $p_{xyx} = p_{yxx}$ still holds.

    It remains to consider the three gene histories $G_{1d}, G_{2d},$ and $G_{3d}$. By Proposition~\ref{prop:gene-tree-densities}, $G_{2d}$ and $G_{3d}$ have the same density, while the density of $G_{1d}$ is twice as large. Hence, the induced site pattern probability for $xxy$ from $G_{1d}$ is strictly larger than the corresponding probabilities for $xyx$ and $yxx$ from $G_{2d}$ and $G_{3d}$, respectively, while the latter two remain equal.

    Combining all cases yields
    \[ p_{xxy} > p_{xyx} = p_{yxx},\]
    as required.
\end{proof}

We now turn to the local identifiability of the parameters $t_1, t_2$, and $\lambda$ of the standard LGT model for species tree $\cS$ in Fig.~\ref{fig:transfer}(i) from site pattern probabilities under the JC69 model.

\begin{definition}[Local identifiability]
Let $\theta = (t_1,t_2,\lambda)$ denote the parameter vector of the standard LGT model with transfer rate $\lambda$ for species tree $\cS$ in Fig.~\ref{fig:transfer}(i), and let
\[ f(\theta) = (p_{xxx}, p_{xxy}, p_{xyy}, p_{xyz})\]
denote the corresponding vector of site pattern probabilities under the JC69 model.

The parameter $\theta$ is said to be \emph{locally identifiable} at $\theta_0$ if there exists a neighborhood $U$ of $\theta_0$ such that, for all $\theta \in U$,
\[f(\theta) = f(\theta_0) \quad \Longrightarrow \quad \theta = \theta_0.\]
Equivalently, $f$ is {\it locally injective} at $\theta_0$.
\end{definition}

Intuitively, local identifiability means that distinct parameter values sufficiently close to the true value produce distinct distributions.

\begin{definition}[Generic local identifiability]
The parameter vector $\theta$ is said to be \emph{generically locally identifiable} if it is locally identifiable at all points of a nonempty open dense subset of the parameter space.
\end{definition}

\smallskip
A standard sufficient condition for local identifiability is that the Jacobian matrix $J_f$ has full column rank (see, e.g., \citet[Proposition 16.1.7]{sullivant2023}). By the inverse function theorem, this implies that the parameter-to-distribution map is locally injective.

To establish generic local identifiability, it suffices to show that $J_f$ has full column rank on a nonempty open dense subset of the parameter space (see, e.g.,~\citet{Gross2022}). Equivalently, it suffices to show that at least one maximal minor of $J_f$ is not identically zero.

\begin{prop}
Let 
\[
f(t_1,t_2,\lambda) =  (p_{xxx}, p_{xxy}, p_{xyy}, p_{xyz})\]
denote the vector of site pattern probabilities under the JC69 model induced by the standard LGT model with transfer rate $\lambda > 0$ for species tree $\cS$ in Fig.~\ref{fig:transfer}(i) with $0 < t_1 < t_2$.

Then the parameter vector $(t_1,t_2,\lambda)$ is generically locally identifiable; equivalently, the Jacobian matrix $J_f$ has generic rank $3$.
\end{prop}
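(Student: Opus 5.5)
Since the site pattern probabilities from Appendix~\ref{app:site-pattern-probabilities} sum to one, the four coordinates of $f$ carry only three independent pieces of information. I would therefore drop $p_{xyz}$ and work with the reduced map $\tilde f=(p_{xxx},p_{xxy},p_{xyy})$, whose $3\times 3$ Jacobian $\tilde J$ is a maximal minor of $J_f$ (up to the linear dependence). It then suffices to show that $\det \tilde J$ is not identically zero on the open parameter set $\Theta=\{0<t_1<t_2,\ \lambda>0\}$.

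First I will argue that the nonvanishing set is open and dense. The closed-form expressions are finite sums of terms $R(\lambda)\,e^{\ell(t_1,t_2,\lambda)}$, where each $R$ is rational in $\lambda$ and each exponent $\ell$ is a polynomial in $\lambda,t_1,t_2$, of the form (constant $+\lambda$)$\times t_i$. This comes from integrating exponential densities against JC69 transition probabilities, exactly as in \eqref{pxx}--\eqref{pxy}. These functions are real-analytic on the connected open set $\Theta$, so $\det\tilde J$ is real-analytic there. A real-analytic function that is not identically zero has a zero set that is closed with empty interior (indeed of measure zero). Its complement is therefore a nonempty open dense subset of $\Theta$. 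On that subset, the inverse function theorem (as in \citet[Proposition 16.1.7]{sullivant2023}) gives local injectivity of $\tilde f$, and hence of $f$. This yields generic local identifiability.

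It therefore remains to exhibit a single point $\theta_0\in\Theta$ with $\det\tilde J(\theta_0)\neq 0$. I will differentiate the appendix expressions symbolically in \texttt{Mathematica}, as was done for \eqref{pxx}. I will then evaluate the determinant at a convenient rational point, e.g. $(t_1,t_2,\lambda)=(1/2,1,1)$, where the exponentials reduce to powers of $e^{-1/3}$ and similar quantities. To make the certificate rigorous rather than floating-point, I would either evaluate exactly and show the result is a nonzero element of $\mathbb{Q}(e^{1/3},\dots)$, or bound it away from zero with interval arithmetic. As a sanity check on why rank $3$ is plausible, the structure suggests how each parameter enters. From Proposition~\ref{prop:gene-tree-probs-3taxa-main}, topology frequencies depend only on $\lambda t_1$. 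The branch-length information in the site patterns separates $t_1$ from $\lambda$, much as $g'(\lambda)\neq 0$ did in Proposition~\ref{prop:3taxa:lambda-given-t}. Finally, $t_2$ enters only through histories without a transfer above $t_1$ (the pattern $xxy$ versus $xyy$ being sensitive to the root height). Hence the $\partial/\partial t_2$ column should be linearly independent of the other two.

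The main obstacle is the size of the closed-form expressions, which makes a symbolic proof that the determinant is nonzero unwieldy. I would first try the exact single-point evaluation. If the exact expression is too messy to certify as nonzero, I would fall back on an asymptotic regime instead, such as $\lambda\to 0$. In that limit the model reduces to JC69 on the fixed tree with heights $t_1,t_2$, where the $(t_1,t_2)$ block of the Jacobian is known to be nonsingular. I would then expand to first order in $\lambda$ and check that the $\lambda$-column contributes a nonzero term to the determinant.
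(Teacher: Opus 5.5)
Your proposal is correct and is essentially the paper's proof. The paper also computes $J_f$ symbolically in \texttt{Mathematica}, verifies rank $3$ at a single admissible point (it uses $(t_1,t_2,\lambda)=(1,2,0.5)$), and concludes that some maximal minor is not identically zero, which gives generic rank $3$ and generic local identifiability. Your additional steps only make explicit what the paper leaves implicit: the reduction via $p_{xxx}+p_{xxy}+2p_{xyy}+p_{xyz}=1$, the real-analyticity argument for the open dense set, and the exact certificate. One caution on the analyticity step: it has to come from the integral representation over gene histories, not from the displayed closed forms. Those carry removable factors $(3\lambda-4)^{-1}$ and $(3\lambda-8)^{-1}$, and your test point lies in only one component of $\Theta\setminus\{\lambda\in\{4/3,8/3\}\}$, so analyticity across these planes is needed for density in all of $\Theta$.
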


\begin{proof}
We compute the Jacobian matrix $J_f$ of $f$ with respect to $(t_1,t_2,\lambda)$, which is a $4 \times 3$ matrix.

A symbolic computation in \texttt{Mathematica}~\cite{mathematica} shows that
\[
\mathrm{rank}\left(J_f\big|_{(t_1,t_2,\lambda)=(1,2,0.5)}\right)=3.
\]
Since $(1,2,0.5)$ satisfies $0<t_1<t_2$ and $\lambda>0$, the Jacobian attains full column rank at an admissible parameter point. Therefore at least one $3\times 3$ minor of $J_f$ is nonzero at this point, and hence is not identically zero as a function of $(t_1,t_2,\lambda)$.

It follows that $J_f$ has generic rank $3$, i.e., rank $3$ on a nonempty open dense subset of
\[
\Theta = \{(t_1,t_2,\lambda): 0<t_1<t_2,\ \lambda>0\}.
\]
Consequently, the parameter vector $(t_1,t_2,\lambda)$ is generically locally identifiable.

The \texttt{Mathematica} script used to perform this computation is provided alongside this manuscript.
\end{proof}

\medskip
While we have established that the parameters of the LGT model are generically locally identifiable, we conjecture that a stronger result holds.

\begin{conjecture}\label{conj:identifiability}
Let $\mathcal{S}$ be the 3-taxon species tree shown in Fig.~\ref{fig:transfer}(i). Under the standard LGT model with transfer rate $\lambda$, the parameters $t_1$, $t_2$, and $\lambda$ are generically identifiable from site pattern probabilities under the JC69 model.
\end{conjecture}

\section{Parameter estimation in the 3-taxon case}\label{sec:estimate}
An important goal in establishing parameter identifiability is to determine which model parameters can be estimated from empirical data. Having established local identifiability of the parameters $(t_1, t_2, \lambda)$ in the 3-taxon case and conjectured generic identifiability, we next consider estimation of these parameters from observed sequence data using maximum likelihood.

 Without loss of generality, let the tree in Fig. \ref{fig:transfer}(i) be the species tree (which is assumed to be fixed and known), and let $p_{ijk}$ be the probability that species $a$ has nucleotide $i$, species $b$ has nucleotide $j$, and species $c$ has nucleotide $k$ at a particular site in the sequence alignment under the standard LGT model. Let $\mathbf{p} = (p_{xxx}, p_{xxy}, p_{xyx}, p_{yxx}, p_{xyz})$ be the vector of site pattern probabilities, and let the observed frequencies in the data be denoted by $\mathbf{f} = (f_{xxx}, f_{xxy}, f_{xyx}, f_{yxx}, f_{xyz})$, where $f_{ijk}$ if the number of sites for which species $a$ has nucleotide $i$, species $b$ has nucleotide $j$, and species $c$ has nucleotide $k$.  The data $\mathbf{f}$ have a multinomial distribution with parameters $\mathbf{p}$ and $M$, where $M$ is the number of sites in the observed sequence alignment. Note that $\mathbf{p}$ is a function of the parameter vector $(t_1, t_2, \lambda)$. Thus the log of the likelihood function is given by
$$\log L(t_1, t_2, \lambda| \mathbf{f}) \propto f_{xxx} \log(p_{xxx}) + f_{xxy} \log(p_{xxy}) + (f_{xyx}+f_{yxx}) \log(p_{xyx}) + f_{xyz} \log(p_{xyz}). $$
This log likelihood can be maximized with respect to the parameter vector $(t_1, t_2, \lambda)$
to obtain the maximum likelihood estimators (MLEs) of the parameters, which are known to have favorable statistical properties, such as consistency and asymptotic normality.

Note that the parameters are subject to the constraints that $\lambda >0, t_1 > 0, t_2 > 0,$ and $ t_2 > t_1.$ Because unconstrained optimization is often numerically more efficient than constrained optimization, we follow the approach taken in Peng et al. (2022)\nocite{swofford2022} described in \cite{swofford2022} in transforming the parameter space so that unconstrained optimization algorithms can be applied. Specifically, we define
$$ z_1 = \sin^{-1}\biggl(\sqrt{\frac{t_1}{t_2}}\biggr)$$
$$ z_2 = \log (t_2)$$
$$ z_3 = \log(\lambda)$$
and note that $z_i \in \mathbb{R}$ for $i = 1, 2, 3.$ 

We implemented our MLE calculations in the R software \cite{R}, which allows us to take advantage of R's built-in optimization functions. Specifically, we used R's \texttt{optim} function with the default method due to Nelder and Mead (1965)\nocite{neldermead1965} to obtain parameter estimates.  The Nelder-Mead method uses only function values and does not require derivative information. We note that derivatives would be straightforward to calculate and our  optimization routines could likely be made more efficient by incorporating derivative information. Since the calculations without derivatives  take only seconds to compute, we have used that approach here.  Following optimization, the parameter values are transformed back to the original units by 
$$t_1 = t_2 \sin^2(z_1)$$
$$t_2 = e^{\log(z_2)}$$
$$\lambda = e^{\log(z_3)}.$$

To evaluate the performance of the MLEs, we carried out two simulation studies. In the first simulation study, we simulated data under the model by generating a sample of 1,000,000 sites from a 5-category multinomial distribution with category probabilities given by $\mathbf{p} = (p_{xxx},p_{xxy},p_{xyx}, p_{yxx}, p_{xyz})$. This corresponds to the generation of an alignment of length 1,000,000 bp from the species tree in Fig. \ref{fig:transfer}(i) under the LGT model in which each site has its own gene tree topology. We then obtained the MLEs from the simulated data, and repeated the entire procedure 100 times.  We considered  $\lambda = 0.8 $ and $\lambda= 1.2$, and for each of these, we considered three branch length combinations: $(t_1, t_2) = (0.1, 0.2); (0.2, 0.4);$ and $(0.4, 0.8).$

Histograms of the MLEs for each of the parameters when $\lambda=1.2$ are shown in Fig. \ref{fig:sim1-results} (the corresponding results when $\lambda=0.8$ are show in Appendix~\ref{appendix:simulation}).  Two observations are immediately clear from the simulation study. First, the parameter estimates appear to be unbiased since the histograms for each of the parameters are centered at the true values. This is highlighted by the close agreement between the true parameter value (black vertical line in each plot) and the average of the 100 MLEs (gray vertical line in each plot). Second, the distribution of the parameter estimates appears to be approximately normal, as expected.  In the case where $t_1$ and $t_2$ are short (subfigures (a), (b), and (c)), the histograms appear ``spiky'' because one replicate produced outlying observations that expanded the range of the $x$-axis.

\begin{figure}[htbp]
    \centering
    \includegraphics[scale=0.6]{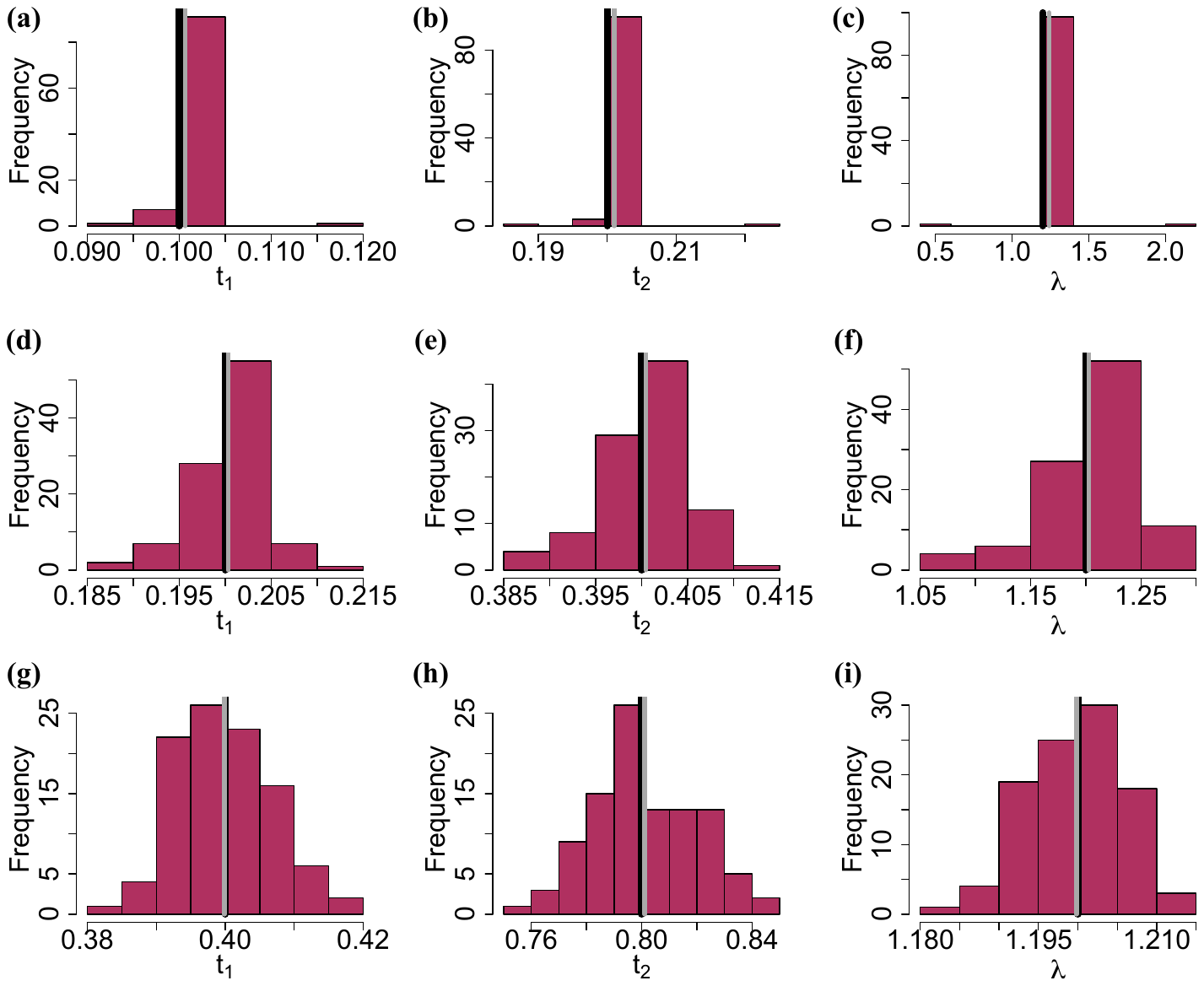}
    \caption{Results of the first simulation study when $\lambda=1.2$.  (a), (b), (c) The first row shows histograms of the 100 MLEs for parameters $t_1$, $t_2$, and $\lambda$, respectively, when the true values are $t_1=0.1$ and $t_2=0.2$. (d), (e), (f)  The second row shows histograms of the 100 MLEs for parameters $t_1$, $t_2$, and $\lambda$, respectively, when the true values are $t_1=0.2$ and $t_2=0.4$. (g), (h), (i) The third row shows histograms of the 100 MLEs for parameters $t_1$, $t_2$, and $\lambda$, respectively, when the true values are $t_1=0.4$ and $t_2=0.8$. In all plots, the black vertical line denotes the true parameter value and the gray vertical line gives the average estimate over the 100 replicates. }
    \label{fig:sim1-results}
\end{figure}

In the second simulation study, we considered the effect of the sample size (i.e., the length of the alignment) as a way of assessing consistency of the MLEs.  The simulation was carried out in the same way as the first simulation study, except that the sample size $M$ was varied so that the following values were considered: $M = 1000,5000,10000,50000,100000,500000,$ $1000000,5000000,10000000$. For each value of $M$ and each combination of branch lengths and $\lambda$, we simulated 1000 replicate data sets and computed MLEs for each parameter. The root mean squared error (RMSE) was then computed as 
\begin{equation}
RMSE(\theta) = \sqrt{\frac{1}{1000}\sum_{i=1}^{1000} (\hat{\theta} - \theta)^2}
\end{equation}
where $\theta$ is one of the parameters (e.g., either $t_1$ , $t_2$, or $\lambda$) and $\hat{\theta}$ is the MLE of $\theta$. We then plotted the RMSE as a function of the logarithm of the sample size $M$. Our expectation is that the RMSE decreases as $M$ increases, which, together with the observation that the MLEs are unbiased, indicates statistical consistency. 

The results of the second simulation study are shown in Fig. \ref{fig:rmse}. It is clear that the RMSE decreases as the sample size increases. In general, the error in estimating $\lambda$ is larger than that in estimating $t_1$ and $t_2$, though there are exceptions. The RMSE for all parameters is very low when the sample size is large.

\begin{figure}[htbp]
    \centering
    \includegraphics[scale=0.6]{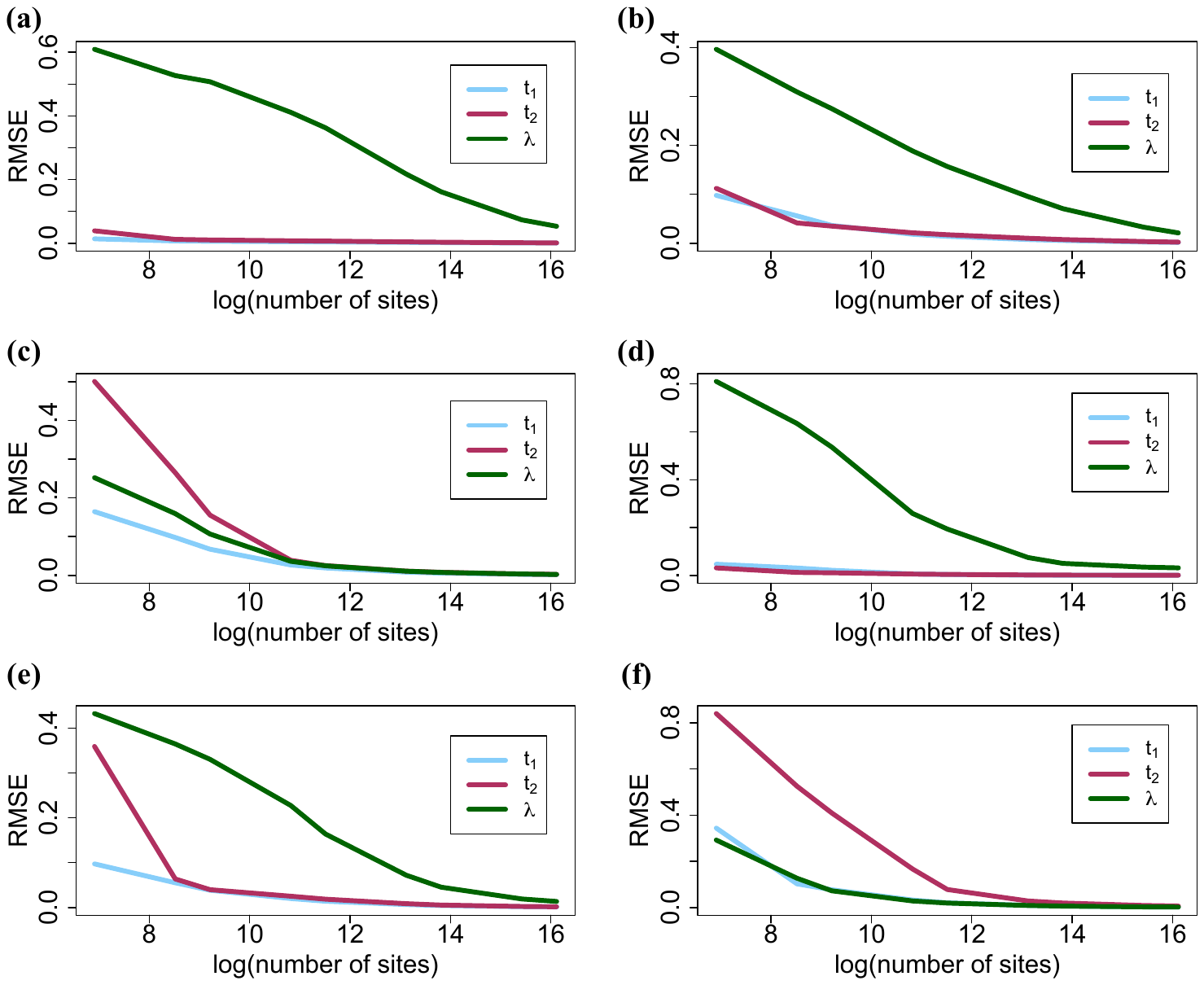}
    \caption{Results of the second simulation study.  Each plot shows the RMSE for estimates of each parameter as a function of the logarithm of the number of sites when the true values of the parameters $(t_1, t_2, \lambda)$ are (a) (0.1, 0.2, 0.8); (b) (0.1, 0.2, 1.2); (c) (0.2, 0.4, 0.8); (d) (0.2, 0.4, 1.2); (e) (0.4, 0.8, 0.8); (f) (0.4, 0.8, 1.2).  }
    \label{fig:rmse}
\end{figure}

\section{Distinguishing LGT and ILS in the 3-taxon case}\label{sec:ILS}

\paragraph{Background on the multispecies coalescent (MSC) model.} Incomplete lineage sorting (ILS) is another evolutionary process that can cause the gene tree and species tree topologies to be incongruent.  The multispecies coalescent (MSC)~\cite{kubatko2023species,mirarab2021multispecies} is typically used to model ILS. Consider a $3$-taxon species tree $\cS$. As in the LGT case, $t_2$ denotes the $t$-value of the root of $\cS$ and $t_1$ denotes the $t$-value of the interior vertex of $\cS$ that is not the root such that  the unique non-pendant arc of $\cS$ has length $t_2-t_1>0$. In the MSC model, a gene tree $T$ on the three lineages $1$, $2$, and $3$ can be thought of as if it evolves within $\cS$ following a backwards-in-time process in which lineages of $T$ coalesce as they are followed back towards the root of $\cS$. Note that lineages sometimes fail to coalesce in the way prescribed by $\cS$. For example, lineages $1$ and $2$ can fail to coalesce along the non-pendant arc of $\cS$, and lineages $1$ and $3$ can instead coalesce further back in time in which case the resulting gene tree has topology $((1,3),2)$. Let $T_1=((1,2),3)$, $T_2=((1,3),2)$, and $T_3=((2,3),1)$ be the three gene tree topologies on three taxa.  
It can be shown that the (species tree) matching gene tree topology $T_1$ and each of the two non-matching gene tree topologies $T_2$ and $T_3$ have the following probabilities, where $\boldsymbol{t} = (t_1,t_2)$ with $t_1<t_2$:

\begin{align*}
\mathbb{P}_{\text{MSC}}(T_1\mid\cS,\boldsymbol{t} )&=1-\frac{2}{3} e^{-(t_2-t_1)}\\
\mathbb{P}_{\text{MSC}}(T_2\mid\cS,\boldsymbol{t} )=\mathbb{P}_{\text{MSC}}(T_3\mid\cS,\boldsymbol{t})&=\frac{1}{3} e^{-(t_2-t_1)}.
\end{align*}
 We refer the reader to~\cite[Chapter 1]{kubatko2023species} for an introduction to the MSC in the 3-taxon case, and to~\cite{degnan2005gene} for the gene tree probabilities for an $n$-taxon species tree with $n\geq 3$.

\paragraph{Comparison of the gene tree probabilities under the MSC and LGT model.} Let $\cS$ be the $3$-taxon species tree as shown in Fig. \ref{fig:transfer}(i), and let $t_1$ and $t_2$ with $t_1<t_2$ be the $t$-values of the two interior vertices of $\cS$. It is well known that, unless $t_2-t_1=0$, the probability of the matching gene tree topology is always larger than the probability of either of the two non-matching gene tree topologies under the MSC  model. Intuitively, the shorter the non-pendant arc of $\cS$, the more likely it is for a coalescent event to happen between lineages $1$ and $3$, or between $2$ and $3$. Turning to the LGT model, it follows from the probabilities of the three gene tree topologies given in Proposition~\ref{prop:gene-tree-probs-3taxa-main} that the probability of the matching gene tree topology is always larger than the probability of either of the two non-matching gene tree topologies. In particular, the larger $\lambda  t_1$, the more likely it is for the first relevant transfer to result in a non-matching gene tree topology. A comparison of the  probabilities of the matching and non-matching gene tree topologies under both models is shown in Fig.~\ref{fig:comparing-probs}.

\begin{figure}[h!]
\noindent\begin{minipage}[t]{0.5\textwidth}
\scalebox{0.5}{
\begin{tikzpicture}[x=0.95pt,y=0.95pt]
\definecolor{fillColor}{RGB}{255,255,255}
\path[use as bounding box,fill=fillColor,fill opacity=0.00] (0,0) rectangle (433.62,289.08);
\begin{scope}
\path[clip] (  0.00,  0.00) rectangle (433.62,289.08);
\definecolor{drawColor}{RGB}{255,255,255}
\definecolor{fillColor}{RGB}{255,255,255}

\path[draw=drawColor,line width= 0.6pt,line join=round,line cap=round,fill=fillColor] (  0.00,  0.00) rectangle (433.62,289.08);
\end{scope}
\begin{scope}
\path[clip] ( 52.21, 39.69) rectangle (428.12,263.07);
\definecolor{fillColor}{gray}{0.92}

\path[fill=fillColor] ( 52.21, 39.69) rectangle (428.12,263.07);
\definecolor{drawColor}{RGB}{255,255,255}

\path[draw=drawColor,line width= 0.3pt,line join=round] ( 52.21, 75.23) --
	(428.12, 75.23);

\path[draw=drawColor,line width= 0.3pt,line join=round] ( 52.21,126.00) --
	(428.12,126.00);

\path[draw=drawColor,line width= 0.3pt,line join=round] ( 52.21,176.76) --
	(428.12,176.76);

\path[draw=drawColor,line width= 0.3pt,line join=round] ( 52.21,227.53) --
	(428.12,227.53);

\path[draw=drawColor,line width= 0.3pt,line join=round] (126.25, 39.69) --
	(126.25,263.07);

\path[draw=drawColor,line width= 0.3pt,line join=round] (240.17, 39.69) --
	(240.17,263.07);

\path[draw=drawColor,line width= 0.3pt,line join=round] (354.08, 39.69) --
	(354.08,263.07);

\path[draw=drawColor,line width= 0.6pt,line join=round] ( 52.21, 49.85) --
	(428.12, 49.85);

\path[draw=drawColor,line width= 0.6pt,line join=round] ( 52.21,100.61) --
	(428.12,100.61);

\path[draw=drawColor,line width= 0.6pt,line join=round] ( 52.21,151.38) --
	(428.12,151.38);

\path[draw=drawColor,line width= 0.6pt,line join=round] ( 52.21,202.15) --
	(428.12,202.15);

\path[draw=drawColor,line width= 0.6pt,line join=round] ( 52.21,252.91) --
	(428.12,252.91);

\path[draw=drawColor,line width= 0.6pt,line join=round] ( 69.30, 39.69) --
	( 69.30,263.07);

\path[draw=drawColor,line width= 0.6pt,line join=round] (183.21, 39.69) --
	(183.21,263.07);

\path[draw=drawColor,line width= 0.6pt,line join=round] (297.12, 39.69) --
	(297.12,263.07);

\path[draw=drawColor,line width= 0.6pt,line join=round] (411.03, 39.69) --
	(411.03,263.07);
\definecolor{drawColor}{RGB}{176,48,96}

\path[draw=drawColor,line width= 2.3pt,line join=round] ( 69.30, 49.85) --
	( 72.72, 53.79) --
	( 76.13, 57.50) --
	( 79.55, 61.00) --
	( 82.97, 64.29) --
	( 86.39, 67.39) --
	( 89.80, 70.31) --
	( 93.22, 73.06) --
	( 96.64, 75.65) --
	(100.06, 78.09) --
	(103.47, 80.39) --
	(106.89, 82.55) --
	(110.31, 84.59) --
	(113.72, 86.51) --
	(117.14, 88.31) --
	(120.56, 90.02) --
	(123.98, 91.62) --
	(127.39, 93.13) --
	(130.81, 94.55) --
	(134.23, 95.89) --
	(137.65, 97.15) --
	(141.06, 98.34) --
	(144.48, 99.45) --
	(147.90,100.51) --
	(151.32,101.50) --
	(154.73,102.43) --
	(158.15,103.31) --
	(161.57,104.14) --
	(164.98,104.92) --
	(168.40,105.65) --
	(171.82,106.35) --
	(175.24,107.00) --
	(178.65,107.61) --
	(182.07,108.19) --
	(185.49,108.73) --
	(188.91,109.25) --
	(192.32,109.73) --
	(195.74,110.18) --
	(199.16,110.61) --
	(202.58,111.02) --
	(205.99,111.40) --
	(209.41,111.75) --
	(212.83,112.09) --
	(216.24,112.41) --
	(219.66,112.71) --
	(223.08,112.99) --
	(226.50,113.25) --
	(229.91,113.50) --
	(233.33,113.74) --
	(236.75,113.96) --
	(240.17,114.17) --
	(243.58,114.36) --
	(247.00,114.55) --
	(250.42,114.72) --
	(253.84,114.88) --
	(257.25,115.04) --
	(260.67,115.18) --
	(264.09,115.32) --
	(267.51,115.45) --
	(270.92,115.57) --
	(274.34,115.69) --
	(277.76,115.79) --
	(281.17,115.90) --
	(284.59,115.99) --
	(288.01,116.08) --
	(291.43,116.17) --
	(294.84,116.25) --
	(298.26,116.32) --
	(301.68,116.39) --
	(305.10,116.46) --
	(308.51,116.52) --
	(311.93,116.58) --
	(315.35,116.64) --
	(318.77,116.69) --
	(322.18,116.74) --
	(325.60,116.78) --
	(329.02,116.83) --
	(332.43,116.87) --
	(335.85,116.91) --
	(339.27,116.94) --
	(342.69,116.98) --
	(346.10,117.01) --
	(349.52,117.04) --
	(352.94,117.07) --
	(356.36,117.10) --
	(359.77,117.12) --
	(363.19,117.15) --
	(366.61,117.17) --
	(370.03,117.19) --
	(373.44,117.21) --
	(376.86,117.23) --
	(380.28,117.25) --
	(383.69,117.26) --
	(387.11,117.28) --
	(390.53,117.30) --
	(393.95,117.31) --
	(397.36,117.32) --
	(400.78,117.33) --
	(404.20,117.35) --
	(407.62,117.36) --
	(411.03,117.37);
\definecolor{drawColor}{RGB}{0,100,0}

\path[draw=drawColor,line width= 2.3pt,line join=round] ( 69.30,252.91) --
	( 72.72,245.03) --
	( 76.13,237.61) --
	( 79.55,230.61) --
	( 82.97,224.03) --
	( 86.39,217.83) --
	( 89.80,211.99) --
	( 93.22,206.49) --
	( 96.64,201.31) --
	(100.06,196.43) --
	(103.47,191.83) --
	(106.89,187.51) --
	(110.31,183.43) --
	(113.72,179.59) --
	(117.14,175.98) --
	(120.56,172.58) --
	(123.98,169.37) --
	(127.39,166.35) --
	(130.81,163.51) --
	(134.23,160.83) --
	(137.65,158.31) --
	(141.06,155.94) --
	(144.48,153.70) --
	(147.90,151.59) --
	(151.32,149.61) --
	(154.73,147.74) --
	(158.15,145.98) --
	(161.57,144.33) --
	(164.98,142.77) --
	(168.40,141.30) --
	(171.82,139.91) --
	(175.24,138.61) --
	(178.65,137.38) --
	(182.07,136.23) --
	(185.49,135.14) --
	(188.91,134.11) --
	(192.32,133.15) --
	(195.74,132.24) --
	(199.16,131.38) --
	(202.58,130.58) --
	(205.99,129.82) --
	(209.41,129.10) --
	(212.83,128.43) --
	(216.24,127.79) --
	(219.66,127.20) --
	(223.08,126.63) --
	(226.50,126.10) --
	(229.91,125.61) --
	(233.33,125.14) --
	(236.75,124.69) --
	(240.17,124.28) --
	(243.58,123.88) --
	(247.00,123.51) --
	(250.42,123.17) --
	(253.84,122.84) --
	(257.25,122.53) --
	(260.67,122.24) --
	(264.09,121.96) --
	(267.51,121.71) --
	(270.92,121.46) --
	(274.34,121.23) --
	(277.76,121.02) --
	(281.17,120.82) --
	(284.59,120.63) --
	(288.01,120.45) --
	(291.43,120.28) --
	(294.84,120.12) --
	(298.26,119.97) --
	(301.68,119.82) --
	(305.10,119.69) --
	(308.51,119.57) --
	(311.93,119.45) --
	(315.35,119.34) --
	(318.77,119.23) --
	(322.18,119.13) --
	(325.60,119.04) --
	(329.02,118.95) --
	(332.43,118.87) --
	(335.85,118.79) --
	(339.27,118.72) --
	(342.69,118.65) --
	(346.10,118.59) --
	(349.52,118.52) --
	(352.94,118.47) --
	(356.36,118.41) --
	(359.77,118.36) --
	(363.19,118.31) --
	(366.61,118.27) --
	(370.03,118.23) --
	(373.44,118.19) --
	(376.86,118.15) --
	(380.28,118.11) --
	(383.69,118.08) --
	(387.11,118.05) --
	(390.53,118.02) --
	(393.95,117.99) --
	(397.36,117.96) --
	(400.78,117.94) --
	(404.20,117.91) --
	(407.62,117.89) --
	(411.03,117.87);
\end{scope}
\begin{scope}
\path[clip] (  0.00,  0.00) rectangle (433.62,289.08);
\definecolor{drawColor}{gray}{0.30}

\node[text=drawColor,anchor=base east,inner sep=0pt, outer sep=0pt, scale=  1.40] at ( 47.26, 45.03) {0.00};

\node[text=drawColor,anchor=base east,inner sep=0pt, outer sep=0pt, scale=  1.40] at ( 47.26, 95.79) {0.25};

\node[text=drawColor,anchor=base east,inner sep=0pt, outer sep=0pt, scale=  1.40] at ( 47.26,146.56) {0.50};

\node[text=drawColor,anchor=base east,inner sep=0pt, outer sep=0pt, scale=  1.40] at ( 47.26,197.33) {0.75};

\node[text=drawColor,anchor=base east,inner sep=0pt, outer sep=0pt, scale=  1.40] at ( 47.26,248.09) {1.00};
\end{scope}
\begin{scope}
\path[clip] (  0.00,  0.00) rectangle (433.62,289.08);
\definecolor{drawColor}{gray}{0.20}

\path[draw=drawColor,line width= 0.6pt,line join=round] ( 49.46, 49.85) --
	( 52.21, 49.85);

\path[draw=drawColor,line width= 0.6pt,line join=round] ( 49.46,100.61) --
	( 52.21,100.61);

\path[draw=drawColor,line width= 0.6pt,line join=round] ( 49.46,151.38) --
	( 52.21,151.38);

\path[draw=drawColor,line width= 0.6pt,line join=round] ( 49.46,202.15) --
	( 52.21,202.15);

\path[draw=drawColor,line width= 0.6pt,line join=round] ( 49.46,252.91) --
	( 52.21,252.91);
\end{scope}
\begin{scope}
\path[clip] (  0.00,  0.00) rectangle (433.62,289.08);
\definecolor{drawColor}{gray}{0.20}

\path[draw=drawColor,line width= 0.6pt,line join=round] ( 69.30, 36.94) --
	( 69.30, 39.69);

\path[draw=drawColor,line width= 0.6pt,line join=round] (183.21, 36.94) --
	(183.21, 39.69);

\path[draw=drawColor,line width= 0.6pt,line join=round] (297.12, 36.94) --
	(297.12, 39.69);

\path[draw=drawColor,line width= 0.6pt,line join=round] (411.03, 36.94) --
	(411.03, 39.69);
\end{scope}
\begin{scope}
\path[clip] (  0.00,  0.00) rectangle (433.62,289.08);
\definecolor{drawColor}{gray}{0.30}

\node[text=drawColor,anchor=base,inner sep=0pt, outer sep=0pt, scale=  1.40] at ( 69.30, 25.10) {0};

\node[text=drawColor,anchor=base,inner sep=0pt, outer sep=0pt, scale=  1.40] at (183.21, 25.10) {2};

\node[text=drawColor,anchor=base,inner sep=0pt, outer sep=0pt, scale=  1.40] at (297.12, 25.10) {4};

\node[text=drawColor,anchor=base,inner sep=0pt, outer sep=0pt, scale=  1.40] at (411.03, 25.10) {6};
\end{scope}
\begin{scope}
\path[clip] (  0.00,  0.00) rectangle (433.62,289.08);
\definecolor{drawColor}{RGB}{0,0,0}

\node[text=drawColor,anchor=base,inner sep=0pt, outer sep=0pt, scale=  1.60] at (240.17,  8.61) {\itshape $\lambda  t_1$};
\end{scope}
\begin{scope}
\path[clip] (  0.00,  0.00) rectangle (433.62,289.08);
\definecolor{drawColor}{RGB}{0,0,0}

\node[text=drawColor,rotate= 90.00,anchor=base,inner sep=0pt, outer sep=0pt, scale=  1.60] at ( 16.52,151.38) {probability};
\end{scope}
\begin{scope}
\path[clip] (  0.00,  0.00) rectangle (433.62,289.08);
\definecolor{drawColor}{RGB}{0,0,0}

\node[text=drawColor,anchor=base,inner sep=0pt, outer sep=0pt, scale=  1.70] at (240.17,271.87) {(a) LGT};
\end{scope}
\end{tikzpicture}}
\end{minipage}
\begin{minipage}[t]{0.5\textwidth}
\scalebox{0.5}{
\begin{tikzpicture}[x=0.95pt,y=0.95pt]
\definecolor{fillColor}{RGB}{255,255,255}
\path[use as bounding box,fill=fillColor,fill opacity=0.00] (0,0) rectangle (433.62,289.08);
\begin{scope}
\path[clip] (  0.00,  0.00) rectangle (433.62,289.08);
\definecolor{drawColor}{RGB}{255,255,255}
\definecolor{fillColor}{RGB}{255,255,255}

\path[draw=drawColor,line width= 0.6pt,line join=round,line cap=round,fill=fillColor] (  0.00,  0.00) rectangle (433.62,289.08);
\end{scope}
\begin{scope}
\path[clip] ( 52.21, 39.69) rectangle (428.12,263.07);
\definecolor{fillColor}{gray}{0.92}

\path[fill=fillColor] ( 52.21, 39.69) rectangle (428.12,263.07);
\definecolor{drawColor}{RGB}{255,255,255}

\path[draw=drawColor,line width= 0.3pt,line join=round] ( 52.21, 75.13) --
	(428.12, 75.13);

\path[draw=drawColor,line width= 0.3pt,line join=round] ( 52.21,126.02) --
	(428.12,126.02);

\path[draw=drawColor,line width= 0.3pt,line join=round] ( 52.21,176.91) --
	(428.12,176.91);

\path[draw=drawColor,line width= 0.3pt,line join=round] ( 52.21,227.80) --
	(428.12,227.80);

\path[draw=drawColor,line width= 0.3pt,line join=round] (126.25, 39.69) --
	(126.25,263.07);

\path[draw=drawColor,line width= 0.3pt,line join=round] (240.17, 39.69) --
	(240.17,263.07);

\path[draw=drawColor,line width= 0.3pt,line join=round] (354.08, 39.69) --
	(354.08,263.07);

\path[draw=drawColor,line width= 0.6pt,line join=round] ( 52.21, 49.68) --
	(428.12, 49.68);

\path[draw=drawColor,line width= 0.6pt,line join=round] ( 52.21,100.57) --
	(428.12,100.57);

\path[draw=drawColor,line width= 0.6pt,line join=round] ( 52.21,151.46) --
	(428.12,151.46);

\path[draw=drawColor,line width= 0.6pt,line join=round] ( 52.21,202.36) --
	(428.12,202.36);

\path[draw=drawColor,line width= 0.6pt,line join=round] ( 52.21,253.25) --
	(428.12,253.25);

\path[draw=drawColor,line width= 0.6pt,line join=round] ( 69.30, 39.69) --
	( 69.30,263.07);

\path[draw=drawColor,line width= 0.6pt,line join=round] (183.21, 39.69) --
	(183.21,263.07);

\path[draw=drawColor,line width= 0.6pt,line join=round] (297.12, 39.69) --
	(297.12,263.07);

\path[draw=drawColor,line width= 0.6pt,line join=round] (411.03, 39.69) --
	(411.03,263.07);
\definecolor{drawColor}{RGB}{176,48,96}

\path[draw=drawColor,line width= 2.3pt,line join=round] ( 69.30,117.54) --
	( 72.72,113.58) --
	( 76.13,109.86) --
	( 79.55,106.36) --
	( 82.97,103.06) --
	( 86.39, 99.95) --
	( 89.80, 97.02) --
	( 93.22, 94.26) --
	( 96.64, 91.67) --
	(100.06, 89.22) --
	(103.47, 86.92) --
	(106.89, 84.75) --
	(110.31, 82.71) --
	(113.72, 80.78) --
	(117.14, 78.97) --
	(120.56, 77.27) --
	(123.98, 75.66) --
	(127.39, 74.15) --
	(130.81, 72.72) --
	(134.23, 71.38) --
	(137.65, 70.12) --
	(141.06, 68.93) --
	(144.48, 67.81) --
	(147.90, 66.75) --
	(151.32, 65.76) --
	(154.73, 64.82) --
	(158.15, 63.94) --
	(161.57, 63.11) --
	(164.98, 62.33) --
	(168.40, 61.59) --
	(171.82, 60.90) --
	(175.24, 60.24) --
	(178.65, 59.63) --
	(182.07, 59.05) --
	(185.49, 58.50) --
	(188.91, 57.99) --
	(192.32, 57.50) --
	(195.74, 57.05) --
	(199.16, 56.62) --
	(202.58, 56.22) --
	(205.99, 55.83) --
	(209.41, 55.48) --
	(212.83, 55.14) --
	(216.24, 54.82) --
	(219.66, 54.52) --
	(223.08, 54.24) --
	(226.50, 53.97) --
	(229.91, 53.72) --
	(233.33, 53.49) --
	(236.75, 53.27) --
	(240.17, 53.06) --
	(243.58, 52.86) --
	(247.00, 52.68) --
	(250.42, 52.50) --
	(253.84, 52.34) --
	(257.25, 52.18) --
	(260.67, 52.04) --
	(264.09, 51.90) --
	(267.51, 51.77) --
	(270.92, 51.65) --
	(274.34, 51.53) --
	(277.76, 51.42) --
	(281.17, 51.32) --
	(284.59, 51.23) --
	(288.01, 51.14) --
	(291.43, 51.05) --
	(294.84, 50.97) --
	(298.26, 50.90) --
	(301.68, 50.83) --
	(305.10, 50.76) --
	(308.51, 50.70) --
	(311.93, 50.64) --
	(315.35, 50.58) --
	(318.77, 50.53) --
	(322.18, 50.48) --
	(325.60, 50.43) --
	(329.02, 50.39) --
	(332.43, 50.35) --
	(335.85, 50.31) --
	(339.27, 50.27) --
	(342.69, 50.24) --
	(346.10, 50.20) --
	(349.52, 50.17) --
	(352.94, 50.15) --
	(356.36, 50.12) --
	(359.77, 50.09) --
	(363.19, 50.07) --
	(366.61, 50.05) --
	(370.03, 50.02) --
	(373.44, 50.00) --
	(376.86, 49.99) --
	(380.28, 49.97) --
	(383.69, 49.95) --
	(387.11, 49.93) --
	(390.53, 49.92) --
	(393.95, 49.91) --
	(397.36, 49.89) --
	(400.78, 49.88) --
	(404.20, 49.87) --
	(407.62, 49.86) --
	(411.03, 49.85);
\definecolor{drawColor}{RGB}{0,100,0}

\path[draw=drawColor,line width= 2.3pt,line join=round] ( 69.30,117.54) --
	( 72.72,125.44) --
	( 76.13,132.88) --
	( 79.55,139.89) --
	( 82.97,146.49) --
	( 86.39,152.71) --
	( 89.80,158.57) --
	( 93.22,164.08) --
	( 96.64,169.27) --
	(100.06,174.16) --
	(103.47,178.77) --
	(106.89,183.11) --
	(110.31,187.19) --
	(113.72,191.04) --
	(117.14,194.66) --
	(120.56,198.07) --
	(123.98,201.29) --
	(127.39,204.31) --
	(130.81,207.16) --
	(134.23,209.85) --
	(137.65,212.37) --
	(141.06,214.75) --
	(144.48,217.00) --
	(147.90,219.11) --
	(151.32,221.10) --
	(154.73,222.97) --
	(158.15,224.73) --
	(161.57,226.39) --
	(164.98,227.96) --
	(168.40,229.43) --
	(171.82,230.82) --
	(175.24,232.12) --
	(178.65,233.35) --
	(182.07,234.51) --
	(185.49,235.60) --
	(188.91,236.63) --
	(192.32,237.60) --
	(195.74,238.51) --
	(199.16,239.37) --
	(202.58,240.18) --
	(205.99,240.94) --
	(209.41,241.66) --
	(212.83,242.33) --
	(216.24,242.97) --
	(219.66,243.57) --
	(223.08,244.13) --
	(226.50,244.66) --
	(229.91,245.16) --
	(233.33,245.63) --
	(236.75,246.08) --
	(240.17,246.49) --
	(243.58,246.89) --
	(247.00,247.26) --
	(250.42,247.61) --
	(253.84,247.94) --
	(257.25,248.24) --
	(260.67,248.54) --
	(264.09,248.81) --
	(267.51,249.07) --
	(270.92,249.31) --
	(274.34,249.54) --
	(277.76,249.76) --
	(281.17,249.96) --
	(284.59,250.15) --
	(288.01,250.33) --
	(291.43,250.50) --
	(294.84,250.66) --
	(298.26,250.81) --
	(301.68,250.96) --
	(305.10,251.09) --
	(308.51,251.21) --
	(311.93,251.33) --
	(315.35,251.45) --
	(318.77,251.55) --
	(322.18,251.65) --
	(325.60,251.74) --
	(329.02,251.83) --
	(332.43,251.91) --
	(335.85,251.99) --
	(339.27,252.06) --
	(342.69,252.13) --
	(346.10,252.20) --
	(349.52,252.26) --
	(352.94,252.32) --
	(356.36,252.37) --
	(359.77,252.42) --
	(363.19,252.47) --
	(366.61,252.52) --
	(370.03,252.56) --
	(373.44,252.60) --
	(376.86,252.64) --
	(380.28,252.67) --
	(383.69,252.71) --
	(387.11,252.74) --
	(390.53,252.77) --
	(393.95,252.80) --
	(397.36,252.82) --
	(400.78,252.85) --
	(404.20,252.87) --
	(407.62,252.89) --
	(411.03,252.91);
\end{scope}
\begin{scope}
\path[clip] (  0.00,  0.00) rectangle (433.62,289.08);
\definecolor{drawColor}{gray}{0.30}

\node[text=drawColor,anchor=base east,inner sep=0pt, outer sep=0pt, scale=  1.40] at ( 47.26, 44.86) {0.00};

\node[text=drawColor,anchor=base east,inner sep=0pt, outer sep=0pt, scale=  1.40] at ( 47.26, 95.75) {0.25};

\node[text=drawColor,anchor=base east,inner sep=0pt, outer sep=0pt, scale=  1.40] at ( 47.26,146.64) {0.50};

\node[text=drawColor,anchor=base east,inner sep=0pt, outer sep=0pt, scale=  1.40] at ( 47.26,197.54) {0.75};

\node[text=drawColor,anchor=base east,inner sep=0pt, outer sep=0pt, scale=  1.40] at ( 47.26,248.43) {1.00};
\end{scope}
\begin{scope}
\path[clip] (  0.00,  0.00) rectangle (433.62,289.08);
\definecolor{drawColor}{gray}{0.20}

\path[draw=drawColor,line width= 0.6pt,line join=round] ( 49.46, 49.68) --
	( 52.21, 49.68);

\path[draw=drawColor,line width= 0.6pt,line join=round] ( 49.46,100.57) --
	( 52.21,100.57);

\path[draw=drawColor,line width= 0.6pt,line join=round] ( 49.46,151.46) --
	( 52.21,151.46);

\path[draw=drawColor,line width= 0.6pt,line join=round] ( 49.46,202.36) --
	( 52.21,202.36);

\path[draw=drawColor,line width= 0.6pt,line join=round] ( 49.46,253.25) --
	( 52.21,253.25);
\end{scope}
\begin{scope}
\path[clip] (  0.00,  0.00) rectangle (433.62,289.08);
\definecolor{drawColor}{gray}{0.20}

\path[draw=drawColor,line width= 0.6pt,line join=round] ( 69.30, 36.94) --
	( 69.30, 39.69);

\path[draw=drawColor,line width= 0.6pt,line join=round] (183.21, 36.94) --
	(183.21, 39.69);

\path[draw=drawColor,line width= 0.6pt,line join=round] (297.12, 36.94) --
	(297.12, 39.69);

\path[draw=drawColor,line width= 0.6pt,line join=round] (411.03, 36.94) --
	(411.03, 39.69);
\end{scope}
\begin{scope}
\path[clip] (  0.00,  0.00) rectangle (433.62,289.08);
\definecolor{drawColor}{gray}{0.30}

\node[text=drawColor,anchor=base,inner sep=0pt, outer sep=0pt, scale=  1.40] at ( 69.30, 25.10) {0};

\node[text=drawColor,anchor=base,inner sep=0pt, outer sep=0pt, scale=  1.40] at (183.21, 25.10) {2};

\node[text=drawColor,anchor=base,inner sep=0pt, outer sep=0pt, scale=  1.40] at (297.12, 25.10) {4};

\node[text=drawColor,anchor=base,inner sep=0pt, outer sep=0pt, scale=  1.40] at (411.03, 25.10) {6};
\end{scope}
\begin{scope}
\path[clip] (  0.00,  0.00) rectangle (433.62,289.08);
\definecolor{drawColor}{RGB}{0,0,0}

\node[text=drawColor,anchor=base,inner sep=0pt, outer sep=0pt, scale=  1.60] at (240.17,  8.61) {\itshape $t_2-t_1$};
\end{scope}
\begin{scope}
\path[clip] (  0.00,  0.00) rectangle (433.62,289.08);
\definecolor{drawColor}{RGB}{0,0,0}

\node[text=drawColor,rotate= 90.00,anchor=base,inner sep=0pt, outer sep=0pt, scale=  1.60] at ( 16.52,151.38) {probability};
\end{scope}
\begin{scope}
\path[clip] (  0.00,  0.00) rectangle (433.62,289.08);
\definecolor{drawColor}{RGB}{0,0,0}

\node[text=drawColor,anchor=base,inner sep=0pt, outer sep=0pt, scale=  1.70] at (240.17,271.87) {(b) MSC};
\end{scope}
\end{tikzpicture}}
\end{minipage}
\caption{Probabilities of the matching (green) and non-matching (maroon) gene tree topologies under the LGT model (a) and the MSC model (b) plotted as a function of $\lambda t_1$ and $t_2-t_1$, respectively.}
\label{fig:comparing-probs}
\end{figure}

\paragraph{Indistinguishability between the LGT and MSC models.}
Given a a distribution of gene trees, we will next see that it is not always possible to determine if the distribution has been obtained under the standard LGT model or under the MSC model. The indistinguishability between these two models arises for certain parameter combinations. For a 3-taxon species tree, the parameters of interest are its two divergence times and, for the LGT model, the transfer rate. Let  $\cS$ and  $\cS'$ be a $3$-taxon species tree with the same topology, let $\boldsymbol{t} = (t_1,t_2)$ with $0<t_1 < t_2$ be the $t$-values of the two interior vertices of $\cS$, and let $\boldsymbol{t'} = (t_1',t_2')$ with $0<t_1' < t_2'$ be the $t$-values of the two interior vertices of $\cS'$. Furthermore, let $T_1$ be the gene tree topology that matches $\cS$ (and $\cS'$). Since 
$$\mathbb{P_{\text{LGT}}}(T_1 | \mathcal{S},\boldsymbol{t}, \lambda) = \frac{1}{3} + \frac{2}{3} e^{-3 \lambda t_1} \text{ and }\mathbb{P}_{\text{MSC}}(T_1\mid\cS',\boldsymbol{t'} )=1-\frac{2}{3} e^{-(t_2'-t_1')}$$
have range $(\frac 1 3,1)$, it follows that for every value for $\boldsymbol{t}$ with $\mathbb{P_{\text{LGT}}}(T_1 | \mathcal{S},\boldsymbol{t}, \lambda)$, there exists a choice for $\boldsymbol{t'}$ such that $\mathbb{P_{\text{LGT}}}(T_1 | \mathcal{S},\boldsymbol{t}, \lambda)=\mathbb{P}_{\text{MSC}}(T_1\mid\cS',\boldsymbol{t'} )$.

We next establish a stronger indistinguishability result when $\boldsymbol{t} =\boldsymbol{t'}$. 

\begin{theorem}
Let $\cS$ be a species tree on three taxa, and let $\boldsymbol{t} = (t_1,t_2)$ with $0<t_1 < t_2$ be the $t$-values of its two interior vertices. 
Furthermore, let $\lambda>0$ be a fixed transfer rate. There are infinitely many pairs $t_1$ and $t_2$ such that the standard LGT model and the MSC model are indistinguishable.
\end{theorem}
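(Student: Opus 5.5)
Indistinguishability here means that the two models induce the same distribution on gene tree topologies for the same species tree $\cS$ with the same $t$-values $\boldsymbol{t}$. By Proposition~\ref{prop:gene-tree-probs-3taxa-main} and the MSC formulas above, both models assign equal probability to the two non-matching topologies $T_2$ and $T_3$. In both models the three probabilities also sum to one. It therefore suffices to match the probabilities of $T_1$, that is, to solve
\[
\frac{1}{3}+\frac{2}{3}e^{-3\lambda t_1} \;=\; 1-\frac{2}{3}e^{-(t_2-t_1)}
\]
for $(t_1,t_2)$ with $0<t_1<t_2$, where $\lambda>0$ is fixed. Matching $T_1$ then forces $\mathbb{P}(T_2)=\mathbb{P}(T_3)$ to agree as well.

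I would rearrange this equation to
\[
e^{-(t_2-t_1)} \;=\; 1-e^{-3\lambda t_1},
\qquad\text{equivalently}\qquad
t_2 \;=\; t_1-\ln\bigl(1-e^{-3\lambda t_1}\bigr).
\]
For every $t_1>0$ we have $0<e^{-3\lambda t_1}<1$, so the logarithm is defined and $-\ln(1-e^{-3\lambda t_1})>0$. Hence $t_2>t_1$, and the constraint $0<t_1<t_2$ holds automatically.

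This defines a map $t_1\mapsto \bigl(t_1,\,t_1-\ln(1-e^{-3\lambda t_1})\bigr)$ on $(0,\infty)$, and it is injective in its first coordinate. Its image is therefore an uncountable, in particular infinite, set of admissible pairs. For each such pair the gene tree distributions under the LGT and MSC models coincide, and this proves the theorem. I would also remark that the solution set is exactly this curve, so indistinguishability occurs on a one-dimensional subset, which is non-generic but infinite.

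There is no real obstacle here. The only points needing care are to state precisely that indistinguishability refers to the gene tree topology distribution, and to check the positivity $t_2-t_1>0$, which follows from $1-e^{-3\lambda t_1}\in(0,1)$.
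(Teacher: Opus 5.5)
Your proposal is correct and follows essentially the same route as the paper: equating one gene tree probability under the two models reduces to $e^{-(t_2-t_1)} = 1-e^{-3\lambda t_1}$, i.e.\ $t_2 = t_1-\ln(1-e^{-3\lambda t_1})$, which gives an admissible $t_2>t_1$ for every $t_1>0$. The differences are cosmetic. The paper equates the non-matching probabilities $\frac{1}{3}-\frac{1}{3}e^{-3\lambda t_1}=\frac{1}{3}e^{-(t_2-t_1)}$ and obtains $t_2>t_1$ by ruling out $t_2=t_1$ and $t_2<t_1$ by contradiction; you match $T_1$ instead and read off positivity directly from $1-e^{-3\lambda t_1}\in(0,1)$, and you also state explicitly that matching one probability forces the whole distribution to agree, a step the paper leaves implicit.
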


\begin{proof}
Let $T$ be a gene tree topology that is different from the topology of $\cS$. Then, under the LGT model, the probability of $T$ is $\mathbb{P}_{\text{LGT}}(T \mid \mathcal{S}, \boldsymbol{t},\lambda)=\frac{1}{3} - \frac{1}{3} e^{-3 \lambda t_1}$, and under the MSC model the probability of $T$ is $\mathbb{P}_{\text{MSC}}(T \mid \mathcal{S},\boldsymbol{t})=\frac 1 3 e^{-(t_2-t_1)}$. For the LGT and MSC model to be indistinguishable, we have 
\begin{align}
\mathbb{P}_{\text{LGT}}(T \mid \mathcal{S},\boldsymbol{t}, \lambda)=\mathbb{P}_{\text{MSC}}(T \mid \mathcal{S},\boldsymbol{t})&\iff \frac{1}{3} - \frac{1}{3} e^{-3 \lambda t_1}=\frac 1 3 e^{-(t_2-t_1)}\nonumber\\
&\iff 1-e^{-3\lambda t_1}=e^{-(t_2-t_1)}\nonumber\\
&\iff t_2=t_1-\ln{(1-e^{-3\lambda t_1})}.\label{eq:one}
\end{align}
We complete the proof by showing that, for a fixed $t_1$ (and a fixed $\lambda$), Eq.~\eqref{eq:one} satisfies the inequality $t_2>t_1$. First assume that $t_1=t_2$. Then, by Eq.~\eqref{eq:one}, we have  $$t_1=t_1-\ln{(1-e^{-3\lambda t_1})}\iff 0=\ln{(1-e^{-3\lambda t_1})}.$$ Since $e^{-3\lambda t_1}\ne 0$, it follows that $\ln{(1-e^{-3\lambda t_1})}\ne 0$; thereby contradicting our assumption. Second assume that $t_1>t_2$. Again, by  Eq.~\eqref{eq:one}, we have $t_1-t_2=\ln(1-e^{-3\lambda t_1})$. The left-hand side of the last equality is positive, whereas the right-hand side is negative because $e^{-3\lambda t_1}$ is in the interval $(0,1)$; another contradiction. By combining both cases, we have $t_2>t_1$. In particular, for any $t_1$ with $t_1>0$, there exists a $t_2$ with $t_2>t_1$ such that the $\mathbb{P}_{\text{LGT}}(T \mid \mathcal{S},\boldsymbol{t}, \lambda)=\mathbb{P}_{\text{MSC}}(T \mid \mathcal{S},\boldsymbol{t})$.
\end{proof}

\section{Discussion}
Given the ubiquity with which lateral gene transfer occurs in natural populations \cite{sieber2017lateral,soucy2015horizontal}, modeling this process is crucial in obtaining accurate phylogenetic inference.  We take several important steps in this direction. First, for models with two or three taxa at a time, we prove parameter identifiability (and importantly, we consider cases in which not all parameters are simultaneously identifiable). In contrast to earlier work on the identifiability of the species tree topology from gene tree probabilities, we focus on the numerical parameters of the model (such as branch lengths and transfer rates) and study their identifiability both from gene tree and site pattern probabilities. In the case of estimating the continuous parameters along a fixed tree for three taxa, we prove generic local identifiability, but conjecture generic identifiability, while in all other cases for two and three taxa, we are able to prove generic identifiability. Second, we show that maximum likelihood estimation is feasible under the standard LGT model described here in the 3-taxon case, and that parameter estimates satisfy the expected theoretical properties of consistency and asymptotic normality. Finally, we highlight the indistinguishability of the LGT process from the commonly-modeled ILS process, thus providing an important caution for those seeking to use observed gene tree frequencies to infer biological mechanism.

One implication of the indistinguishability result is that methods that infer species trees by considering triplets are likely to perform well under LGT models, as was observed in \cite{davidson2015phylogenomic}, since the requirement that the gene tree with the same topology as the species tree has the highest probability is met for the LGT model as it is for the ILS model. An interesting follow-up question to the indistinguishability result is whether these models can be distinguished from the site pattern probability distribution. Though we have not addressed that question here, we conjecture that these models \emph{will} be distinguishable if site pattern probabilities are known. 

Our work can be extended in several important ways.  First, we could consider allowing the rate of LGT to vary across the tree, with variation across branch lengths likely to be an important aspect of the evolutionary process to model. We note that such models have been previously proposed \cite{steel2013}.  Second, we could incorporate models more complex than the JC69 model, including the possible addition of models with rate variation across sites. Finally, we could extend our results to four taxa. While that greatly increases the complexity of the computations because the number of gene tree histories is much greater, significant new information could be gained from quartets as opposed to triplets. Additionally, the quartet site pattern probabilities could be used to infer species tree for an arbitrarily large number of species by integrating them into an existing composite likelihood framework \cite{kubatkoetal2025}, which would be relatively straightforward.

More generally, models that incorporate multiple evolutionary processes simultaneously (e.g., \cite{lietal2021}) are important for capturing the full range of evolutionary processes at work. Examining questions of tree and parameter identifiability in such models will be crucial to the development of strategies for inference under such models. Our work addresses the question of identifiability in the simple setting of the LGT process on small trees, but in doing so, we highlight the complexities and potential approaches that can be applied in more complicated scenarios.

\section*{Acknowledgments}
This material is based upon work supported by the National Science Foundation under Grant No. DMS-1929284 while the authors were in residence at the Institute for Computational and Experimental Research in Mathematics in Providence, RI, during the `Theory, Methods, and Applications of Quantitative Phylogenomics' semester program. The authors also gratefully acknowledge the Banff International Research Station for Mathematical Innovation and Discovery (BIRS) for hosting the workshop `Novel Mathematical Paradigm for Phylogenomics (25w5333)', which facilitated this research. SL was also supported by the New Zealand Marsden Fund from Government funding, administered by the Royal Society Te Ap\=arangi New Zealand. Finally, the authors thank John A. Rhodes and Seth Sullivant for helpful discussions on concepts and proofs related to identifiability.

\section*{Author contributions}
All authors have made substantial intellectual contributions to this work. All authors have read and approved the final manuscript.

\section*{Conflicts of interest} 
The authors declare there are no conflicts of interest.

\section*{Data \& code availability}
The \texttt{Mathematica} code for reproducing all calculations and the R Markdown file for reproducing the simulation study are available on GitHub: \url{https://github.com/lkubatko/LGT-Model}.

\nocite{*}
\bibliographystyle{abbrvnat}
\bibliography{References.bib}

@inproceedings{Daskalakis2015,
  title = {Species Trees from Gene Trees Despite a High Rate of Lateral Genetic Transfer: A Tight Bound (Extended Abstract)},
  DOI = {10.1137/1.9781611974331.ch110},
  booktitle = {Proceedings of the Twenty-Seventh Annual ACM-SIAM Symposium on Discrete Algorithms},
  publisher = {Society for Industrial and Applied Mathematics},
  author = {Daskalakis,  Constantinos and Roch,  Sebastien},
  year = {2015},
  pages = {1621--1630}
}

@article{degnan2005gene,
  title={Gene tree distributions under the coalescent process},
  author={Degnan, James H and Salter, Laura A},
  journal={Evolution},
  volume={59},
  number={1},
  pages={24--37},
  year={2005}
}

@article{degnan2009gene,
  title = {Gene tree discordance,  phylogenetic inference and the multispecies coalescent},
  volume = {24},
  DOI = {10.1016/j.tree.2009.01.009},
  number = {6},
  journal = {Trends in Ecology \& Evolution},
  publisher = {Elsevier BV},
  author = {Degnan,  James H. and Rosenberg,  Noah A.},
  year = {2009},
  pages = {332–340}
}

@article{galtier2007,
  title={A model of horizontal gene transfer and the bacterial phylogeny problem},
  author={Galtier, Nicolas},
  journal={Systematic Biology},
  volume={56},
  number={4},
  pages={633--642},
  year={2007},
  publisher={Oxford University Press}
}

@misc{garcia2007,
  author       = {Luis David Garcia-Puente and Jacob Porter},
  title        = {Small Phylogenetic Trees},
  howpublished = {\url{https://www.coloradocollege.edu/aapps/ldg/small-trees/small-trees_0.html}},
  institution  = {Colorado College},
  year         = {2007},
  note         = {Contributors include Marta Casanellas, Serkan Hosten, Lior Pachter, Stacey Stokes, Bernd Sturmfels, and Seth Sullivant},
}

@article{jukescantor1969,
  title={Evolution of protein molecules},
  author={Jukes, Thomas H and Cantor, Charles R},
  journal={Mammalian Protein Metabolism},
  volume={3},
  pages={21--132},
  year={1969}
}

@book{kubatko2023species,
  title={Species Tree Inference: {A} Guide to Methods and Applications},
  author={Kubatko, Laura and Knowles, L Lacey},
  year={2023},
  publisher={Princeton University Press}
}

@article{kubatkoetal2025,
    author = {Kubatko, L. and Kong, S. and Webb, E. and Chen,Z.},
    year = 2025,
    title = {The promise of composite likelihood for species-level phylogenomic inference}, 
    journal = {Evolutionary Journal of the Linnean Society}, 
    volume = 4,
    number = 1,
    pages = {kzaf008}
}

@article{lietal2021, 
    title = {The multilocus multispecies coalescent: a flexible new model of gene family evolution},
    author = {Li, Q. and Scornavacca, C. and Galtier, N. and Chan, Y.-B.},
    journal = {Systematic Biology},
    year = 2021,
    volume=70,
    number = 4,
    pages={822-837}
}

@article{linz2007,
  title = {A Likelihood Framework to Measure Horizontal Gene Transfer},
  volume = {24},
  DOI = {10.1093/molbev/msm052},
  number = {6},
  journal = {Molecular Biology and Evolution},
  publisher = {Oxford University Press (OUP)},
  author = {Linz,  S. and Radtke,  A. and von Haeseler,  A.},
  year = {2007},
  pages = {1312–1319}
}

@article{sand2013,
  title = {The standard lateral gene transfer model is statistically consistent for pectinate four-taxon trees},
  volume = {335},
  DOI = {10.1016/j.jtbi.2013.07.002},
  journal = {Journal of Theoretical Biology},
  publisher = {Elsevier BV},
  author = {Sand,  Andreas and Sand,  Andreas and Steel,  Mike},
  year = {2013},
  pages = {295–298}
}

@article{steel2013,
  title = {Identifying a species tree subject to random lateral gene transfer},
  volume = {322},
  DOI = {10.1016/j.jtbi.2013.01.009},
  journal = {Journal of Theoretical Biology},
  publisher = {Elsevier BV},
  author = {Steel,  Mike and Linz,  Simone and Huson,  Daniel H. and Sanderson,  Michael J.},
  year = {2013},
  pages = {81–93}
}

@book{steel2016,
  title     = "Phylogeny: Discrete and Random Processes in Evolution",
  author    = "Steel, Mike",
  publisher = "Society for Industrial and Applied Mathematics",
  year      =  2016,
  address   = "Philadelphia, PA"
}

@article{yang1994,
  title = {Statistical Properties of the Maximum Likelihood Method of Phylogenetic Estimation and Comparison With Distance Matrix Methods},
  volume = {43},
  DOI = {10.1093/sysbio/43.3.329},
  number = {3},
  journal = {Systematic Biology},
  publisher = {Oxford University Press (OUP)},
  author = {Yang,  Z.},
  year = {1994},
  pages = {329–342}
}

@article{zhu2021,
  title = {Complexity of the simplest species tree problem},
  volume = {38},
  DOI = {10.1093/molbev/msab009},
  number = {9},
  journal = {Molecular Biology and Evolution},
  publisher = {Oxford University Press (OUP)},
  author = {Zhu,  Tianqi and Yang,  Ziheng},
  editor = {Su,  Bing},
  year = {2021},
  pages = {3993–4009}
}

@article{pengetal2021,
  title={A fast likelihood approach for estimation of large phylogenies from continuous trait data},
  author={Peng, Jing and Rajeevan, Haseena and Kubatko, Laura and RoyChoudhury, Arindam},
  journal={Molecular Phylogenetics and Evolution},
  volume={161},
  pages={107142},
  year={2021},
  publisher={Elsevier}
}

@article{swofford2022,
  title={Estimation of speciation times under the multispecies coalescent},
  author={Peng, Jing and Swofford, David L and Kubatko, Laura},
  journal={Bioinformatics},
  volume={38},
  number={23},
  pages={5182--5190},
  year={2022},
  publisher={Oxford University Press}
}

@article{neldermead1965,
  title={A simplex method for function minimization},
  author={Nelder, John A and Mead, Roger},
  journal={The Computer Journal},
  volume={7},
  number={4},
  pages={308--313},
  year={1965},
  publisher={The British Computer Society}
}

@article{mirarab2021multispecies,
  title={Multispecies coalescent: {T}heory and applications in phylogenetics},
  author={Mirarab, Siavash and Nakhleh, Luay and Warnow, Tandy},
  journal={Annual Review of Ecology, Evolution, and Systematics},
  volume={52},
  number={1},
  pages={247--268},
  year={2021},
  publisher={Annual Reviews}
}

@misc{mathematica,
  author = {{Wolfram Research, Inc.}},
  title = {Mathematica, {V}ersion 14.3},
  year = {2025},
  address = {Champaign, IL},
  url = {https://www.wolfram.com/mathematica}
}

@Manual{R,
    title = {R: A Language and Environment for Statistical Computing},
    author = {{R Core Team}},
    organization = {R Foundation for Statistical Computing},
    address = {Vienna, Austria},
    year = {2021},
    url = {https://www.R-project.org/},
}

@article{kubatko2026evolving,
  title={An evolving view of species tree inference},
  author={Kubatko, Laura},
  journal={Systematic Biology},
  volume={75},
  number={2},
  pages={195--207},
  year={2026},
  publisher={Oxford University Press}
}

@article{roch2013,
  title = {Recovering the Treelike Trend of Evolution Despite Extensive Lateral Genetic Transfer: A Probabilistic Analysis},
  volume = {20},
   DOI = {10.1089/cmb.2012.0234},
  number = {2},
  journal = {Journal of Computational Biology},
  publisher = {SAGE Publications},
  author = {Roch,  Sebastien and Snir,  Sagi},
  year = {2013},
  pages = {93–112}
}

@book{sullivant2023,
  author    = {Seth Sullivant},
  title     = {Algebraic Statistics},
  series     = {Graduate Studies in Mathematics},
  volume     = {194},
  publisher  = {American Mathematical Society},
  year       = {2023},
  isbn       = {978-1-4704-7510-9}
}

@article{soucy2015horizontal,
  title={Horizontal gene transfer: Building the web of life},
  author={Soucy, Shannon M and Huang, Jinling and Gogarten, Johann Peter},
  journal={Nature Reviews Genetics},
  volume={16},
  number={8},
  pages={472--482},
  year={2015},
  publisher={Nature Publishing Group UK London}
}

@article{davidson2015phylogenomic,
  title={Phylogenomic species tree estimation in the presence of incomplete lineage sorting and horizontal gene transfer},
  author={Davidson, Ruth and Vachaspati, Pranjal and Mirarab, Siavash and Warnow, Tandy},
  journal={BMC Genomics},
  volume={16},
  number={Suppl 10},
  pages={S1},
  year={2015},
  publisher={Springer}
}

@article{sieber2017lateral,
  title={Lateral gene transfer between prokaryotes and eukaryotes},
  author={Sieber, Karsten B and Bromley, Robin E and Hotopp, Julie C Dunning},
  journal={Experimental Cell Research},
  volume={358},
  number={2},
  pages={421--426},
  year={2017},
  publisher={Elsevier}
}

@book{yu2023some,
  title={Some inference problems on networks with applications},
  author={Yu, Shuqi},
  year={2023},
  publisher={PhD thesis, The University of Wisconsin-Madison}
}

@article{liu2022antimicrobial,
  title={Antimicrobial-induced horizontal transfer of antimicrobial resistance genes in bacteria: A mini-review},
  author={Liu, Gang and Thomsen, Line Elnif and Olsen, John Elmerdahl},
  journal={Journal of Antimicrobial Chemotherapy},
  volume={77},
  number={3},
  pages={556--567},
  year={2022},
  publisher={Oxford University Press}
}

@article{rannala2003bayes,
  title={Bayes estimation of species divergence times and ancestral population sizes using {DNA} sequences from multiple loci},
  author={Rannala, Bruce and Yang, Ziheng},
  journal={Genetics},
  volume={164},
  number={4},
  pages={1645--1656},
  year={2003},
  publisher={Oxford University Press}
}

@article{suchard2005stochastic,
  title={Stochastic models for horizontal gene transfer: taking a random walk through tree space},
  author={Suchard, Marc A},
  journal={Genetics},
  volume={170},
  number={1},
  pages={419--431},
  year={2005},
  publisher={Oxford University Press}
}

@article{yu2011coalescent,
  title={Coalescent histories on phylogenetic networks and detection of hybridization despite incomplete lineage sorting},
  author={Yu, Yun and Than, Cuong and Degnan, James H and Nakhleh, Luay},
  journal={Systematic Biology},
  volume={60},
  number={2},
  pages={138--149},
  year={2011},
  publisher={Oxford University Press}
}

@article{degnan2018modeling,
  title={Modeling hybridization under the network multispecies coalescent},
  author={Degnan, James H},
  journal={Systematic Biology},
  volume={67},
  number={5},
  pages={786--799},
  year={2018},
  publisher={Oxford University Press}
}

@article{Gross2022,
  title = {Identifiability of linear compartmental models: {T}he singular locus},
  volume = {133},
  DOI = {10.1016/j.aam.2021.102268},
  journal = {Advances in Applied Mathematics},
  publisher = {Elsevier BV},
  author = {Gross,  Elizabeth and Meshkat,  Nicolette and Shiu,  Anne},
  year = {2022},
  pages = {102268}
}

\appendix
\section{Further details on the 3-taxon case}

\subsection{Gene history classification for the 3-taxon case} \label{app:classification}
Let $\cS$ be the $3$-taxon species tree in Fig.~\ref{fig:transfer}(i), with arcs $e_1,e_2,e_3$ incident to leaves $1,2,3$ and interior times $t_1<t_2$. Let $\underline{\sigma}=(\sigma_1,\ldots,\sigma_k)$ be a transfer sequence, and let $\sigma_1^\ast,\sigma_2^\ast$ denote the first and second relevant transfers at times $h_1<h_2$ (when they exist).

\begin{obs} \label{obs:sigma1}
For any transfer sequence $\underline{\sigma}$ on $\cS$:
\begin{enumerate}[(a)]
    \item there are at most two relevant transfers; and 
    \item the first transfer $\sigma_1$ is always relevant (so $\sigma_1^\ast = \sigma_1$) and determines the topology of $T[\underline{\sigma}]$.
\end{enumerate}
\end{obs}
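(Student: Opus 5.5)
The plan is to work directly with the construction of $T[\underline{\sigma}]$ from the nested complexes $T_0 = \cS, T_1, \ldots, T_k$, tracking the lineages of the gene backwards in time. The key invariant to establish is this: at any time $s$, the part of $T_r$ relevant to the gene consists of the ancestral lineages of $\{1,2,3\}$ present at time $s$.
\begin{itemize}
\item For $s<t_1$ these are three lineages, one on each of $e_1,e_2,e_3$.
\item After the first lineage merger (whether through a relevant transfer or through reaching $u$), there are two lineages.
\item After the second merger there is a single lineage, and from then on nothing further matters.
\end{itemize}
A transfer $\sigma_j=(p_j,p_j')$ is relevant exactly when both $p_j$ and $p_j'$ lie on currently surviving ancestral lineages. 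When that happens, the target lineage is redirected onto the source lineage, so the number of ancestral lineages drops by one.

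For part (a), I will show that each relevant transfer reduces the number of ancestral lineages by exactly one, and that a relevant transfer needs at least two lineages to be present at its time. Since we start with three lineages, at most two such reductions can occur before only one lineage remains. After that point no transfer can have both endpoints on surviving lineages, because two distinct arcs must be contemporaneous at the transfer time. Hence at most two transfers are relevant.

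For part (b), the first transfer $\sigma_1$ occurs before any other transfer, so $T_0=\cS$ contains both of its endpoints and $\sigma_1$ is trivially relevant. For the topology, I will split into two cases according to $t(\sigma_1)$.
\begin{itemize}
\item \textbf{Case $t(\sigma_1)<t_1$.} The transfer runs between two of $e_1,e_2,e_3$, say from the arc above leaf $i$ to the arc above leaf $j$. Tracing backwards, the lineages of $i$ and $j$ meet at time $t(\sigma_1)$ before any other merger can happen. Every later event (a second relevant transfer, or vertex $u$, or the root) only merges the resulting pair with the third leaf. The gene tree is therefore $((i,j),k)$, which is determined by $\sigma_1$.
\item \textbf{Case $t(\sigma_1)>t_1$.} No transfer occurs in $(0,t_1)$, so leaves $1$ and $2$ coalesce at $u$ first. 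The topology is then $((1,2),3)$, regardless of which later transfers occur.
\end{itemize}
The same reasoning covers the empty sequence. In every case the topology is a function of $\sigma_1$ alone, namely of which pair of arcs it connects and whether it precedes $t_1$.

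The main obstacle is making the lineage-counting invariant rigorous within the simplicial-complex formalism. The issue is to show that deleting the interval immediately above $p_j'$ for an irrelevant transfer does not alter the minimal connected subgraph containing $X$. I would handle this by induction on $r$, proving that the minimal subgraph of $T_r$ spanning $X$, restricted to times at most $t(\sigma_{r+1})$, is unchanged by irrelevant transfers. This works because an irrelevant transfer's target lies outside that subgraph, and its source is only ever used if its target was used.
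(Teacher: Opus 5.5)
Your core argument matches the paper's brief justification. The paper gets (a) from the fact that a binary tree on three leaves has only two interior vertices, which is your count $3\to 1$ with one merger per relevant transfer. It gets (b) from the fact that $\sigma_1$ fixes the cherry, which is your case split on $t(\sigma_1)$ versus $t_1$. Both arguments are sound.

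Your reformulation of relevance as ``both endpoints on current ancestral lineages'' is not literally the paper's ``both endpoints are points of $T_{j-1}$''. It is, however, the reading under which (a) is true. With deletions as drawn in Fig.~\ref{fig:transfer}(ii), take $\sigma_1=e_3\to e_1$ and $\sigma_2=e_3\to e_2$ in $(0,t_1)$. The arc above $u$ is then still part of $T_2$ although no lineage uses it. A transfer between it and $e_3$ in $(t_1,t_2)$ would therefore literally count as a third relevant transfer.

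The step that would fail is the rigorization in your last paragraph. It is false that an irrelevant transfer's target lies outside the current spanning subgraph, and false that irrelevant transfers leave that subgraph unchanged. Take $\sigma_1=e_1\to e_2$ at $h_1<t_1$ and $\sigma_2=e_2\to e_3$ at $h_2\in(h_1,t_1)$.
\begin{itemize}
\item The source of $\sigma_2$ lies on the part of $e_2$ above $h_1$, which no lineage uses any more, so $\sigma_2$ is irrelevant in your sense.
\item Its target, however, lies on leaf $3$'s lineage.
\item Tracing backwards as the paper prescribes (following every incoming transfer arc), leaf $3$'s lineage jumps onto $e_2$ at $h_2$ and reaches $u$. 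The second merger then happens at $t_1$ rather than $t_2$.
\item With the deletions of Fig.~\ref{fig:transfer}(ii), leaf $3$ even becomes disconnected from leaves $1$ and $2$ in $T_2$.
\end{itemize}
So your induction cannot go through as stated.

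The observation does not need that invariant. Backward tracing is single-valued, so an irrelevant transfer can at most relocate one lineage onto a point no lineage occupies; it never splits lineages and never merges two. A relevant transfer merges exactly two lineages. Hence the number of lineages is non-increasing as $t$ increases, drops by one at each relevant transfer, and merged lineages stay merged. This gives (a). It also gives (b): nothing happens before $t(\sigma_1)$, so the first merger is $\sigma_1$ if $t(\sigma_1)<t_1$ and $u$ otherwise.
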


Part (a) follows from the fact that a rooted binary tree on three leaves has exactly two interior vertices, so at most two transfers can affect the induced gene history. For part (b), the most recent transfer $\sigma_1$ determines which pair of leaves forms a cherry in $T[\underline{\sigma}]$, thereby fixing its topology (see also \cite{steel2013}). \medskip

\medskip
We next summarize all possible gene histories. For ease of notation, if $(p_i, p_i')$ is the transfer arc associated with $\sigma_i$, where $p_i$ lies on arc $e_j$ and $p_i'$ lies on arc $e_k$, we write $e_j \to e_k$.

\medskip
Each gene history is determined by:
\begin{enumerate}[(i)]
    \item the number of relevant transfers,
    \item their timing relative to $t_1$ and $t_2$, and
    \item the corresponding transfer arcs $(e_i \to e_j)$.
\end{enumerate}

This yields the following cases, where $g_1 < g_2$ denote the $t$-values of the two interior vertices of $T[\underline{\sigma}]$:

\begin{enumerate}
    \item \textbf{No relevant transfer.}\\
    If no transfer occurs, then $T[\underline{\sigma}] = T_1$, denoted $G_{1x}$, with interior times $g_1=t_1$, $g_2=t_2$.

    \item \textbf{One relevant transfer in $(0,t_1)$ and no other relevant transfers.}\\
    Let $\sigma_1^\ast$ occur at time $h_1 \in (0,t_1)$. The possible outcomes group into three cases, each corresponding to two symmetric transfer directions:
    \begin{itemize}
    \item Transfer $e_1 \to e_2$: $T_1$, history $G_{1a}$, $g_1 = h_1$, $g_2 = t_2$;
    \item Transfer $e_2 \to e_1$: $T_1$, history $G_{1b}$, $g_1 = h_1$, $g_2 = t_2$;
    \item Transfer $e_1 \to e_3$: $T_2$, history $G_{2a}$, $g_1 = h_1$, $g_2 = t_1$;
    \item Transfer $e_3 \to e_1$: $T_2$, history $G_{2b}$, $g_1 = h_1$, $g_2 = t_2$;
    \item Transfer $e_2 \to e_3$: $T_3$, history $G_{3a}$, $g_1 = h_1$, $g_2 = t_1$.
    \item Transfer $e_3 \to e_2$: $T_3$, history $G_{3b}$, $g_1 = h_1$, $g_2 = t_2$;
    \end{itemize}
    
    \item \textbf{Two relevant transfers in $(0,t_1)$.} \\
    Let $\sigma_1^\ast$ occur at time $h_1 \in (0,t_1)$ and $\sigma_2^\ast$ at time $h_2 \in (h_1,t_1)$. Since $\sigma_1^\ast$ determines the topology of $T[\sigma]$, we distinguish cases only by the direction of the first transfer arc. For each such case, there are two possible choices for the second transfer arc; however, these yield identical gene histories and are therefore not treated separately.
       \begin{itemize}
           \item Transfers between $e_1$ and $e_2$ (i.e., $e_1 \to e_2$ or $e_2 \to e_1$): $T_1$, history $G_{1c}$, $g_1=h_1$, $g_2=h_2$;
           \item Transfers between $e_1$ and $e_3$ (i.e., $e_1 \to e_3$ or $e_3 \to e_1$): $T_2$, history $G_{2c}$, $g_1=h_1$, $g_2=h_2$;
           \item Transfers between $e_2$ and $e_3$ (i.e., $e_2 \to e_3$ or $e_3 \to e_2$): $T_3$, history $G_{3c}$, $g_1=h_1$, $g_2=h_2$;
       \end{itemize}

   \item \textbf{First relevant transfer in $(0,t_1)$ and second relevant transfer in $(t_1,t_2)$.} \\
    Let $\sigma_1^\ast$ occur at time $h_1 \in (0,t_1)$ and $\sigma_2^\ast$ at time $h_2 \in (t_1,t_2)$. Then the possible outcomes are:
    \begin{itemize}
    \item Transfers between $e_1$ and $e_2$ (i.e., $e_1 \to e_2$ or $e_2 \to e_1$): $T_1$, $G_{1d}$, $g_1 = h_1$, $g_2 = h_2$;
    \item Transfer from $e_3$ to $e_1$: $T_2$, $G_{2d}$, $g_1 = h_1$, $g_2 = h_2$;
    \item Transfer from $e_3$ to $e_2$: $T_3$, $G_{3d}$, $g_1 = h_1$, $g_2 = h_2$.
    \end{itemize}
    Transfers of the form $e_1 \to e_3$ or $e_2 \to e_3$ cannot occur in this setting, as they would preclude the existence of a second relevant transfer in $(t_1,t_2)$.

    \item \textbf{First relevant transfer in $(t_1,t_2)$.}\\
    If the first relevant transfer occurs after $t_1$, then $T[\underline{\sigma}] = T_1$, denoted $G_{1y}$, with $g_1 = t_1$ and $g_2=h_1$.
\end{enumerate}

\subsection{Gene density derivations} \label{app:gene-densities}
We now derive the densities of the transfer times for all of the gene histories. We exclude the case of no relevant transfers, corresponding to $G_{1x}$, from the statement, as it corresponds to a point mass rather than a density; its probability is given by $e^{-\lambda (t_1+2t_2)}$.

\begin{prop}\label{prop:gene-tree-densities}
    Let $\cS$ be the $3$-taxon species tree from Fig.~\ref{fig:transfer}(i), and let $\boldsymbol{t} = (t_1,t_2)$ with $t_1 < t_2$ denote the $t$-values of its two interior vertices. Further, let $\underline{\sigma} = (\sigma_1, \ldots, \sigma_k)$ be a transfer sequence on $\cS$ with relevant transfers $\sigma_1^\ast$ and $\sigma_2^\ast$.
    Let $\boldsymbol{h} = (h_1, h_2)$ be the $t$-values of $\sigma_1^\ast$ and $\sigma_2^\ast$. 
    Under the standard LGT model with transfer rate $\lambda$, the joint densities for a gene history and transfer times given $\mathcal{S}$, $\boldsymbol{t}$, and $\lambda$ fall into the following cases:

    \begin{enumerate}[(i)]

    \item \textbf{One relevant transfer in $(0,t_1)$ and no other relevant transfers:}
        \begin{align*}
        f_{G_{1a}}(\boldsymbol{h}|\cS,\boldsymbol{t},\lambda) &= \frac{1}{2} \lambda e^{- \lambda (2 t_2 + h_1)} \quad \text{with} \quad 0 < h_1 < t_1, 
        \end{align*}
        with identical expressions for $G_{1b}$, $G_{2a}$, $G_{2b}$, $G_{3a}$, and $G_{3b}$.
   
    \item \textbf{Two relevant transfers in $(0,t_1)$:}
        \begin{align*}
            f_{G_{1c}}(\boldsymbol{h}|\cS,\boldsymbol{t},\lambda) &= 2 \lambda^2 e^{-\lambda (3h_1 + 2h_2)} \quad \text{with} \quad 0 < h_1 < t_1 \text{ and } 0 < h_2 < t_1-h_1, 
        \end{align*}
        with identical expressions for $G_{2c}$ and $G_{3c}$.

    \item \textbf{First relevant transfer in $(0,t_1)$ and second relevant transfer in $(t_1,t_2)$:}
        \begin{align*}
             f_{G_{1d}}(\boldsymbol{h}|\cS,\boldsymbol{t},\lambda) &= 2 \cdot \lambda^2 e^{- \lambda (2t_1 + h_1 + 2 h_2)} \quad \text{with} \quad 0 < h_1 < t_1 \text{ and } 0 < h_2 < t_2-t_1. \\
             f_{G_{2d}}(\boldsymbol{h}|\cS,\boldsymbol{t},\lambda) &= f(G_{3d}, \boldsymbol{h}|\cS,\boldsymbol{t},\lambda) = \lambda^2 e^{- \lambda (2t_1 + h_1 + 2 h_2)}
        \quad \text{with} \quad 0 < h_1 < t_1 \text{ and } 0 < h_2 < t_2-t_1.
        \end{align*}

    \item \textbf{First relevant transfer in $(t_1,t_2)$:}
    \begin{align*}
         f_{G_{1y}}(\boldsymbol{h}|\cS,\boldsymbol{t},\lambda) &=  2 \lambda e^{-\lambda (3t_1 + 2 h_1)} \quad \text{with}  \quad 0 < h_1 < t_2 - t_1.
    \end{align*}
\end{enumerate}
\end{prop}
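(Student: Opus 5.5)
The plan is to compute each density directly from the Poisson structure of the standard LGT model. I will condition on the time slices $(0,t_1)$ and $(t_1,t_2)$ and use independent increments. Within a slice containing $m$ contemporaneous arcs, transfers arrive at total rate $m\lambda$. Each event's donor/recipient pair is an ordered pair of distinct arcs, chosen uniformly (I take this as the reading of "uniformly at random" in the model). So on $(0,t_1)$, with arcs $e_1,e_2,e_3$, a specified ordered pair $e_i\to e_j$ is hit at rate $\lambda/2$. On $(t_1,t_2)$, with the two arcs $e_{12}$ (above $u$) and $e_3$, a specified direction is hit at rate $\lambda$.

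The key tool is Poisson thinning. At every moment, the transfers split into \emph{relevant} ones (both endpoints in the current $T_{j-1}$) and irrelevant ones. By Observation~\ref{obs:sigma1}, once the history up to the last relevant transfer is fixed, the set of ordered pairs that would be relevant is deterministic. The relevant and irrelevant streams are therefore independent Poisson processes with explicitly computable rates, and irrelevant transfers integrate out to factor $1$. Each joint density then has the form
\[
(\text{rate of the specified first transfer})\cdot e^{-(\text{total rate})h_1}\cdot(\text{rate of the specified second relevant transfer, if any})\cdot e^{-\int(\text{relevant rate})}.
\]
The last exponential is the probability of no further relevant transfer from $h_1$ (or $h_2$) up to $t_2$.

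I will carry this out case by case, following the classification in Appendix~\ref{app:classification}.
\begin{enumerate}[(i)]
\item For $G_{1a}$, the first factor is $\tfrac{\lambda}{2}e^{-3\lambda h_1}$. I then determine which ordered pairs remain relevant on $(h_1,t_1)$ and on $(t_1,t_2)$, and multiply by the corresponding survival probabilities. The symmetric cases $G_{1b},G_{2a},G_{2b},G_{3a},G_{3b}$ follow by the same computation with relabelled arcs. For each of them I check that the number of surviving relevant pairs per slice is the same as for $G_{1a}$, which yields identical expressions.
\item For $G_{1c},G_{2c},G_{3c}$, I insert a second relevant transfer in $(h_1,t_1)$. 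Both directions of it are merged, which produces a factor $2$. I then require no further relevant event; after two relevant transfers the gene history is complete, so only the elapsed interval contributes. The parametrisation $0<h_2<t_1-h_1$ indicates that $h_2$ is recorded as a waiting time after $h_1$, and I will adopt that convention.
\item For $G_{1d}$, $G_{2d}$ and $G_{3d}$, the first transfer lies in $(0,t_1)$ and must leave two lineages alive above $t_1$. The second relevant transfer lies in $(t_1,t_2)$ and is one of the two directions between $e_{12}$ and $e_3$. For $G_{1d}$ either first direction $e_1\to e_2$ or $e_2\to e_1$ is allowed, which gives the extra factor $2$ relative to $G_{2d}$ and $G_{3d}$.
\item For $G_{1y}$, I take the probability $e^{-3\lambda t_1}$ that $(0,t_1)$ contains no transfer at all, multiplied by the density $2\lambda e^{-2\lambda h_1}$ of the first event in $(t_1,t_2)$ with $h_1$ measured from $t_1$. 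This gives $2\lambda e^{-\lambda(3t_1+2h_1)}$ directly.
\end{enumerate}
As a consistency check, I will verify that integrating all densities and adding the point mass of $G_{1x}$ gives $1$. I will also check that summing by topology reproduces Proposition~\ref{prop:gene-tree-probs-3taxa-main}.

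The main obstacle is step (i): deciding exactly which transfers remain relevant after the first one. This depends on the convention "delete the interval immediately above each target point". After $e_1\to e_2$ at $h_1$, the arc $e_2$ above $h_1$ is removed, so transfers into or out of that piece are irrelevant. Transfers between $e_1$ and $e_3$ (both directions), and everything on $(t_1,t_2)$, remain relevant. A naive count of these rates gives an exponent that does not obviously match $\lambda(2t_2+h_1)$. I would therefore first redo the bookkeeping carefully with the precise definition of $T_{j-1}$, including the time reparametrisation used for $h_2$, before committing to the stated form. If needed, I would fix the convention by matching the total mass to Proposition~\ref{prop:gene-tree-probs-3taxa-main}.
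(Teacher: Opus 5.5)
The overall structure matches the paper's proof: arc-choice rate, times an exponential waiting density, times a no-further-transfer probability. But you leave open exactly the step that fixes the formulas, and your fallback cannot close it.

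Your suspicion in the last paragraph is justified. With $\lambda/2$ per ordered pair on $(0,t_1)$, suppose $e_1\to e_2$ occurs at $h_1$. The only later transfers on $(h_1,t_1)$ whose source and target both survive the deletion above $p_1'$ are $e_1\to e_3$ and $e_3\to e_1$, of total rate $\lambda$. Carried out correctly, your computation therefore gives:
\begin{itemize}
\item $\frac{\lambda}{2}e^{-3\lambda h_1}e^{-\lambda(t_1-h_1)}e^{-2\lambda(t_2-t_1)}$ for $G_{1a}$;
\item $\lambda^2e^{-\lambda(3h_1+h_2)}$ for $G_{1c}$;
\item a factor $e^{-\lambda(t_1-h_1)}$ rather than $e^{-2\lambda(t_1-h_1)}$ in $G_{1d},G_{2d},G_{3d}$.
\end{itemize}
None of these are the stated expressions. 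Your ``factor $2$'' in (ii) tacitly gives each direction rate $\lambda$, which contradicts your own set-up.

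The paper gets the stated densities from a different rule. It takes the waiting time to the next relevant transfer to be $\mathrm{Exp}(c\lambda)$, with $c$ the number of available recipient arcs. So after the first transfer the relevant rate is $2\lambda$ on all of $(h_1,t_2)$, which produces $e^{-2\lambda(t_2-h_1)}$, $e^{-2\lambda(t_1-h_1)}$ and $2\lambda e^{-2\lambda h_2}$ directly.

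You must adopt that rule explicitly. It cannot be recovered by matching total mass or Proposition~\ref{prop:gene-tree-probs-3taxa-main}. Both families integrate to $1$, and both give $\mathbb{P}(T_1)=\frac13+\frac23e^{-3\lambda t_1}$. The reason is that the topology depends only on the first transfer (Observation~\ref{obs:sigma1}), whose law is identical under both conventions. The two families differ only in the conditional law of $g_2$, which is exactly what the site-pattern probabilities depend on.

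Your claim that $G_{2a}$ and $G_{3a}$ have the same surviving relevant pairs as $G_{1a}$ is false. Relabelling only swaps $e_1$ and $e_2$, never $e_3$. After $e_1\to e_3$ at $h_1$, the part of $e_3$ above $h_1$ is deleted. The lineages on $e_1$ and $e_2$ then meet at $u$ at time $t_1$ (hence $g_2=t_1$ in the classification), so there is no survival factor on $(t_1,t_2)$.

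Under the paper's rule the density is $\frac16\cdot 3\lambda e^{-3\lambda h_1}\cdot e^{-2\lambda(t_1-h_1)}=\frac12\lambda e^{-\lambda(2t_1+h_1)}$. This is exactly what the paper integrates for $G_{2a}$ in the proof of Proposition~\ref{prop:gene-tree-probs}. If you instead use the stated $\frac12\lambda e^{-\lambda(2t_2+h_1)}$ for $G_{2a}$ and $G_{3a}$, the densities plus the point mass of $G_{1x}$ have total mass $1-(1-e^{-\lambda t_1})(e^{-2\lambda t_1}-e^{-2\lambda t_2})<1$. So your planned normalisation check would fail. The ``identical expressions'' clause can only be proved for $G_{1b},G_{2b},G_{3b}$.
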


\begin{proof}
All densities are obtained by combining:
\begin{enumerate}[(i)]
    \item the choice of transfer arc,
    \item exponential waiting-time densities for relevant transfers, and
    \item probabilities of no transfer over specified intervals.
\end{enumerate}
If $X$ denotes the waiting time to the next relevant transfer, then $X \sim \mathrm{Exp}(c\lambda)$, where $c\in\{2,3\}$ is the number of available recipient arcs. Thus,
\[ \mathbb{P}(X>t)=e^{-c\lambda t}.\]

\medskip
We illustrate the calculation for representative cases.
\begin{itemize}
    \item \textbf{One relevant transfer in $(0,t_1)$ and no other relevant transfers:}\\
    Consider gene history $G_{1a}$. The density of its transfer times is the product of the probability of the transfer arc $e_1 \to e_2$, given by $1/6$, the density of the time of the first transfer, given by $3 \lambda e^{-3 \lambda h_1}$ with $0 < h_1 < t_1$, and the probability of no transfer in $(h_1, t_2)$, given by $e^{-2 \lambda (t_2-h_1)}$.  In summary,
    \begin{align*}
    f_{G_{1a}}(\boldsymbol{h}|\cS,\boldsymbol{t},\lambda) &= \frac{1}{6} \cdot 3 \lambda e^{-3 \lambda h_1} \cdot e^{-2 \lambda (t_2-h_1)} = \frac{1}{2} \lambda e^{- \lambda (2 t_2 + h_1)} \quad \text{with} \quad 0 < h_1 < t_1.
    \end{align*}
    Analogous reasoning applies to $G_{1b}$, $G_{2a}$, $G_{2b}$, $G_{3a}$, and $G_{3b}$.

    \item \textbf{Two relevant transfers in $(0,t_1)$:}\\
    Consider gene history $G_{1c}$. The density of its transfer times is the product of the probability of the first transfer arc between $e_1$ and $e_2$, given by $1/6 + 1/6 = 1/3$, the density of the first transfer, given by $3 \lambda e^{- 3 \lambda h_1}$ with $0 < h_1 < t_1$, and the density of the second transfer, given by $2 \lambda e^{-2 \lambda h_2}$ with $0 < h_2 < t_1-h_1$ (notice that the direction of the second transfer is irrelevant in this case). In summary,
       \begin{align*}
         f_{G_{1c}}(\boldsymbol{h}|\cS,\boldsymbol{t},\lambda)
         &= \frac{1}{3} \cdot 3 \lambda e^{-3 \lambda h_1} \cdot 2 \lambda e^{-2 \lambda h_2}
          = 2 \lambda^2 e^{-\lambda (3h_1 + 2h_2)}, \quad 0 < h_1 < t_1 \text{ and } 0 < h_2 < t_1-h_1.
     \end{align*}
     Analogous reasoning applies to $G_{2c}$ and $G_{3c}$.

     \item \textbf{First relevant transfer in $(0,t_1)$ and second relevant transfer in $(t_1,t_2)$:}\\
     First, consider $G_{1d}$. The density of its transfer times is the product of the probability of the first transfer arc $e_1 \leftrightarrow e_2$ (i.e., $e_1 \to e_2$ or $e_2 \to e_1)$, given by $1/6 + 1/6 = 1/3$, the density of the first transfer, given by $3 \lambda e^{-3 \lambda h_1}$ with $0 < h_1 < t_1$, the probability of no transfer in $(h_1,t_1)$, given by $e^{-2 \lambda (t_1-h_1)}$, and the density of the second transfer (whose direction is irrelevant), given by $2 \lambda e^{-2 \lambda h_2}$ with $0 < h_2 < t_2-t_1$. This gives
     \begin{align*}
           f_{G_{1d}}(\boldsymbol{h}|\cS,\boldsymbol{t},\lambda)
         &= \frac{1}{3} \cdot 3 \lambda e^{-3 \lambda h_1} \cdot e^{-2 \lambda (t_1-h_1)} \cdot 2 \lambda e^{-2 \lambda h_2} 
          = 2 \lambda^2 e^{- \lambda (2t_1 + h_1 + 2 h_2)},\\
         &\qquad 0 < h_1 < t_1 \text{ and } 0 < h_2 < t_2-t_1. 
     \end{align*}
     For $G_{2d}$ and $G_{3d}$, the argument is analogous, except that only one direction of the first transfer is feasible (see Appendix~\ref{app:classification}), so the probability of the corresponding transfer arc is $1/6$ rather than $1/3$. Hence,
        \begin{align*}
        f_{G_{2d}}(\boldsymbol{h}\mid \cS,\boldsymbol{t},\lambda)
        = \lambda^2 e^{- \lambda (2t_1 + h_1 + 2 h_2)}, 
        \quad 0 < h_1 < t_1 \text{ and } 0 < h_2 < t_2-t_1,
        \end{align*}
        with an identical expression for $G_{3d}$.

      \item \textbf{First relevant transfer in $(t_1,t_2)$:}\\
      This density is the product of no transfer in $(0,t_1)$, given by $e^{-3 \lambda t_1}$ and the density of the time of the first transfer, given by $2 \lambda e^{-2 \lambda h_1}$ with $0 < h_1 < t_2-t_1$. In this case, the direction of this first transfer arc is irrelevant. In summary,
    \begin{align*}
          f_{G_{1y}}(\boldsymbol{h}|\cS,\boldsymbol{t},\lambda) &= e^{-3 \lambda t_1} \cdot 2 \lambda e^{-2 \lambda h_1} 
         = 2 \lambda e^{-\lambda (3t_1 + 2 h_1)} \quad \text{ with }  \quad 0 < h_1 < t_2 - t_1.
     \end{align*}
\end{itemize}
\end{proof}

\subsection{Gene tree probabilities} \label{app:gene-probs}
The probabilities of the three gene trees $T_1, T_2$, and $T_3$, can be obtained by summing over gene histories and for each history, integrating over the transfer time densities from Proposition~\ref{prop:gene-tree-densities}. This leads to the following proposition.

\begin{prop} \label{prop:gene-tree-probs}
      Let $\cS$ be the $3$-taxon species tree from Fig.~\ref{fig:transfer}(i), and let $\boldsymbol{t} = (t_1,t_2)$ with $t_1 < t_2$ denote the $t$-values of its two interior vertices.
    Under the standard LGT model with transfer rate $\lambda$, the probabilities of the three gene tree topologies $T_1 = ((1,2),3)$, $T_2 = ((1,3),2)$, and $T_3 = ((2,3),1)$ are given by
    \begin{align*}
         \mathbb{P}(T_1 | \mathcal{S},\boldsymbol{t}, \lambda) &= \frac{1}{3} + \frac{2}{3} e^{-3 \lambda t_1},\\
     \mathbb{P}(T_2 | \mathcal{S},\boldsymbol{t}, \lambda) &=   \mathbb{P}(T_3 | \mathcal{S},\boldsymbol{t}, \lambda) = \frac{1}{3} - \frac{1}{3} e^{-3 \lambda t_1}.
    \end{align*}
    In particular, the gene tree topology $T_1$ matching the species tree $\cS$ has the highest probability.
\end{prop}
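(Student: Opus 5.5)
The plan is to obtain each topology probability by summing, over the gene histories of Appendix~\ref{app:classification} that produce that topology, the integral of the corresponding density from Proposition~\ref{prop:gene-tree-densities} (plus the point mass $e^{-\lambda(t_1+2t_2)}$ of $G_{1x}$ for $T_1$). By Observation~\ref{obs:sigma1}(b), the topology is fixed by the first transfer, so $T_2$ collects $G_{2a},G_{2b},G_{2c},G_{2d}$, symmetrically $T_3$ collects the subscript-$3$ histories, and $T_1$ collects $G_{1x},G_{1a},G_{1b},G_{1c},G_{1d},G_{1y}$.

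For $T_2$ I will compute the four integrals separately:
\begin{align*}
\int_0^{t_1}\tfrac12\lambda e^{-\lambda(2t_2+h_1)}\,dh_1 &= \tfrac12 e^{-2\lambda t_2}(1-e^{-\lambda t_1}) \quad (\text{each of } G_{2a},G_{2b}),\\
\int_0^{t_1}\!\int_0^{t_1-h_1} 2\lambda^2 e^{-\lambda(3h_1+2h_2)}\,dh_2\,dh_1 &= \int_0^{t_1}\lambda e^{-3\lambda h_1}\bigl(1-e^{-2\lambda(t_1-h_1)}\bigr)dh_1 = \tfrac13(1-e^{-3\lambda t_1}) - e^{-2\lambda t_1}(1-e^{-\lambda t_1}),\\
\int_0^{t_1}\!\int_0^{t_2-t_1}\lambda^2 e^{-\lambda(2t_1+h_1+2h_2)}\,dh_2\,dh_1 &= \tfrac12 e^{-2\lambda t_1}(1-e^{-\lambda t_1})(1-e^{-2\lambda(t_2-t_1)}).
\end{align*}
Adding the two $G_{2a},G_{2b}$ terms to the $G_{2d}$ term gives $e^{-2\lambda t_1}(1-e^{-\lambda t_1})$, since the $e^{-2\lambda t_2}$ pieces cancel exactly. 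This cancels the negative part of the $G_{2c}$ term, leaving $\frac13-\frac13e^{-3\lambda t_1}$. The $T_3$ computation is identical because the densities coincide.

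For $T_1$ I will use $\mathbb{P}(T_1)=1-\mathbb{P}(T_2)-\mathbb{P}(T_3)=\frac13+\frac23e^{-3\lambda t_1}$. This requires the classification to be exhaustive and the densities to integrate to one in total, which follows from the construction of the histories. As a sanity check I would also sum the $T_1$ contributions directly. $G_{1x}$ and $G_{1y}$ give $e^{-3\lambda t_1}$, the $G_{1a},G_{1b}$ terms match those of $T_2$, $G_{1c}$ matches $G_{2c}$, and $G_{1d}$ is twice $G_{2d}$. Then $e^{-2\lambda t_2}(1-e^{-\lambda t_1})+e^{-2\lambda t_1}(1-e^{-\lambda t_1})(1-e^{-2\lambda(t_2-t_1)})$ collapses to $e^{-2\lambda t_1}(1-e^{-\lambda t_1})$, which again cancels against $G_{1c}$. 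The final inequality is immediate: $\frac13+\frac23e^{-3\lambda t_1}>\frac13-\frac13e^{-3\lambda t_1}$ since $e^{-3\lambda t_1}>0$.

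The main obstacle is bookkeeping. The $t_2$-dependent terms must cancel exactly; if they do not, that signals a misassigned history or density factor (for example the $1/6$ versus $1/3$ arc weights). The double integral for the $c$-histories must also use the stated range $h_2<t_1-h_1$ (i.e.\ $h_2$ measured as the increment after $h_1$) consistently.
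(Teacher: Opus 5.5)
\textbf{There is a genuine gap: the $T_2$ computation fails at the asserted cancellation.} You give both $G_{2a}$ and $G_{2b}$ the density $\frac12\lambda e^{-\lambda(2t_2+h_1)}$. With that choice the three terms add up to
\[
e^{-2\lambda t_2}\bigl(1-e^{-\lambda t_1}\bigr)+\tfrac12\bigl(1-e^{-\lambda t_1}\bigr)\bigl(e^{-2\lambda t_1}-e^{-2\lambda t_2}\bigr)=\tfrac12\bigl(1-e^{-\lambda t_1}\bigr)\bigl(e^{-2\lambda t_1}+e^{-2\lambda t_2}\bigr),
\]
not $e^{-2\lambda t_1}(1-e^{-\lambda t_1})$. $G_{2d}$ carries only half the weight of $G_{1d}$, so only half of the $e^{-2\lambda t_2}$ piece is cancelled. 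This is exactly where your $T_2$ sum departs from your (correct) $T_1$ sum.

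Carried through, you would get $\mathbb{P}(T_2)=\frac13-\frac13e^{-3\lambda t_1}-\frac12(1-e^{-\lambda t_1})(e^{-2\lambda t_1}-e^{-2\lambda t_2})$. This depends on $t_2$, and the three topology probabilities would total $1-(1-e^{-\lambda t_1})(e^{-2\lambda t_1}-e^{-2\lambda t_2})<1$. So the complement route to $\mathbb{P}(T_1)$ also fails: its premise that the densities integrate to one is false for the densities you used.

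\textbf{The misassigned factor is in $G_{2a}$ (and symmetrically $G_{3a}$).} Proposition~\ref{prop:gene-tree-densities} as stated does list $G_{2a}$ among the histories sharing the $G_{1a}$ expression. But the classification gives $G_{2a}$ (transfer $e_1\to e_3$) the value $g_2=t_1$: the merged $(1,3)$ lineage sits on $e_1$ and meets lineage $2$ at $u$. The survival factor is therefore $e^{-2\lambda(t_1-h_1)}$ (no relevant transfer in $(h_1,t_1)$), not $e^{-2\lambda(t_2-h_1)}$. The correct density is
\[
\tfrac16\cdot 3\lambda e^{-3\lambda h_1}\cdot e^{-2\lambda(t_1-h_1)}=\tfrac12\lambda e^{-\lambda(2t_1+h_1)},
\]
with integral $\frac12e^{-2\lambda t_1}(1-e^{-\lambda t_1})$. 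This is the expression the paper's proof actually integrates for one of the two single-transfer $T_2$ histories.

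With this correction the computation closes. First, $G_{2b}+G_{2d}=\frac12e^{-2\lambda t_1}(1-e^{-\lambda t_1})$; here the $t_2$ terms genuinely cancel. Adding $G_{2a}$ gives $e^{-2\lambda t_1}(1-e^{-\lambda t_1})$, and your $G_{2c}$ term then yields $\frac13-\frac13e^{-3\lambda t_1}$.

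Your direct $T_1$ sum is already correct and matches the paper's, so use it rather than the complement. If you want to avoid the bookkeeping altogether, use Observation~\ref{obs:sigma1}(b). The topology is $T_1$ when no transfer occurs in $(0,t_1)$, which has probability $e^{-3\lambda t_1}$. Otherwise it is determined by the unordered pair of the first transfer, which is uniform over the three pairs.
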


We note that this result is already established in the literature~\cite[Proposition~9.4]{steel2016}. As the proof in~\cite{steel2016} does not rely on gene history densities, we include an alternative proof below.

\begin{proof}[Proof of Proposition~\ref{prop:gene-tree-probs}]
To obtain the probability of gene tree $T_i$ with $i \in \{1,2,3\}$, we integrate the transfer time densities from Proposition~\ref{prop:gene-tree-densities}, summing over all gene histories whose topologies are isomorphic to that of $T_i$. The no-transfer case contributes a point mass and is included separately, while all other cases are integrated over their transfer times. \medskip
  
    First, for $T_1$, we have
        \begin{align*}
            \mathbb{P}(T_1 | \mathcal{S},\boldsymbol{t}, \lambda) 
            &=   e^{-\lambda (t_1 + 2 t_2)}  
            + 2 \cdot \int\limits_{h_1 = 0}^{t_1} \frac{1}{2} \lambda e^{ - \lambda (2t_2 + h_1)} \, dh_1
            + \int\limits_{h_1 = 0}^{t_1} \int\limits_{h_2=0}^{t_1-h_1} 2 \lambda^2 e^{-\lambda (3h_1+2h_2)} \, dh_2 \, dh_1\\
            &\qquad 
            + \int\limits_{h_1=0}^{t_1} \int\limits_{h_2=0}^{t_2-t_1} 2 \lambda^2 e^{-\lambda (2t_1 + h_1 + 2h_2)} \, dh_2 \, dh_1
            + \int\limits_{h_1=0}^{t_2 - t_1} 2 \lambda e^{- \lambda(3t_1 + 2h_1)} \, dh_1 \\
            &=  e^{-\lambda (t_1 + 2 t_2)} + \left( e^{-2 \lambda t_2} - e^{-\lambda (t_1+2t_2)} \right)
            + \left( \frac{1}{3} + \frac{2}{3} e^{- 3 \lambda t_1} - e^{- 2 \lambda t_1} \right)\\
            &\qquad + \left( - e^{- 3 \lambda t_1} + e^{-2 \lambda t_1} - e^{- 2 \lambda t_2} + e^{- \lambda (t_1+2t_2)} \right) + \left(e^{-3 \lambda t_1} - e^{- \lambda (t_1+2t_2)} \right)\\
            &= \frac{1}{3} + \frac{2}{3} e^{- 3 \lambda t_1}.\\
        \end{align*}

    Similarly, for $T_2$ and $T_3$, \smallskip
    \begin{align*}
         \mathbb{P}(T_2 | \mathcal{S},\boldsymbol{t}, \lambda) &=  \mathbb{P}(T_3 | \mathcal{S},\boldsymbol{t}, \lambda) \\
         &= \int\limits_{h_1=0}^{t_1} \frac{1}{2} \lambda e^{-\lambda (2t_1+h_1)} \, dh_1
         + \int\limits_{h_1=0}^{t_1} \frac{1}{2} \lambda e^{-\lambda (2t_2+h_1)} \, dh_1
         + \int\limits_{h_1=0}^{t_1} \int\limits_{h_2=0}^{t_1-h_1} 2 \lambda^2 e^{-\lambda (3h_1 + 2h_2)} \, dh_2 \, dh_1 \\
         &\qquad + \int\limits_{h_1=0}^{t_1} \int\limits_{h_2=0}^{t_2-t_1} \lambda^2 e^{- \lambda (2t_1 + h_1 + 2h_2)} \, dh_2 \, dh_1 \\
         &= \left( - \frac{1}{2} e^{- 3 \lambda t_1} + \frac{1}{2} e^{- 2 \lambda t_1} \right) + \left( \frac{1}{2} e^{-2 \lambda t_2} - \frac{1}{2} e^{-\lambda (t_1+2t_2)} \right) + \left( \frac{1}{3} + \frac{2}{3} e^{-3 \lambda t_1} - e^{-2 \lambda t_1} \right) \\
         &\qquad + \left( - \frac{1}{2} e^{- 3 \lambda t_1} + \frac{1}{2} e^{- 2 \lambda t_1} - \frac{1}{2} e^{- 2 \lambda t_2} + \frac{1}{2} e^{- \lambda (t_1 + 2t_2)} \right) \\
         &= \frac{1}{3} - \frac{1}{3} e^{-3 \lambda t_1}. \\
    \end{align*}
    
     Clearly, $\mathbb{P}(T_1 | \mathcal{S},\boldsymbol{t}, \lambda) > \mathbb{P}(T_2 | \mathcal{S},\boldsymbol{t}, \lambda) =  \mathbb{P}(T_3 | \mathcal{S},\boldsymbol{t}, \lambda)$, which completes the proof.
\end{proof}

\subsection{Site pattern probability derivations} \label{app:site-pattern-probabilities}

Consider the gene history $G_{1x}$ and let $\boldsymbol{g} = (g_1,g_2)$ with $g_1 < g_2$ denote the $t$-values of its two interior vertices. Define
\begin{align}
    d_0 &= \frac{1}{4} + \frac{3}{4} e^{- g_1 \mu}, \quad d_1 = \frac{1}{4} - \frac{1}{4} e^{- g_1 \mu}, \quad 
    b_0 = \frac{1}{4} + \frac{3}{4} e^{\mu(g_2-g_1)}, \quad \text{and} \quad b_1 = \frac{1}{4} - \frac{1}{4}e^{- (g_2-g_1) \mu}. \label{def:d0d1b0b1}
\end{align}
Then, we have (see also~\cite{garcia2007}):
\begin{align*}
    \mathbb{P}(xxx| G_{1x}, \boldsymbol{g}) &= b_0^2 d_0^3 + 3 b_0^2 d_1^3 + 6 b_0 b_1 d_0^2 d_1 + 6 b_0 b_1 d_0 d_1^2 + 12 b_0 b_1 d_1^3 + 3 b_1^2 d_0^3 + 6 b_1^2 d_0^2 d_1 + 6 b_1^2 d_0 d_1^2 + 21 b_1^2 d_1^3; \\
    \mathbb{P}(xxy|  G_{1x}, \boldsymbol{g}) &= 6 b_0^2 d_0^2 d_1+6 b_0^2 d_0 d_1^2+12 b_0^2 d_1^3+12 b_0 b_1 d_0^2 d_1+84 b_0 b_1 d_0 d_1^2+48 b_0 b_1 d_1^3+30 b_1^2 d_0^2 d_1 \\
    &\qquad +102 b_1^2 d_0 d_1^2+84 b_1^2 d_1^3;\\
    \mathbb{P}(xyx|  G_{1x}, \boldsymbol{g}) &= \frac{1}{2} \left(3 b_0^2 d_0^2 d_1+3 b_0^2 d_0 d_1^2+6 b_0^2 d_1^3+6 b_0 b_1 d_0^3+12 b_0 b_1 d_0^2 d_1+12 b_0 b_1 d_0 d_1^2+42 b_0 b_1 d_1^3 \right. \\
    &\left. \qquad +6 b_1^2 d_0^3+21 b_1^2 d_0^2 d_1+21 b_1^2 d_0 d_1^2+60 b_1^2 d_1^3\right);\\
    \mathbb{P}(xyy|  G_{1x}, \boldsymbol{g}) &= \mathbb{P}(xyx|  G_{1x}, \boldsymbol{g});  \\
    \mathbb{P}(xyz|  G_{1x}, \boldsymbol{g}) &= 18 b_0^2 d_0 d_1^2+6 b_0^2 d_1^3+24 b_0 b_1 d_0^2 d_1+60 b_0 b_1 d_0 d_1^2+60 b_0 b_1 d_1^3+24 b_1^2 d_0^2 d_1+114 b_1^2 d_0 d_1^2+78 b_1^2 d_1^3.
\end{align*}

Now, the site pattern probabilities for any of the other gene histories are obtained analogously, accounting for symmetry and adjusting the $g$-values in \eqref{def:d0d1b0b1} according to the gene history classification in Appendix~\ref{app:classification}. For example, for $G_{1a}$, we have $g_1 = h_1$ and $g_2 = t_2$. 

By averaging the conditional site pattern probabilities over the LGT distribution of gene histories and transfer times (Proposition~\ref{prop:gene-tree-densities}), we obtain the marginal site pattern probabilities $\boldsymbol{\overline{p}} = (p_{xxx}, p_{xxy}, p_{xyx}, p_{yxx}, p_{xyz})$. 

For example, let $\mathcal{G}$ denote the set of all gene histories classified in Appendix~\ref{app:classification}. Then
\begin{align*}
p_{xxx}
&= \mathbb{P}(xxx| G_{1x}, \boldsymbol{g}) \cdot e^{-\lambda(t_1+2t_2)} + \sum_{G \in \mathcal{G} \setminus \{G_{1x}\}} \iint \mathbb{P}(xxx | G, \boldsymbol{g}) \,
f_{G}(\boldsymbol{h} | \cS, \boldsymbol{t}, \lambda)\, d\boldsymbol{h}.
\end{align*}

Here, $\cS$ is the $3$-taxon species tree from Fig.~\ref{fig:transfer}(i), with interior times $\boldsymbol{t} = (t_1, t_2)$ satisfying $t_1 < t_2$. For a given gene history $G$, $\boldsymbol{g} = (g_1,g_2)$ denotes the interior vertex times of $G$, and $\boldsymbol{h} = (h_1,h_2)$ the times of the relevant transfers (when present). The limits of integration are as specified in Proposition~\ref{prop:gene-tree-densities}.

Note that the no-transfer case $G_{1x}$ contributes a point mass, and is therefore treated separately, while for all other gene histories we integrate over their corresponding transfer-time densities.

\medskip
Performing all calculations with the computer algebra system \texttt{Mathematica}~\cite{mathematica}, we obtain the following expressions for the marginal site pattern probabilities.

\begin{prop}\label{prop:site-pattern-probabilities} 
Let $\cS$ be the $3$-taxon species tree from Fig.~\ref{fig:transfer}(i), and let $\boldsymbol{t} = (t_1,t_2)$ with $t_1 < t_2$ denote the $t$-values of its two interior vertices. 
Under the standard LGT model with transfer rate $\lambda$, the marginal site pattern probabilities under the JC69 model with $\mu = 4/3$ are given by
\allowdisplaybreaks
\begin{align*}
    p_{xxx} &= \frac{(9 \lambda +2) (3 \lambda  (9 \lambda +10)+4)}{(3 \lambda +4) (9 \lambda +4) (9 \lambda +8)}+\frac{3}{8} e^{-\frac{1}{3} (3 \lambda +4) (3 t_1 +2 t_2)} 
    \left(\frac{3 \lambda  e^{2 \lambda  t_1+\frac{20 t_1}{3}}}{3 \lambda +4}+\frac{3 (3 \lambda -4) \lambda  e^{\frac{2}{3} (3 \lambda +8) t_1}}{(3 \lambda +4)^2} \right. \\
    & \left. \qquad -\frac{12 (3 \lambda +2) \lambda  e^{3 \lambda t_1+\frac{20 t_1}{3}}}{(3 \lambda +4)^2}-e^{2 (\lambda +2) t_1}+\frac{(4-3 \lambda ) e^{2 \lambda  t_1+\frac{8 t_1}{3}}}{3 \lambda +4}+\frac{4 (3 \lambda +2) e^{(3 \lambda +4) t_1}}{3 \lambda +4} \right. \\
    & \left. \qquad +\frac{3 (3 \lambda +16) \lambda  e^{4 t_1+2 \lambda  t_2+\frac{8 t_2}{3}}}{(3 \lambda -8) (3 \lambda +4)}+\frac{3 (3 \lambda  (3 \lambda  (9 \lambda +68)+80)-64) \lambda  e^{\frac{8 (t_1+t_2)}{3}+2 \lambda  t_2}}{(3 \lambda -4) (3 \lambda +4)^2 (9 \lambda +4)} \right.\\
    & \left. \qquad +\frac{12 (3 \lambda +2) \lambda  e^{(\lambda +4) t_1+2 \lambda  t_2+\frac{8 t_2}{3}}}{(3 \lambda +4)^2}-\frac{4 (3 \lambda  (3 \lambda  (3 \lambda -4)-4)-32) e^{\frac{1}{3} (3 \lambda +4) (t_1+2 t_2)}}{(3 \lambda -8) (3 \lambda -4) (3 \lambda +4)} \right. \\
    & \left. \qquad -\frac{(9 \lambda +4) e^{\frac{4}{3} (t_1+2 t_2)+2 \lambda  t_2}}{9 \lambda +8}-\frac{3 \lambda  e^{2 \lambda  t_2+\frac{8 t_2}{3}}}{3 \lambda +4}\right); \\[2em]
    p_{xxy} &= \frac{12 (\lambda  (9 \lambda +10)+2)}{(3 \lambda +4) (9 \lambda +4) (9 \lambda +8)}+\frac{3}{16} e^{-\frac{1}{3} (3 \lambda +4) (3 t_1+2 t_2)} \left(-\frac{6 (\lambda -2) e^{(3 \lambda +4) t_1}}{3 \lambda +4}+\frac{18 \lambda  (\lambda -2) e^{3 \lambda  t_1+\frac{20 t_1}{3}}}{(3 \lambda +4)^2} \right. \\
    &  \left. \qquad +e^{2 (\lambda +2) t_1}-\frac{3 \lambda  e^{2 \lambda  t_1+\frac{20 t_1}{3}}}{3 \lambda +4}+\frac{(3 \lambda -16) e^{2 \lambda  t_1+\frac{8 t_1}{3}}}{3 \lambda +4}-\frac{3 \lambda  (3 \lambda -16) e^{\frac{2}{3} (3 \lambda +8) t_1}}{(3 \lambda +4)^2} \right.\\
     & \left. \qquad +\frac{2 \left(3 \lambda  \left(9 \lambda ^2-30 \lambda -8\right)-64\right) e^{\frac{1}{3} (3 \lambda +4) (t_1+2 t_2)}}{(3 \lambda -8) (3 \lambda -4) (3 \lambda +4)}-\frac{3 \lambda  \left(3 \lambda  \left(81 \lambda ^2+456 \lambda +80\right)-256\right) e^{\frac{8 (t_1+t_2)}{3}+2 \lambda  t_2}}{(3 \lambda -4) (3 \lambda +4)^2 (9 \lambda +4)} \right.\\
     & \left. \qquad -\frac{18 \lambda  (\lambda -2) e^{(\lambda +4) t_1+2 \lambda  t_2+\frac{8 t_2}{3}}}{(3 \lambda +4)^2}+\frac{2 (5 \lambda  (3 \lambda -2)-16) e^{4 t_1+2 \lambda  t_2+\frac{8 t_2}{3}}}{(3 \lambda -8) (3 \lambda +4)}-\frac{4 (3 \lambda +2) e^{\frac{4}{3} (t_1+2 t_2)+2 \lambda  t_2}}{9 \lambda +8} \right.\\
     &\left. \qquad +\frac{3 \lambda  e^{2 \lambda  t_2+\frac{8 t_2}{3}}}{3 \lambda +4}\right);\\[2em]
     p_{xyx} &= p_{xyy} 
     = \frac{12 (\lambda  (9 \lambda +10)+2)}{(3 \lambda +4) (9 \lambda +4) (9 \lambda +8)}+\frac{3}{32} e^{-\frac{1}{3} (3 \lambda +4) (3 t_1+2 t_2)} \left(\frac{6 (9 \lambda -2) \lambda  e^{3 \lambda  t_1+\frac{20 t_1}{3}}}{(3 \lambda +4)^2}+\frac{9 \lambda  e^{2 \lambda  t_1+\frac{20 t_1}{3}}}{3 \lambda +4} \right.\\
     &\left. \qquad -\frac{3 (15 \lambda -8) \lambda  e^{\frac{2}{3} (3 \lambda +8) t_1}}{(3 \lambda +4)^2}-3 e^{2 (\lambda +2) t_1}+\frac{(15 \lambda -8) e^{2 \lambda  t_1+\frac{8 t_1}{3}}}{3 \lambda +4}-\frac{2 (9 \lambda -2) e^{(3 \lambda +4) t_1}}{3 \lambda +4} \right.\\
     &\left. \qquad -\frac{3 \left(3 \lambda  \left(81 \lambda ^2+768 \lambda +400\right)-128\right) \lambda  e^{\frac{8 (t_1+t_2)}{3}+2 \lambda  t_2}}{(3 \lambda -4) (3 \lambda +4)^2 (9 \lambda +4)}-\frac{6 (9 \lambda -2) \lambda  e^{(\lambda +4) t_1+2 \lambda  t_2+\frac{8 t_2}{3}}}{(3 \lambda +4)^2} \right. \\
     &\left. \qquad -\frac{6 \lambda  e^{\frac{4}{3} (t_1+2 t_2)+2 \lambda  t_2}}{9 \lambda +8}+\frac{6 (\lambda  (3 \lambda  (9 \lambda -22)-104)+64) e^{\frac{1}{3} (3 \lambda +4) (t_1+2 t_2)}}{(3 \lambda -8) (3 \lambda -4) (3 \lambda +4)}-\frac{4 (\lambda  (3 \lambda -29)-8) e^{4 t_1+2 \lambda  t_2+\frac{8 t_2}{3}}}{(3 \lambda -8) (3 \lambda +4)} \right. \\
     &\left. \qquad +\frac{15 \lambda  e^{2 \lambda  t_2+\frac{8 t_2}{3}}}{3 \lambda +4}\right);\\[2em]
     & \\
     p_{xyz} &= \frac{24 (3 \lambda +2)}{(3 \lambda +4) (9 \lambda +4) (9 \lambda +8)}+\frac{3}{4} e^{-\frac{1}{3} (3 \lambda +4) (3 t_1+2 t_2)} \left(-\frac{3 \lambda  e^{2 \lambda  t_1+\frac{20 t_1}{3}}}{3 \lambda +4}+\frac{24 \lambda  e^{3 \lambda  t_1+\frac{20 t_1}{3}}}{(3 \lambda +4)^2} \right. \\
     &\left. \qquad +\frac{3 (3 \lambda -4) \lambda  e^{\frac{2}{3} (3 \lambda +8) t_1}}{(3 \lambda +4)^2}+e^{2 (\lambda +2) t_1}+\frac{(4-3 \lambda ) e^{2 \lambda  t_1+\frac{8 t_1}{3}}}{3 \lambda +4}-\frac{8 e^{(3 \lambda +4) t_1}}{3 \lambda +4} \right. \\
     &\left. \qquad +\frac{3 (3 \lambda  (3 \lambda  (9 \lambda +68)+80)-64) \lambda  e^{\frac{8 (t_1+t_2)}{3}+2 \lambda  t_2}}{(3 \lambda -4) (3 \lambda +4)^2 (9 \lambda +4)}-\frac{3 (3 \lambda +16) \lambda  e^{4 t_1+2 \lambda  t_2+\frac{8 t_2}{3}}}{(3 \lambda -8) (3 \lambda +4)} \right. \\
     &\left. \qquad -\frac{24 \lambda  e^{(\lambda +4) t_1+2 \lambda  t_2+\frac{8 t_2}{3}}}{(3 \lambda +4)^2}+\frac{8 (3 \lambda -2) (3 \lambda +8) e^{\frac{1}{3} (3 \lambda +4) (t_1+2 t_2)}}{(3 \lambda -8) (3 \lambda -4) (3 \lambda +4)}+\frac{(9 \lambda +4) e^{\frac{4}{3} (t_1+2 t_2)+2 \lambda  t_2}}{9 \lambda +8}-\frac{3 \lambda  e^{2 \lambda  t_2+\frac{8 t_2}{3}}}{3 \lambda +4}\right).\\
\end{align*}
\end{prop}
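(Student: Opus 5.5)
The statement is a closed-form computation, so my plan is to assemble $p_{ijk}$ exactly as in the displayed formula for $p_{xxx}$. I would sum, over the fourteen gene histories of Appendix~\ref{app:classification}, the conditional JC69 pattern probabilities weighted by the densities of Proposition~\ref{prop:gene-tree-densities}. The point mass $e^{-\lambda(t_1+2t_2)}$ for $G_{1x}$ is added separately.

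\emph{Conditional probabilities.} For each history I take the four-state polynomial expressions for $\mathbb{P}(\,\cdot \mid G_{1x},\boldsymbol{g})$ in $d_0,d_1,b_0,b_1$. I note that the definition of $b_0$ should read $\tfrac14+\tfrac34 e^{-\mu(g_2-g_1)}$. Histories with topology $T_2$ or $T_3$ are obtained by permuting leaf labels, which exchanges $xxy$ with $xyx$ or $yxx$ and fixes $xxx$ and $xyz$. I then substitute the appropriate pair $(g_1,g_2)$:
\begin{itemize}
\item $(h_1,t_2)$ for $G_{1a},G_{1b},G_{2b},G_{3b}$;
\item $(h_1,t_1)$ for $G_{2a},G_{3a}$;
\item $(h_1,h_2)$ for the $c$-histories, where $h_2$ is the absolute time;
\item $(h_1,t_1+h_2)$ for the $d$-histories, with $h_2$ measured from $t_1$ as in Proposition~\ref{prop:gene-tree-densities};
\item $(t_1,t_1+h_1)$ for $G_{1y}$.
\end{itemize}
The $c$- and $d$-histories need care about whether $h_2$ is absolute or an increment, since the paper's integration limits mix the two conventions.

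\emph{Integration.} With $\mu=4/3$, every conditional probability expands into a linear combination of exponentials $e^{-\alpha g_1-\beta g_2}$ with $\alpha,\beta\in\{0,\tfrac43,\tfrac83,4\}$. Every density is also an exponential in $h_1,h_2$. Each integral is therefore an elementary iterated integral of an exponential over a rectangle or triangle.

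For the triangle in the $c$-case I integrate $h_2$ first, then $h_1$. This produces the denominators $3\lambda+4$, $9\lambda+4$, $9\lambda+8$ and $3\lambda-4$, $3\lambda-8$. The last two arise when an exponent coefficient such as $3\lambda-\tfrac{4}{3}\cdot 3$ vanishes, so the formula holds away from $\lambda\in\{4/3,8/3\}$ and extends by continuity there. Summing all contributions and factoring out $e^{-\frac13(3\lambda+4)(3t_1+2t_2)}$ should reproduce the stated expressions.

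\emph{Checks.} I would verify that
\begin{itemize}
\item $p_{xxx}+p_{xxy}+2p_{xyx}+p_{xyz}=1$, as an independent check;
\item setting $\lambda=0$ recovers the pure-tree JC69 probabilities;
\item marginalising patterns reproduces Proposition~\ref{prop:gene-tree-probs-3taxa-main}.
\end{itemize}

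\emph{Main obstacle.} The difficulty is not conceptual but bookkeeping: getting each history's $(g_1,g_2)$ and leaf permutation right, and simplifying the very large sum into the compact stated form. I would do this in \texttt{Mathematica}, grouping terms by exponential in $t_1,t_2$ before simplifying the rational coefficients in $\lambda$. I would confirm the final identity symbolically by checking that the difference from the stated formulas simplifies to $0$, supplemented by random numerical evaluations.
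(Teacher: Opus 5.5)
Your route is the paper's own: its proof is just the \texttt{Mathematica} evaluation of this sum over the fourteen histories (Appendix~\ref{app:site-pattern-probabilities}). Your reading of the node times is also the consistent one: the absolute second-transfer time for $G_{ic}$, $t_1+h_2$ for $G_{id}$, and $t_1+h_1$ for $G_{1y}$. But precisely because of that, your final step fails: the difference from the displayed formulas will not simplify to $0$. The displayed expressions instead encode the increments themselves used as node times, i.e.\ $g_2=h_2\in(0,t_1-h_1)$ for $G_{ic}$, $g_2=h_2\in(0,t_2-t_1)$ for $G_{id}$, and $g_2=h_1\in(0,t_2-t_1)$ for $G_{1y}$.

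This is cheap to confirm. With $\mu=4/3$ every conditional probability is a combination of $1$, $e^{-8g_1/3}$, $e^{-8g_2/3}$ and $e^{-4g_1/3-8g_2/3}$. For example, $\mathbb{P}(xyz\mid g_1,g_2)=\tfrac38-\tfrac38e^{-8g_1/3}-\tfrac34e^{-8g_2/3}+\tfrac34e^{-4g_1/3-8g_2/3}$.
\begin{itemize}
\item \emph{The constant term.} As $t_1\to\infty$ only $G_{1c},G_{2c},G_{3c}$ survive, with $g_1\sim\mathrm{Exp}(3\lambda)$ and $g_2-g_1\sim\mathrm{Exp}(2\lambda)$ independent. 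This gives $p_{xyz}\to 24(3\lambda+2)/\bigl((3\lambda+4)^2(9\lambda+8)\bigr)$. The stated constant $24(3\lambda+2)/\bigl((3\lambda+4)(9\lambda+4)(9\lambda+8)\bigr)$ is instead the value when $g_2\sim\mathrm{Exp}(2\lambda)$ is independent of $g_1$. At $\lambda=1$ these are $120/833$ versus $120/1547$, and the constant of $p_{xxx}$ behaves the same way.
\item \emph{A forbidden exponential.} After multiplying by the prefactor, the bracket term $24\lambda e^{3\lambda t_1+20t_1/3}/(3\lambda+4)^2$ of $p_{xyz}$ becomes $\tfrac{18\lambda}{(3\lambda+4)^2}e^{-2\lambda t_2-\frac83(t_2-t_1)}$. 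Its $\lambda$-free exponent has a positive $t_1$-coefficient, which cannot occur when all node times are absolute. Yet its coefficient is reproduced exactly by the $d$-histories with $g_2=h_2$.
\item \emph{The denominators.} The factors $3\lambda-4$, $3\lambda-8$ and $9\lambda+4$, which you attribute to vanishing exponent coefficients, are fingerprints of the same substitution. With $g_2=h_1+h_2$ every exponent has the form $-(c\lambda+m)$ with $c>0$, $m\ge 0$. So only $\lambda$, $3\lambda+4$, $3\lambda+8$ and $9\lambda+8$ can appear, and your ``extend by continuity'' remark explains an artifact.
\end{itemize}

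So your plan, executed as written with the densities of Proposition~\ref{prop:gene-tree-densities}, yields different closed forms and cannot establish the proposition as displayed. Matching the display would require adopting the inconsistent substitution, which does not compute the model's site-pattern probabilities.

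None of your three sanity checks would reveal this. Summing to one and agreeing with the topology probabilities test only the history weights, and at $\lambda=0$ only $G_{1x}$ survives. The $t_1\to\infty$ limit, or the sign of the $\lambda$-free $t_1$-coefficients, is what discriminates.

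There is also a smaller bookkeeping trap. Proposition~\ref{prop:gene-tree-densities} gives $G_{2a},G_{3a}$ the same density as $G_{1a}$, but these histories close at $t_1$. The survival factor is therefore $e^{-2\lambda(t_1-h_1)}$ and the density is $\tfrac12\lambda e^{-\lambda(2t_1+h_1)}$, as used in Appendix~\ref{app:gene-probs}. Weighting literally by the stated density breaks normalisation.
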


\begin{proof}
    We provide the \texttt{Mathematica}~\cite{mathematica} file containing all calculations as a supplementary file to this manuscript.
\end{proof}

\subsection{Additional simulation results}\label{appendix:simulation}
Fig. \ref{fig:sim1-results-0.8} shows the results of the first simulation study in the case that $\lambda = 0.8$.

\begin{figure}[htbp]
    \centering
    \includegraphics[scale=0.6]{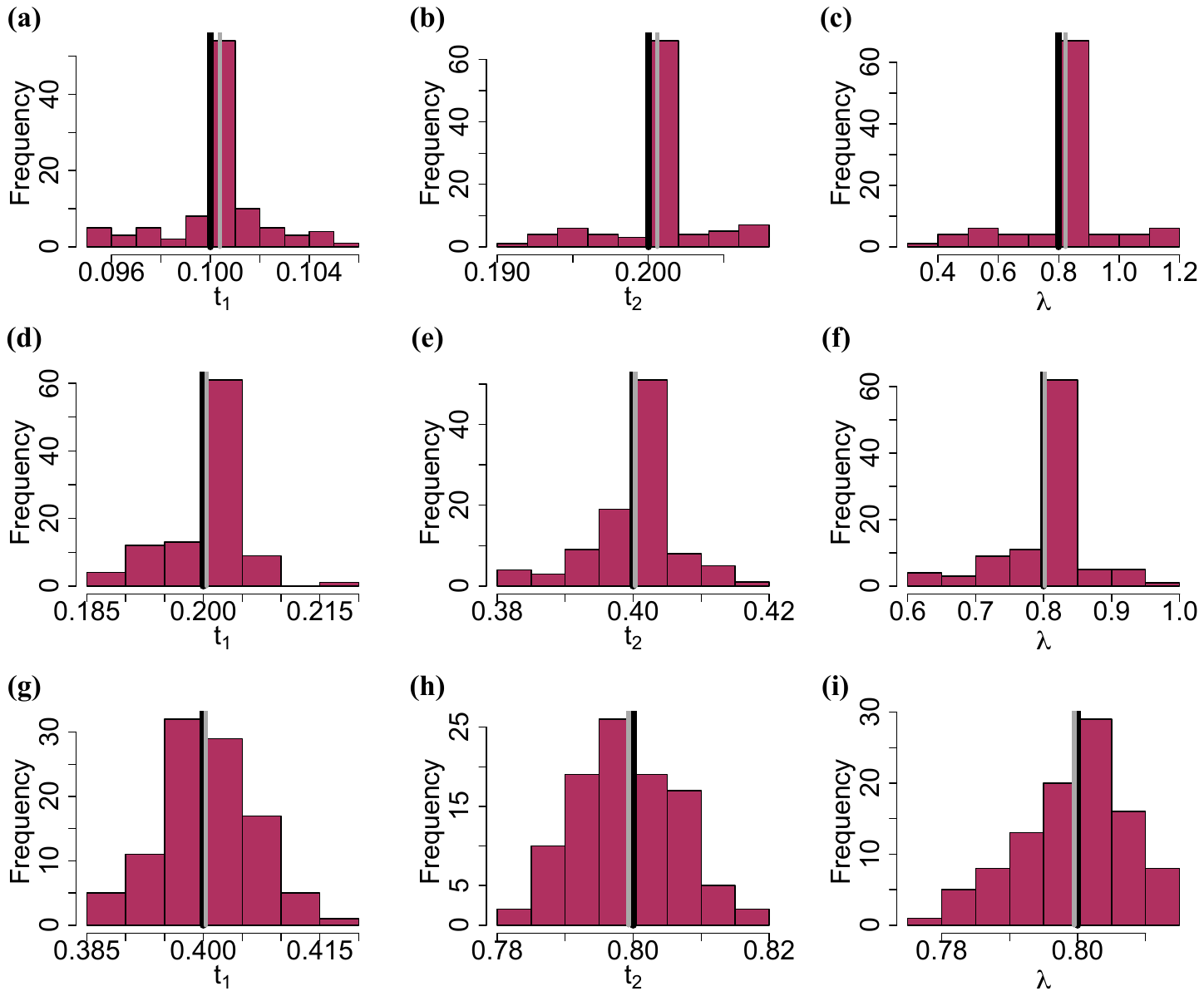}
    \caption{Results of the first simulation study when $\lambda=0.8$.  (a), (b), (c) The first row shows histograms of the 100 MLEs for parameters $t_1$, $t_2$, and $\lambda$, respectively, when the true values are $t_1=0.1$ and $t_2=0.2$. (d), (e), (f)  The second row shows histograms of the 100 MLEs for parameters $t_1$, $t_2$, and $\lambda$, respectively, when the true values are $t_1=0.2$ and $t_2=0.4$. (g), (h), (i) The third row shows histograms of the 100 MLEs for parameters $t_1$, $t_2$, and $\lambda$, respectively, when the true values are $t_1=0.4$ and $t_2=0.8$. In all plots, the black vertical line denotes the true parameter value and the gray vertical line gives the average estimate over the 100 replicates. }
    \label{fig:sim1-results-0.8}
\end{figure}

\end{document}